\renewcommand\footnotetextcopyrightpermission[1]{} 
	\newcommand{\leqnos}{\tagsleft@true\let\veqno\@@leqno}
	\newcommand{\reqnos}{\tagsleft@false\let\veqno\@@eqno}
\tiny\color{gray},
\definecolor{dkgreen}{rgb}{0,0.6,0}
\definecolor{gray}{rgb}{0.5,0.5,0.5}
\definecolor{mauve}{rgb}{0.58,0,0.82}
\definecolor{hilight}{RGB}{122,86,0}
\newcommand{\lintype}{\tyR}
\newcommand{\exptype}[1]{\oc #1}
\newcommand{\gentype}[1]{(\oc) #1} 
\def\ie{{\it i.e.}}
\def\eg{{\it e.g.}}
\def\cF{\mathcal F}
\newcommand{\Der}{\mathbf D}
\newcommand{\opFwd}{\overrightarrow{\mathbf D}}
\newcommand{\Fwd}[2][]{\opFwd_{#1}(#2)}
\newcommand{\Derr}{\mathbf D_{\mathrm r}}
\newcommand{\opRev}{\overleftarrow{\mathbf D}}
\newcommand{\RevDer}[1]{\opRev(#1)}
\newcommand{\Terms}[1][]{\Lambda(#1)}
\newcommand{\DualTerms}[1][]{\Lambda_\bot(#1)}
\newcommand{\Nat}{\mathbb N}
\newcommand{\Real}{\mathbb R}
\newcommand{\gsep}{\mathrel{|}}
\newcommand{\pair}[1]{\left\langle #1\right\rangle}
\newcommand{\unpair}[4]{#1\esub{\pair{#2,#3}}{#4}}
\newcommand{\isub}[2]{\!\left\{#1/#2\right\}}
\newcommand{\esub}[2]{[#1:=#2]}
\newcommand{\size}[1]{\left|#1\right|}
\newcommand{\bisize}[1]{\left\|#1\right\|}
\newcommand{\metaCtxt}{\mathsf C}
\newcommand{\metaCtxtp}{\mathsf D}
\newcommand{\metaSubCtxt}{\alpha}
\newcommand{\metaSubCtxtp}{\beta}
\newcommand{\ctxt}[2][\metaCtxt]{#1\!\left\{#2\right\}\!}
\newcommand{\subctxt}[2][\metaSubCtxt]{#2 #1}
\newcommand{\red}[1][]{\xrightarrow{#1}}
\newcommand{\reds}[2][\ast]{\mathrel{\xrightarrow{#2}\!\!{}^{#1}}}
\newcommand{\fred}{\red}
\newcommand{\freds}[1][\ast]{\reds[#1]{}}
\newcommand{\fv}[1]{\mathsf{fv}(#1)}
\newcommand{\tyR}{\mathsf R}
\newcommand{\revder}[2][]{\overleftarrow{\mathbf D}_{#1}(#2)}
\newcommand{\revderSymbol}[1][]{\overleftarrow{\mathbf D}_{#1}}
\newcommand{\cat}[1]{\mathbf{#1}}
\newcommand{\sem}[1]{\left\llbracket #1\right\rrbracket}
\newcommand{\seq}[1]{\mathbf{#1}}
\newcommand{\num}[1]{\underline{#1}}
\newcommand{\dual}[1][]{\tyR^{\bot_{#1}}}
\newcommand{\bp}[2][\seq x,a]{\mathbf{bp}_{#1}(#2)}
\newcommand{\tyList}[2]{\mathsf{List}(#1,#2)}
\newcommand{\dualvar}{^\ast}
\newcommand{\Mult}[2]{#1\ensuremath{\cdot}#2} 
\newcommand{\reflemma}[1]{Lemma~\ref{lemma:#1}}
\newcommand{\refcor}[1]{Corollary~\ref{cor:#1}}
\newcommand{\reffig}[1]{Fig.~\ref{fig:#1}}
\newcommand{\refprop}[1]{Proposition~\ref{prop:#1}}
\newcommand{\refsect}[1]{Sect.~\ref{sect:#1}}
\newcommand{\refsubsect}[1]{Sect.~\ref{subsect:#1}}
\newcommand{\reftab}[1]{Tab.~\ref{table:#1}}
\definecolor{blue(pigment)}{rgb}{0.2, 0.2, 0.6}
\definecolor{britishracinggreen}{rgb}{0.0, 0.26, 0.15}
\definecolor{burgundy}{rgb}{0.5, 0.0, 0.13}
\newcommand{\cval}[1]{\textcolor{blue(pigment)}{#1}}
\newcommand{\cder}[1]{\textcolor{britishracinggreen}{#1}}
\newcommand{\crder}[1]{\textcolor{burgundy}{#1}}
\def\scalefact{0.85}
\newcommand{\znode}[5][black]{\path (#3,#4) node(#2) [circle,draw,color=#1] {#5};}
\newcommand{\zunedge}[6][black]{%
\begin{scope}
	\path (#2,#3) node(this) [inner sep=0pt,triangle,draw,color=#1] {#4};
	\draw[->,color=#1] (#5) -- (this.west);
	\draw[->,color=#1] (this.east) -- (#6);
\end{scope}}
\newcommand{\zbiedge}[7][black]{%
\begin{scope}
	\path (#2,#3) node(this) [inner sep=0pt,triangle,draw,color=#1] {#4};
	\draw[->,color=#1] (#5) -- (this);
	\draw[->,color=#1] (#6) -- (this);
	\draw[->,color=#1] (this.east) -- (#7);
\end{scope}}
\newcommand{\zedge}[5][black]{\path (#3,#4) node(#2) [inner sep=0pt,triangle,draw,color=#1] {#5};}
\begin{document}

\title[Backprop in Lambda-Calculus]{Backpropagation in the Simply Typed Lambda-Calculus with Linear Negation}         



\author{Alo\"\i s  Brunel}
\affiliation{
  \institution{Deepomatic}            
  \country{France}                    
}
\email{alois.brunel@gmail.com}          

\author{Damiano Mazza}
\affiliation{
  \institution{CNRS, UMR 7030, LIPN, Universit\'e Sorbonne Paris Nord}            
  \country{France}                    
}
\email{Damiano.Mazza@lipn.univ-paris13.fr}          

\author{Michele Pagani}
\affiliation{
  \institution{IRIF UMR 8243, Universit\'e de Paris, CNRS}            
  \country{France}                    
}
\email{pagani@irif.fr}          

\begin{abstract}

Backpropagation is a classic automatic differentiation algorithm computing the gradient of functions specified by a certain class of simple, first-order programs, called computational graphs.  It is a fundamental tool in several fields, most notably machine learning, where it is the key for efficiently training (deep) neural networks.  Recent years have witnessed the quick growth of a research field called differentiable programming, the aim of which is to express computational graphs more synthetically and modularly by resorting to actual programming languages endowed with control flow operators and higher-order combinators, such as map and fold.  In this paper, we extend the backpropagation algorithm to a paradigmatic example of such a programming language: we define a compositional program transformation from the simply-typed lambda-calculus to itself augmented with a notion of linear negation, and prove that this computes the gradient of the source program with the same efficiency as first-order backpropagation. The transformation is completely effect-free and thus provides a purely logical understanding of the dynamics of backpropagation.



\end{abstract}

\begin{CCSXML}
<ccs2012>
<concept>
<concept_id>10003752.10010124.10010131</concept_id>
<concept_desc>Theory of computation~Program semantics</concept_desc>
<concept_significance>500</concept_significance>
</concept>
<concept>
<concept_id>10003752.10010070</concept_id>
<concept_desc>Theory of computation~Theory and algorithms for application domains</concept_desc>
<concept_significance>300</concept_significance>
</concept>
</ccs2012>
\end{CCSXML}

\ccsdesc[500]{Theory of computation~Program semantics}
\ccsdesc[300]{Theory of computation~Theory and algorithms for application domains}

\keywords{Differentiable Programming, Lambda-calculus, Linear Logic}  

\maketitle

\section{Introduction}\label{sect:intro}

In the past decade there has been a surge of interest in so-called \emph{deep learning}, a class of machine learning methods based on multi-layer neural networks. The term ``deep'' has no formal meaning, it is essentially a synonym of ``multi-layer'', which refers to the fact that the networks have, together with their input and output layers, at least one internal (or ``hidden'') layer of artificial neurons. These are the basic building blocks of neural networks: technically, they are just functions $\Real^m\to\Real$ of the form $(x_1,\ldots,x_m)\to \sigma\!\left(\sum_{i=1}^m w_i\cdot x_i\right)$, where $\sigma:\Real\to\Real$ is some \emph{activation function} and $w_1,\ldots,w_m\in\Real$ are the \emph{weights} of the neuron. A layer consists of several unconnected artificial neurons, in parallel. The simplest way of arranging layers is to connect them in cascade, every output of each layer feeding into every neuron of the next layer, forming a so-called \emph{feed-forward architecture}. Feed-forward, multi-layer neural networks are known to be universal approximators: any continuous function $f:K\to\Real$ with $K\subseteq\Real^k$ compact may be approximated to an arbitrary degree of precision by a feed-forward neural network with one hidden layer, as long as the weights are set properly~\cite{Cybenko,Hornik}. This leads us to the question of \emph{how to efficiently train a neural network}, \ie, how to find the right weights as quickly as possible.

Albeit in theory the simplest architecture suffices for all purposes, in practice more complex architectures may perform better, and different architectures may be better suited for different purposes. We thus get to another basic question of deep learning: \emph{how to select and, if necessary, modify or design network architectures adapted to a given task}. Since the quality of an architecture is also judged in terms of training efficiency, this problem is actually interlaced with the previous one.

The first question is generally answered in terms of the \emph{gradient descent} algorithm (or some variant thereof). This finds local minima of a function $G:\Real^n\to\Real$ using its gradient $\nabla G$, \ie, the vector of the partial derivatives of $G$ (Equation~\ref{eq:Grad}). The algorithm starts by choosing a point $\seq w_0\in\Real^n$. Under certain assumptions, if $\nabla G(\seq w_0)$ is close to zero then $\seq w_0$ is within a sufficiently small neighborhood of a local minimum. Otherwise, we know that $G$ decreases most sharply in the opposite direction of $\nabla G(\seq w_0)$, and so the algorithm sets $\seq w_1:=\seq w_0 - \rho\nabla G(\seq w_0)$ for a suitable step rate $\rho>0$, and repeats the procedure from $\seq w_1$. In the case of deep learning, a neural network (with one output) induces a function $h(\seq w,\seq x):\Real^{n+k}\to\Real$, where $k$ is the number of inputs and $n$ the number of weights of all the neurons in the network. By fixing $\seq w\in\Real^n$ and making $\seq x$ vary over a set of \emph{training inputs}, we may measure how much $h$ differs from the target function $f:\Real^k\to\Real$ when the weights are set to $\seq w$. This defines an error function $G:\Real^n\to\Real$, the minimization of which gives a way of finding the desired value for the weights. The training process thus becomes the iteration, over and over, of a gradient computation, starting from some initial set of weights $\seq w_0$. 

\begin{figure}[t!]
\scalebox{\scalefact}{%
\begin{tikzpicture}[
	triangle/.style = {regular polygon, regular polygon sides=3, shape border rotate=270}
]
	\znode{x1}{0}{1.75}{$x_1$}
	\znode{x2}{0}{0.25}{$x_2$}
	\znode{z1}{2}{1}{$z_1$}
	\znode{z2}{4}{1}{$z_2$}
	\znode{y}{6.25}{1}{$y$}
	\zbiedge{0.85}{1}{$\,-\,$}{x1}{x2}{z1}
	\zedge{mul}{3}{1}{$\ \cdot\ $}
	\draw[->] (z1) to[bend left] (mul);
	\draw[->] (z1) to[bend right] (mul);
	\draw[->] (mul) to (z2);
	\zunedge{5}{1}{$\sin$}{z2}{y}
\end{tikzpicture}}
\begin{center}
	\lstinline|let z$_1$ = x$_1$ - x$_2$ in let z$_2$ = z$_1$ $\Mult{}{}$ z$_1$ in $\sin$ z$_2$|
\end{center}

\caption{A computational graph with inputs $x_1,x_2$ and output $y$, and its corresponding term. Nodes are drawn as circles, hyperedges as triangles. The output $y$ does not appear in the term: it corresponds to its root.}
\label{fig:CompGraph}
\end{figure}

So, regardless of the architecture, efficiently training a neural network involves efficiently computing gradients. The interest of gradient descent, however, goes well beyond deep learning, into fields such as physical modeling and engineering design optimization, each with numerous applications. It is thus no wonder that a whole research field, known as \emph{automatic differentiation} (AD for short), developed around the computation of gradients and, more generally, Jacobians\footnote{The generalization of the gradient to the case of functions $\Real^n\to\Real^m$ with $m>1$.}~\cite{ADSurvey}. In AD, the setting of neural networks is generalized to \emph{computational graphs}, which are arbitrarily complex compositions of nodes computing basic functions and in which the output of a node may be shared as the input of an unbounded number of nodes. \reffig{CompGraph} gives a pictorial representation of a simple computational graph made of only three basic functions (subtraction, multiplication and sine), with inputs $x_1$ and $x_2$. Notice that the computation of $x_1-x_2$ is shared by the two factors of the multiplication. Neural networks are special cases of computational graphs.

The key idea of AD is to compute the gradient of a computational graph by accumulating in a suitable way the partial derivatives of the basic functions composing the graph. This rests on the mathematical principle known as \emph{chain rule}, giving the derivative of a composition $f\circ g$ from the derivatives of its components $f$ and $g$ (Equation~\ref{eq:ChainRule}). There are two main ``modes'' of applying this rule in AD, either \emph{forward}, propagating derivatives from inputs to outputs, or \emph{backwards}, propagating derivatives from outputs to inputs. As will be explained in \refsect{tutorial}, if $G$ is a computational graph with $n$ inputs and $m$ outputs invoking $\size G$ operations (\ie, nodes), forward mode computes the Jacobian of $G$ in $O(n\size G)$ operations, while reverse mode requires $O(m\size G)$ operations. In deep learning, as the number of layers increases, $n$ becomes astronomical (millions, or even billions) while $m=1$, hence the reverse mode is the method of choice and specializes in what is called the \emph{backpropagation algorithm} (\refsect{Backprop}), which has been a staple of deep learning for decades~\cite{LeCunBDHHHJ}. Today, AD techniques are routinely used in the industry through deep learning frameworks such as TensorFlow \cite{DBLP:conf/osdi/AbadiBCCDDDGIIK16} and PyTorch \cite{paszke2017automatic}.

The interest of the programming languages (PL) community in AD stems from the second deep learning question mentioned above, namely the design and manipulation of (complex) neural network architectures. As it turns out, these are being expressed more and more commonly in terms of actual programs, with branching, recursive calls or even higher-order primitives, like list combinators such as {\tt map} or {\tt fold}, to the point of yielding what some call a generalization of deep learning, branded \emph{differentiable programming}~\cite{LeCun:diff}. Although these programs always reduce, in the end, to computational graphs, these latter are inherently static and therefore inadequate to properly describe something which is, in reality, a dynamically-generated neural network. Similarly, traditional AD falls short of providing a fully general foundation for differentiable programming because, in order to compute the gradient of an \emph{a priori} arbitrary (higher order) program,  it forces us to reduce it to a computational graph first. In PL-theoretic terms, this amounts to fixing a reduction strategy, which cannot always be optimal in terms of efficiency. There is also a question of modularity: if gradients may only be computed by running programs, then we are implicitly rejecting the possibility of computing gradients modularly, because a minor change in the code might result in having to recompute everything from scratch, which is clearly unsatisfactory.

This paper is a contribution to the theoretical foundations of differentiable programming. We define a compositional program transformation $\revderSymbol$ (Table \ref{table:reverse})  extending the backpropagation algorithm from computational graphs to general simply typed $\lambda$-terms. Our framework is purely logical and therefore offers the possibility of importing tools from semantics, type systems and rewriting theory. The benefit is at least twofold, in the form of
\begin{enumerate}
	\item a soundness proof (Theorem~\ref{th:main theorem} and Corollary~\ref{cor:gradient}), which relies on the logical/compositional definition of $\revderSymbol$ and the semantics;
	\item a complexity analysis, which hinges on rewriting in the form of the \emph{linear substitution calculus}~\cite{Accattoli:LSC} and which guarantees that generalized backpropagation is at least as efficient as the standard algorithm on computational graphs.
\end{enumerate}
Although the soundness proof is based on a fixed strategy (reducing to a computational graph first and then computing the gradient), the confluence of our calculus guarantees us that the gradient may be computed using any other reduction strategy, thus allowing arbitrary evaluation mechanisms for executing backpropagation in a purely functional setting, without necessarily passing through computational graphs. \refsect{example} discusses the benefits of this in terms of efficiency.

On a similar note, compositionality ensures modularity: for instance, if $p=tu$, \ie, program $p$ is composed of subprograms $t$ and $u$, then $\revder p=\revder t\revder u$ and $\revder{t}$ and $\revder{u}$ may be computed independently, so that if $p$ is modified into $tu'$, the computation of $\revder{t}$ may be reused. Modularity also opens the way to parallelization, another potential avenue to efficiency.

Finally, let us stress that the transformation $\revderSymbol$ is remarkably simple: on ``pure'' $\lambda$-terms, \ie, not containing  any function symbol corresponding to the basic nodes of computational graphs, $\revderSymbol$ is the identity, modulo a change in the types. In particular, $\revderSymbol$ maps a datatype constructor/destructor ({\tt cons}, {\tt head}, {\tt tail}, etc.) or a typical higher order combinator ({\tt map}, {\tt fold}, etc.) to itself, which makes the transformation particularly easy to compute and analyze. We see this as a theoretical counterpart to the use of \emph{operator overloading} in implementations of AD.

From a more abstract perspective, all these properties (including compositionality) may be succinctly summarized in the fact that $\revderSymbol$ is a \emph{cartesian closed 2-functor} or, better, a \emph{morphism of cartesian 2-multicategories}, obtained by \emph{freely lifting to $\lambda$-terms a morphism defined on computational graphs}. However, such a viewpoint will not be developed in this paper.

\newcommand\revPurdue{\overleftarrow{\mathcal D}} 
\paragraph{Related work.} 
Our main source of inspiration is~\cite{Purdue}, where a program transformation $\revPurdue$ is proposed as a compositional extension of symbolic backpropagation to functional programs. We summarize their approach in \refsect{us}, let us concentrate on the main differences here:
\begin{enumerate}
	\item the transformation $\revPurdue$ uses references and delimited continuations, while our transformation $\revderSymbol$ is purely functional and only relies on a linear negation primitive on the ground type. Albeit \cite{Purdue} do mention that a purely functional version of their transformation may be obtained by encoding the memory inside the type of the continuation, this encoding adds a sequentialization (introduced by the order of the memory updates) which is absent in $\revderSymbol$ and which makes our transformation more amenable to parallelization.  
	\item The transformation $\revPurdue$ applies to a Turing-complete programming language, while we focus on the much more restrictive simply-typed $\lambda$-calculus. However, 
	no soundness proof for $\revPurdue$ is given, only testing on examples. This brings to light a difference in the general spirit of the two approaches: the paper \cite{Purdue} is mainly experimental, it comes with a deep learning framework (Lantern) and with a case study showing the relevance of this line of research. Our approach is complementary because it is mainly theoretical: we focus on proving the soundness and complexity bound of a non-trivial differentiable programming framework,
	``non-trivial'' in the sense that it is the simplest exhibiting the difficulty of the task at hand. To the best of our knowledge, such foundational questions have been completely neglected, even for ``toy'' languages, and our paper is the first to address them. 
\end{enumerate}

The earliest work performing AD in a functional setting is~\cite{Pearlmutter:2008}. Their motivations are broader than ours: they want to define a programming language with the ability to perform AD on its own programs. To this end, they endow Scheme with a combinator $\overleftarrow{\mathcal J}$ computing the Jacobian of its argument, and whose execution implements reverse mode AD. In order to do this, $\overleftarrow{\mathcal J}$ must reflectively access, at runtime, the program in which it is contained, and it must also be possible to apply it to itself. While this offers the possibility of computing higher-order derivatives (in the sense of derivative of the derivative, which we do not consider), it lacks a type-theoretic treatment: the combinator $\overleftarrow{\mathcal J}$ is defined in an untyped language. Although \cite{Pearlmutter:2008} do mention, for first-order code, a transformation essentially identical to our $\revderSymbol$ (using so-called \emph{backpropagators}), the observations that backpropagators may be typed linearly and that $\revderSymbol$ may be directly lifted to higher-order code 
(as we do in \reftab{reverse}) are original to our work. 
Finally, let us mention that in this case as well the correctness of $\overleftarrow{\mathcal J}$ is not illustrated by a mathematical proof of soundness, but by testing an implementation of it (Stalin$\nabla$).

At a more theoretical level, \cite{Elliott} gives a Haskell implementation of backpropagation extracted from a functor $\mathcal D$ over cartesian categories, with the benefit of  disclosing the compositional nature of the algorithm. However, Elliot's approach is still restricted to  first-order programs (\ie, computational graphs): as far as we understand, the functor $\mathcal D$ is cartesian but not cartesian closed, so the higher-order primitives ($\lambda$-abstraction and application) lack a satisfactory treatment. This is implicit in Sect.~4.4 of \cite{Elliott}, where the author states that he only uses biproduct categories: it is well-known that non-trivial cartesian closed biproduct categories do not exist.\footnote{If a category $\mathbf C$ has biproducts, then $1\cong 0$ (the terminal object is also initial). If $\mathbf C$ is cartesian closed, then its product $\times$ is a left adjoint and therefore preserves colimits (initial objects in particular), so $A\cong A\times 1\cong A\times 0\cong 0$ for every object $A$ of $\mathbf C$.}

We already mentioned TensorFlow and PyTorch. It is difficult at present to make a fair comparison with such large-scale differentiable programming frameworks since we are still focused on the conceptual level. Nevertheless, in the perspective of a future implementation, our work is interesting because it would offer a way of combining the benefits of hitherto diverging approaches: the ability to generate modular, optimizable code (TensorFlow) with the possibility of using an expressive language for dynamically-generated computational graphs (PyTorch).

\paragraph{Contents of the paper.} \refsect{tutorial} gives a (very subjective) introduction to the notions of AD used in this paper. \refsect{us} informally introduces our main contributions, which will then be detailed in the more technical \refsect{results}. \refsect{prel} formally specifies the programming language we work with, introducing a simply-typed $\lambda$-calculus  and a rewriting reduction (\reftab{reductions}) necessary to define and to evaluate our backpropagation transformation $\revderSymbol$. \refsect{example} applies $\revderSymbol$ to a couple of examples, in particular to a (very simple) recurrent neural network. \refsect{concl} concludes with some perspectives.

\section{A Crash Course in Automatic Differentiation}\label{sect:tutorial}
What follows is an introduction to automatic differentiation for those who are not familiar with the topic. It is extremely partial (the field is too vast to be summarized here) and heavily biased not just towards programming languages but towards the very subject of our work. It will hopefully facilitate the reader in understanding the context and achievements of the paper.

\subsection{What is automatic differentiation?}


Automatic differentiation (or AD) is the science of efficiently computing the derivative of (a restricted class of) programs~\cite{ADSurvey}. Such programs may be represented as directed acyclic hypergraphs, called \emph{computational graphs}, whose nodes are variables of type $\Real$ (the set of real numbers) and whose hyperedges are labelled by functions drawn from some finite set of interest (for example, in the context of neural networks, sum, product and some activation function), with the restriction that hyperedges have exactly one target node and that each node is the target of at most one hyperedge. The basic idea is that nodes that are not target of any hyperedge represent input variables, nodes which are not source of any hyperedge represent outputs, and a hyperedge
$$x_1,\ldots,x_k\stackrel{f}{\longrightarrow}y$$
represents an assignment $y:=f(x_1,\ldots,x_k)$, so that a computational graph with $n$ inputs and $m$ outputs represents a function of type $\Real^n\to\Real^m$. An example is depicted in \reffig{CompGraph}; it represents the function $(x_1,x_2)\mapsto \sin((x_1-x_2)^2)$.

In terms of programming languages, we may define computational graphs as generated by
\begin{lstlisting}
                        $F$, $G$ ::= x |  $f$(x$_1$,$\dots$,x$_k$) | let x = $G$ in $F$ | ($F$,$G$)
\end{lstlisting}
%
\noindent where \lstinline$x$ ranges over ground variables and $f$ over a set of real function symbols. The \lstinline|let| binders are necessary to represent \emph{sharing}, a fundamental feature of hypergraphs: \eg, in \reffig{CompGraph}, node $z_1$ is shared. The set of real function symbols consists of one nullary symbol for every real number plus finitely many non-nullary symbols for actual functions. We may write \lstinline|$f$($G_1$,$\dots$,$G_n$)| as syntactic sugar for \lstinline|let x$_1$ = $G_1$ in$\dots$ let x$_n$ = $G_n$ in $f$(x$_1$,$\dots$,x$_n$)|. Within this introduction (Sect.~\ref{sect:tutorial} and \ref{sect:us}) we adopt the same notation for a function and the symbol associated with it in the language, so for example $r$ may refer to both a real number and its associated numeral. Also, we use an {\tt OCaml}-like syntax in the hope that it will help the unacquainted reader parse the examples. From \refsect{prel} onward we will adopt a more succinct syntax. 

Typing is as expected: types are of the form $\tyR^k$ and every computational graph in context \lstinline{x$_1$:$\tyR$,$\ldots$,x$_n$:$\tyR$ $\vdash G:\tyR^m$} denotes a function $\sem{{G}}:\Real^n\to\Real^m$ (this will be defined formally in \refsubsect{first-order}).
Restricting for simplicity to the one-output case, we may say that, as far as we are concerned, \emph{the central question of AD is computing the gradient of $\sem{{G}}$ at any point $\seq r\in\Real^n$, as efficiently as possible}. We remind that the gradient of a differentiable function $f:\Real^n\to\Real$ is defined to be the function $\nabla f:\Real^n\to\Real^n$ such that, for all $\seq r\in\Real$,
\begin{equation}
	\label{eq:Grad}
	\nabla f(\seq r)=\left(\partial_1 f(\seq r),\ldots,\partial_n f(\seq r)\right),
\end{equation}
where by $\partial_i f$ we denote the partial derivative of $f$ with respect to its $i$-th argument. Of course, the above question makes sense only if $\sem{{G}}$ is differentiable, which is the case if every function symbol represents a differentiable function. In practice this is not always guaranteed (notoriously, modern neural networks use activation functions which are \emph{not} differentiable) but this is not actually an issue, as it will be explained momentarily.

Before delving into the main AD algorithms, let us pause a moment on the question of evaluating a computational graph. In terms of hypergraphs, we are given a computational graph ${{G}}$ with input nodes $x_1,\ldots,x_n$ together with an assignment $x_i:=\cval{r_i}$ with $r_i\in\Real$ for all $1\leq i\leq n$. The value $\sem{{G}}(r_1,\ldots,r_n)$ is found by progressively computing the assignments $w:=\cval{f(s_1,\ldots,s_m)}$ for each hyperedge $z_1,\ldots,z_m\stackrel{f}{\longrightarrow} w$ such that the values of all $z_i$ are already known. This is known as \emph{forward evaluation} and has cost $|{{G}}|$ (the number of nodes of ${{G}}$). 

In terms of programming languages, forward evaluation corresponds to a standard call-by-value strategy, with values being defined as tuples of numbers. For instance, for ${{G}}$ in \reffig{CompGraph}
\begin{center}
\begin{tabular}{c@{ $\red^\ast$ }c@{ $\red^\ast$ }c@{ $\red^\ast$ }c}
	\begin{minipage}{2,7cm}
	\begin{lstlisting}
	let x$_1$ = $5$ in 
	let x$_2$ = $2$ in 
		$G$
	\end{lstlisting}
	\end{minipage}
	&
	\begin{minipage}{3,5cm}
	\begin{lstlisting}
	let x$_1$ = $5$ in 
	let x$_2$ = $2$ in  
	let x$_1$ = $3$ in 
	let z$_2$ = z$_1$ $\Mult{}{}$ z$_1$ in 
		$\sin$ z$_2$
	\end{lstlisting}
	\end{minipage}
	&
	\begin{minipage}{2,8cm}
	\begin{lstlisting}
	let x$_1$ = $5$ in 
	let x$_2$ = $2$ in  
	let x$_1$ = $3$ in 
	let z$_2$ = $9$ in 
		$\sin$ z$_2$	
	\end{lstlisting}
	\end{minipage}
	&
	\begin{minipage}{2.7cm}
	\begin{lstlisting}
	let x$_1$ = $5$ in 
	let x$_2$ = $2$ in  
	let x$_1$ = $3$ in 
	let z$_2$ = $9$ in 
		$0.412$	
	\end{lstlisting}
	\end{minipage}
\end{tabular}
\end{center}
The operational semantics realizing the above computation will be introduced later. For now, let us mention that the value of a closed computational graph ${{G}}$ may be found in $O(\size {{G}})$ reduction steps, where $\size {{G}}$ is the size of ${{G}}$ as a term, consistently with the cost of forward evaluation.

\subsection{Forward mode AD}
\label{sect:Fwd}
The simplest AD algorithm is known as \emph{forward mode} differentiation. Suppose that we are given a computational graph (in the sense of a hypergraph) $G$ with input nodes $x_1,\ldots,x_n$ and one output node $y$, and suppose that we want to compute its $j$-th partial derivative in $\seq r=(r_1,\ldots,r_n)\in\Real^n$. The algorithm maintains a memory consisting of a set of assignments of the form $x:=(\cval s,\cder t)$, where $x$ is a node of $G$ and $\cval s,\cder t\in\Real$ (know as \cval{primal} and \cder{tangent}), and proceeds as follows:
\begin{itemize}
	\item we initialize the memory with $x_i:=(\cval{r_i},\cder 0)$ for all $0\leq i\leq n$, $i\neq j$, and $x_j:=(\cval{r_j},\cder 1)$.
	\item At each step, we look for a node $z_1,\ldots,z_k\stackrel{f}{\longrightarrow}w$ such that $z_i=(\cval{s_i},\cder{t_i})$ is in memory for all $1\leq i\leq k$ (there is at least one by acyclicity) and $w$ is unknown, and we add to memory
	\begin{equation}
		\label{eq:FwdGrad}
		w:=\left(\cval{f(\seq s)}\ ,\ \cder{\sum_{i=1}^k\Mult{\partial_i f(\seq s)}{t_i}}\right)
	\end{equation}
	where we used the abbreviation $\seq s:=s_1,\ldots,s_m$.
\end{itemize}
This procedure terminates in a number of steps equal to $\size G$ and one may show, using the chain rule for derivatives (which we will recall in a moment), that at the end the memory will contain the assignment $y:=(\cval{\sem G\!(\seq r)},\cder{\partial_j\!\sem G\!(\seq r)})$. 

Since the arity $k$ of function symbols is bounded, the cost of computing one partial derivative is $O(\size G)$. Computing the gradient requires computing \emph{all} $n$ partial derivatives, giving of a total cost of $O(n\size G)$, which is not very efficient since $n$ may be huge (typically, it is the number of weights of a deep neural network, which may well be in the millions).

For example, if $G$ is the computational graph of \reffig{CompGraph} and we start with $x_1:=(\cval 5,\cder 1),x_2:=(\cval 2,\cder 0)$, we obtain $z_1:=(\cval{x_1}-\cval{x_2},1\cdot \cder 1-1\cdot\cder 0)=(\cval 3,\cder 1)$, then $z_2:=(\cval{z_1}\cdot \cval{z_1}\ ,\ \cval{z_1}\cdot\cder 1+\cval{z_1}\cdot\cder 1)=(\cval 9,\cder 6)$ and finally $y:=(\sin(\cval{z_2}),\cos(\cval{z_2})\cdot\cder 6)=(\cval{0.412},\cder{-5.467})$, which is what we expect since $\partial_1\sem G(x_1,x_2)=\cos((x_1-x_2)^2)\cdot 2(x_1-x_2)$.

\subsection{Symbolic AD}\label{subsect:symbolic}
The AD algorithm presented above is purely numerical, but there is also a \emph{symbolic} approach. The basic idea of (forward mode) symbolic AD is to generate, starting from a computational graph $G$ with $n$ input nodes and $1$ output node, a computational graph $\Fwd G$ with $2n$ input nodes and $2$ output nodes such that forward evaluation of $\Fwd G$ corresponds to executing forward mode AD on $G$, \ie, for all $\seq r=r_1,\ldots,r_n\in\Real$, the output of $\Fwd G(\cval{r_1},\cder 0,\ldots,\cval{r_j},\cder 1,\ldots,\cval{r_n},\cder 0)$ is $(\cval{\sem G(\seq r)},\cder{\partial_j\sem G(\seq r)})$.

From the programming languages standpoint, symbolic AD is interesting because:
\begin{enumerate}
	\item it allows one to perform optimizations on $\Fwd G$ which would be inaccessible when simply running the algorithm on $G$ (a typical benefit of frameworks like {\tt TensorFlow});
	\item it opens the way to compositionality, with the advantages mentioned in the introduction;
	\item being a (compositional) program transformation rather than an algorithm, it offers a viewpoint from which AD may possibly be extended beyond computational graphs.
\end{enumerate}

Recently, \cite{Elliott} pointed out how symbolic forward mode AD may be understood in terms of compositionality, thus intrinsically addressing point (2) above. The very same fundamental remark is implicitly used also by \cite{Purdue}. Recall that the derivative of a differentiable function $f:\Real\to\Real$ is the (continuous) function $f':\Real\to\Real$ such that, for all $r\in\Real$, the map $a\mapsto f'(r)\cdot a$ is the best linear approximation of $f$ around $r$. Now, differentiable (resp.\ continuous) functions are closed under composition, so one may wonder whether the operation $(-)'$ is compositional, \ie, whether $(g\circ f)'= g'\circ f'$. This is notoriously not the case: the so-called \emph{chain rule} states that
\begin{equation}
	\label{eq:ChainRule}
	\forall r\in\Real,\qquad(g\circ f)'(r)=g'(f(r))\cdot f'(r).
\end{equation}

Nevertheless, there is a slightly more complex construction from which the derivative may be recovered and which has the good taste of being compositional. Namely, define, for $f$ as above,
$$
\begin{array}{rccc}
	\Der f: & \Real\times\Real & \longrightarrow & \Real\times\Real \\
	& (r,a) & \mapsto & (f(r),f'(r)\cdot a)).
\end{array}
$$
We obviously have $\pi_2\Der f(x,1)=f'(x)$ for all $x\in\Real$ (where $\pi_2$ is projection on the second component) and we invite the reader to check, using the chain rule itself, that $\Der(g\circ f)=\Der g\circ\Der f$.

The similarity with forward mode AD is not accidental. Indeed, as observed by \cite{Elliott}, forward mode symbolic AD may be understood as a compositional implementation of partial derivatives, along the lines illustrated above. Formally, we may consider two term calculi for computational graphs, let us call them $\cat C$ and $\cat C'$, defined just as above but based on two different sets of function symbols $\cF\subseteq\cF'$, respectively, with $\cF'$ containing at least sum and product, and equipped with partial functions
$$\partial_i:\mathcal F\longrightarrow\mathcal F'$$
for each positive integer $i$ such that, for all $f\in\mathcal F$ of arity $k$ and for all $1\leq i\leq k$, $\partial_i f$ is defined and its arity is equal to $k$. Then, we define a program transformation $\opFwd$ from $\cat C$ to $\cat C'$ which, on types, is given by
\begin{align*}
	\Fwd{\tyR} &:= \tyR\times\tyR & \Fwd{A\times B} &:= \Fwd A\times\Fwd B,
\end{align*}
and, on computational graphs,
\begin{center}
\begin{tabular}{r@{\;:=\;}lr@{\;:=\;}l}
	$\opFwd$(\lstinline|x|) & \lstinline|x|
	&
	$\opFwd$(\lstinline|let x = $G$ in $F$|) & \lstinline|let x = $\opFwd(G)$ in $\opFwd(F)$|
	\\[5pt]
	$\opFwd$(\lstinline|($F$,$G$)|) &\lstinline|($\opFwd(F)$,$\opFwd(G)$)|
	&
	$\Fwd{f(\seq x)}$ &
	\lstinline|let $\seq x$ = ($\seq z$, $\seq{a}$) in ($f(\seq z)$,$\sum_{i=1}^k$ $\Mult{\partial_i f (\seq z)}{\text{a}_i}$)|
\end{tabular}
\end{center}
In the last case, $f$ is of arity $k$ and \lstinline|let $\seq x$ = ($\seq z$, $\seq a$) in $\dots$| stands for \lstinline|let x$_1$ = (z$_1$,a$_1$) in $\dots$ let x$_k$ = (z$_k$,a$_k$) in $\dots$|. 
%
Notice how the case $\Fwd{f(\seq x)}$ is the definition of the operator $\Der$ mutatis mutandis, considering that now the arity $k$ is arbitrary. More importantly, we invite the reader to compare it to the assignment (\ref{eq:FwdGrad}) in the description of the forward mode AD algorithm: they are essentially identical. The following is therefore not so surprising:
\begin{proposition}
	\label{prop:Fwd}
	Suppose that every $f\in\cF$ of arity $k$ corresponds to a differentiable function $f:\Real^k\to\Real$ and that, for all $1\leq i\leq k$, $\partial_i f$ is the symbol corresponding to its partial derivative with respect to the $i$-th input. Then, for every computational graph $\seq x:\tyR\vdash G:\tyR$ with $n$ inputs, for all $1\leq j\leq n$ and for all $\seq r=r_1,\ldots,r_n\in\Real$, we have the call-by-value evaluation
\begin{center}
	\lstinline|let x$_1$ = ($r_1$,$0$) in$\dots$  let x$_j$ = ($r_j$,$1$) in$\dots$ let x$_n$ = ($r_n$,$0$) in $\Fwd G$| $ \mathrel{\red^\ast}$ \lstinline{($\sem G(\seq r)$,$\partial_j\sem G(\seq r)$)}
\end{center}
\end{proposition}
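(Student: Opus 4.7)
The proposition as stated is tailored to computational graphs with a single output and to a one-hot tangent vector, but neither restriction survives the induction on subterms, so I would first strengthen the statement. Concretely, I would prove the following invariant by structural induction on $G$: for every computational graph $\seq x : \tyR \vdash G : \tyR^m$ and any reals $s_1,\dots,s_n,t_1,\dots,t_n$, the evaluation of $\Fwd G$ under the memory $x_i := (\cval{s_i},\cder{t_i})$ produces a tuple whose $k$-th component is the pair $(\cval{\sem G(\seq s)_k}, \cder{\sum_{i=1}^n \Mult{\partial_i \sem G(\seq s)_k}{t_i}})$. The statement of \refprop{Fwd} is then the special case $m=1$ together with the tangent input $t_i = \delta_{ij}$, since $\sum_i \Mult{\partial_i \sem G(\seq r)}{\delta_{ij}} = \partial_j\sem G(\seq r)$.

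The base cases are essentially by inspection. For $G = x_i$, the memory already contains the required pair. For $G = f(x_1,\dots,x_k)$, the defining clause of $\opFwd$ unfolds, by a handful of call-by-value \lstinline|let|-reductions, to precisely $(\cval{f(\seq s)},\cder{\sum_i \Mult{\partial_i f(\seq s)}{t_i}})$, which matches the invariant given that $\sem{f(\seq x)}(\seq s) = f(\seq s)$ and that the hypothesis on $\partial_i$ identifies the symbol $\partial_i f$ with the actual partial derivative of $f$. For $G = (G_1,G_2)$ the pair constructor commutes with $\opFwd$, so the claim follows componentwise from the induction hypotheses on $G_1$ and $G_2$.

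The only real content is the \lstinline|let|-binding case $G = \text{let } y = G_1 \text{ in } G_2$, and it is here that the chain rule enters. By the induction hypothesis on $G_1 : \tyR^p$, call-by-value evaluation first replaces $y$ by a value tuple whose $k$-th pair is $(\cval{u_k}, \cder{v_k})$ with $u_k = \sem{G_1}(\seq s)_k$ and $v_k = \sum_i \Mult{\partial_i \sem{G_1}(\seq s)_k}{t_i}$. Then the induction hypothesis applied to $G_2$, viewed as a graph in the combined inputs $\seq x, y_1,\dots,y_p$, produces an output whose tangent component is $\sum_i \Mult{\partial_i^{\seq x}\sem{G_2}(\seq s,\seq u)}{t_i} + \sum_k \Mult{\partial_k^{y}\sem{G_2}(\seq s, \seq u)}{v_k}$. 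Substituting the formula for $v_k$ and regrouping yields $\sum_i \Mult{(\partial_i^{\seq x}\sem{G_2} + \sum_k \Mult{\partial_k^{y}\sem{G_2}}{\partial_i \sem{G_1}_k})}{t_i}$, which is exactly $\sum_i \Mult{\partial_i \sem{G}(\seq s)}{t_i}$ by the chain rule.

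The main obstacle I anticipate is less mathematical than bureaucratic: one must be precise about the fact that the call-by-value strategy actually sequentialises the reduction so that $\opFwd(G_1)$ reaches a value tuple before substitution into $\opFwd(G_2)$ happens, so that the inductive hypothesis on $G_2$ can be applied with the memory extended by the $(\cval{u_k},\cder{v_k})$. Once the operational semantics of \lstinline|let| is formally pinned down, the chain-rule computation described above closes the case, and specialising $m=1$, $t_i = \delta_{ij}$ recovers the statement of \refprop{Fwd}.
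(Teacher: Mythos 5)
Your proof is correct. Note that the paper does not actually prove \refprop{Fwd} at all --- it is presented as ``not so surprising'' after the informal comparison between the clause $\Fwd{f(\seq x)}$ and the update rule~(\ref{eq:FwdGrad}), and no proof appears in the appendix. Your argument supplies exactly the right one, and it follows the same pattern the authors use for the reverse-mode analogue they \emph{do} prove, Prop.~\ref{prop:backprop_sound}: there the statement is likewise strengthened before the induction (an arbitrary adjoint seed $\num q$ in place of $\num 1$, playing the role of your arbitrary tangent vector $\seq t$ in place of the one-hot vector $\delta_{ij}$), the induction is on the structure of $G$, and the only substantive case is the \lstinline|let|-binding, closed by the multivariate chain rule exactly as you do. Your generalisation to outputs of type $\tyR^m$ is also needed because of the pairing construct, and your observation that the let-bound tangent $v$ is \emph{not} one-hot --- which is what forces the strengthening --- is the one non-bureaucratic idea in the proof. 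Two minor remarks: in this calculus let-bound variables have ground type $\tyR$, so your $p$ can be taken to be $1$ (your more general treatment is harmless); and your worry about sequentialisation is legitimate but is discharged by the call-by-value hypothesis in the statement together with the standard substitutivity of the operational semantics, just as in the paper's own treatment via Prop.~\ref{prop:postponement}.
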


So we obtained what we wanted: evaluating $\Fwd{G}$ in call-by-value is the same as executing the forward mode AD algorithm on $G$. Moreover, the definition of $\opFwd$ is fully compositional.\footnote{Technically, one may consider the calculi $\cat C$ and $\cat C'$ as cartesian 2-multicategories: types are objects, computational graphs with $n$ inputs and one output are $n$-ary  multimorphisms and evaluation paths are 2-arrows. Then, $\opFwd$ is readily seen to be a morphism of cartesian 2-multicategories. We will not develop such categorical considerations in this paper, we will content ourselves with mentioning them in footnotes.} Indeed, we merely reformulated the transformation $\overrightarrow{\mathcal D_x}$ introduced in \cite{Purdue}: the computation of the proposition corresponds to that of $\overrightarrow{\mathcal D_{x_j}}(G)$. 

Symbolic AD also provides us with an alternative analysis of the complexity of the forward mode AD algorithm. Recall that the length of call-by-value evaluation of computational graphs is linear in the size, so the computation of \refprop{Fwd} takes $O(\size{\Fwd{G}})$ steps. Now, by inspecting the definition of $\opFwd$, it is immediate to see that every syntactic construct of $G$ is mapped to a term of bounded size (remember that the arity $k$ is bounded). Therefore, $\size{\Fwd{G}}=O(\size G)$, which is exactly the cost of computing one partial derivative in forward mode. In other words, the cost becomes simply the size of the output of the program transformation.

Notice how \refprop{Fwd} rests on a differentiability hypothesis. As mentioned above, this is not truly fundamental. Indeed, observe that $\Fwd{G}$ is \emph{always} defined, independently of whether the function symbols it uses are differentiable. This is because, in principle, the maps $\partial_i:\cF\to\cF'$ are arbitrary and the symbol $\partial_i f$ may have \emph{nothing} to do with the $i$-th partial derivative of the function that the symbol $f$ represents! Less unreasonably, we may consider ``mildly'' non-differentiable symbols and associate with them ``approximate'' derivatives: for example, the rectified linear unit $\mathrm{ReLU}(x)$ defined as  \lstinline{if x<$0$ then $0$ else x} may be mapped by $\partial_1$ to $\mathrm{Step}(x)$ defined as \lstinline{if x<$0$ then $0$ else $1$}, even though technically the latter is not its derivative because the former is not differentiable in $0$. Formally, $\mathrm{Step}$ is called a \emph{subderivative} of $\mathrm{ReLU}$. \refprop{Fwd} easily extends to subderivatives and, therefore, AD works just as well on non-differentiable functions, computing \emph{subgradients} instead of gradients, which is perfectly acceptable in practice. This remark applies also to backpropagation and to all the results of our paper, although for simplicity we prefer to stick to the more conventional notion of gradient, and thus work under a differentiability hypothesis.

\subsection{Reverse mode AD, or backpropagation}
\label{sect:Backprop}
A more efficient way of computing gradients in the many inputs/one output case is provided by \emph{reverse mode} automatic differentiation, from which the \emph{backpropagation} (often abbreviated as \emph{backprop}) algorithm derives. As usual, we are given a computational graph (seen as a hypergraph) $G$ with input nodes $x_1,\ldots,x_n$ and output node $y$, as well as $\seq r=r_1,\ldots,r_n\in\Real$, which is the point where we wish to compute the gradient. The backprop algorithm maintains a memory consisting of assignments of the form $x:=(\cval r,\crder \alpha)$, where $x$ is a node of $G$ and $\cval r,\crder \alpha\in\Real$ (the \cval{primal} and the \crder{adjoint}), plus a boolean flag with values ``marked/unmarked'' for each hyperedge of $G$, and proceeds thus:
\begin{description}
	\item[initialization:] the memory is initialized with $x_i:=(r_i,0)$ for all $1\leq i\leq n$, and the \emph{forward phase} starts;
	\item[forward phase:] at each step, a new assignment $z:=(\cval s,\crder 0)$ is produced, with $s$ being computed exactly as during forward evaluation, ignoring the second component of pairs in memory (\ie, $s$ is the value of node $z$); once every node of $G$ has a corresponding assignment in memory, the assignment $y:=(\cval t,\crder 0)$ is updated to $y:=(\cval t,\crder 1)$, every hyperedge is flagged as unmarked and the \emph{backward phase} begins (we actually know that $t=\sem{G}(\seq r)$, but this is unimportant);
	\item[backward phase:] at each step, we look for an unmarked hyperedge $z_1,\ldots,z_k\stackrel{f}{\longrightarrow} w$ such that all hyperedges having $w$ among their sources are marked. If no such hyperedge exists, we terminate. Otherwise, assuming that the memory contains $w:=(\cval u,\crder \alpha)$ and $z_i:=(\cval{s_i},\crder{\beta_i})$ for all $1\leq i\leq k$, we update the memory with the assignments $z_i:=\left(\cval{s_i}\ ,\ \crder{\beta_i+\partial_i f(\seq s)\cdot\alpha}\right)$ (where $\seq s:=s_1,\ldots,s_k$) for all $1\leq i\leq k$ and flag the hyperedge as marked.
\end{description}

One may prove that, when the backprop algorithm terminates, the memory contains assignments of the form $x_i:=\left(\cval{r_i},\crder{\partial_i\sem{G}(\seq r)}\right)$ for all $1\leq i\leq n$, \ie, the value of each partial derivative of $\sem G$ in $\seq r$ is computed at the corresponding input node, and thus the gradient may be obtained by just collecting such values. Let us test it on an example. Let $G$ be the computational graph of \reffig{CompGraph} and let $r_1=5,r_2=2$. The forward phase terminates with $x_1:=(\cval 5,\crder 0),x_2:=(\cval 2,\crder 0),z_1:=(\cval 3,\crder 0),z_2:=(\cval 9,\crder 0),y:=(\cval{0.412},\crder 1)$. From here, the backward phase updates $z_2:=(\cval 9,\cos(\cval{z_2})\cdot\crder 1)=(\cval 9,\crder{-0.911})$, then $z_1:=(\cval 3,\cval{z_1}\cdot\crder{-0.911}+\cval{z_1}\cdot\crder{-0.911})=(\cval 3,\crder{-5.467})$ and finally $x_1:=(\cval 5,1\cdot\crder{-5.467})=(\cval 5,\crder{-5.467})$ and $x_2:=(\cval 2,-1\cdot\crder{-5.467})=(\cval 2,\crder{5.467})$, as expected since $\partial_2\sem G=-\partial_1\sem G$.

Compared to forward mode AD, the backprop algorithm may seem a bit contrived (it is certainly harder to understand why it works) but the gain in terms of complexity is considerable: the forward phase is just a forward evaluation; and, by construction, the backward phase scans each hyperedge exactly once performing each time a constant number of operations. So both phases are linear in $\size G$, giving a total cost of $O(\size G)$, like forward mode. Except that, unlike forward mode, a single evaluation now already gives us the whole gradient, independently of the number of inputs!

\subsection{Symbolic backpropagation and the compositionality issue}
The symbolic methodology we saw for forward mode AD may be applied to the reverse mode too: given a computational graph $G$ with $n$ inputs and one output, one may produce a computational graph $\bp[] G$ with $n+1$ inputs and $1+n$ outputs such that, for all $\seq r=r_1,\ldots,r_n\in\Real$, the forward evaluation of $\bp[] G(\seq r,1)$ has output $(\sem G(\seq r),\nabla\sem G(\seq r))$. Moreover, $\size{\bp[] G}=O(\size G)$.

\begin{figure}[t!]
\scalebox{\scalefact}{%
\begin{tikzpicture}[
	triangle/.style = {regular polygon, regular polygon sides=3, shape border rotate=270}
]
	\znode[blue(pigment)]{x1}{0}{1.75}{$x_1$}
	\znode[blue(pigment)]{x2}{0}{0.25}{$x_2$}
	\znode[blue(pigment)]{z1}{2}{1}{$z_1$}
	\znode[blue(pigment)]{z2}{4}{1}{$z_2$}
	\znode[blue(pigment)]{y}{6.25}{1}{$y$}
	\zbiedge[blue(pigment)]{0.85}{1}{$\,-\,$}{x1}{x2}{z1}
	\zedge[blue(pigment)]{mul}{3}{1}{$\ \cdot\ $}
	\draw[->,color=blue(pigment)] (z1) to[bend left] (mul);
	\draw[->,color=blue(pigment)] (z1) to[bend right] (mul);
	\draw[->,color=blue(pigment)] (mul) to (z2);
	\zunedge[blue(pigment)]{5}{1}{$\sin$}{z2}{y}
	\znode[burgundy]{a}{6.25}{3.75}{$a$}
	\znode[burgundy]{b}{8}{3}{$b$}
	\znode[burgundy]{v}{6.25}{2.25}{$v$}
	\znode[burgundy]{cp}{10}{3.75}{$c'$}
	\znode[burgundy]{cpp}{10}{2.25}{$c''$}
	\znode[burgundy]{c}{12}{3}{$c$}
	\znode[burgundy]{cox1}{14}{3.75}{$x_1'$}
	\znode[burgundy]{cox2}{14}{2.25}{$x_2'$}
	\zunedge[burgundy]{5}{2.25}{$\cos$}{z2}{v}
	\zbiedge[burgundy]{7}{3}{$\ \cdot\ $}{a}{v}{b}
	\zedge[burgundy]{mul1}{9}{3.75}{$\ \cdot\ $}
	\draw[->,color=burgundy] (z1) to[out=60,in=200] (5,4) to[out=20,in=160] (mul1);
	\draw[->,color=burgundy] (b) to (mul1);
	\draw[->,color=burgundy] (mul1) to (cp);
	\zedge[burgundy]{mul2}{9}{2.25}{$\ \cdot\ $}
	\draw[->,color=burgundy] (z1) to[out=315,in=225] (mul2);
	\draw[->,color=burgundy] (b) to (mul2);
	\draw[->,color=burgundy] (mul2) to (cpp);
	\zbiedge[burgundy]{11}{3}{$+$}{cp}{cpp}{c}
	\znode[burgundy]{one}{12}{4}{$1$}
	\zedge[burgundy]{mul3}{13}{3.75}{$\ \cdot\ $}
	\draw[->,color=burgundy] (one) to (mul3);
	\draw[->,color=burgundy] (c) to (mul3);
	\draw[->,color=burgundy] (mul3) to (cox1);
	\znode[burgundy]{negone}{12}{2}{$-1$}
	\zedge[burgundy]{mul4}{13}{2.25}{$\ \cdot\ $}
	\draw[->,color=burgundy] (negone) to (mul4);
	\draw[->,color=burgundy] (c) to (mul4);
	\draw[->,color=burgundy] (mul4) to (cox2);
\end{tikzpicture}}
	\lstinline{let z$_1$=x$_1$-x$_2$ in let z$_2$=z$_1\cdot$z$_1$ in let b=($\cos$ z$_2$)$\Mult$a in let c=z$_1\Mult$b+z$_1\Mult$b in ($\sin$ z$_2$,(1$\Mult$c,-1$\Mult$c))}
	\caption{The computational graph $\bp[x_1,x_2,a] G$ where $G$ is in \reffig{CompGraph}, and its corresponding term.}
	\label{fig:Backprop}
\end{figure}
A formal definition of the $\mathbf{bp}$ transformation will be given in \refsubsect{first-order}. Let us look at an example, shown in \reffig{Backprop}. First of all, observe that $\bp[x_1,x_2,a] G$ contains a copy of $G$, marked in \cval{blue}. This corresponds to the forward phase. The nodes marked in \crder{red} correspond to the backward phase. Indeed, we invite to reader to check that the forward evaluation of $\bp[x_1,x_2,a]G$ with $x_1:=5$, $x_2:=2$ and $a:=1$ matches exactly the steps of the backprop algorithm as exemplified in \refsect{Backprop}, with node $b$ (resp.\ $c$, $x_1'$, $x_2'$) corresponding to the second component of the value of node $z_2$ (resp.\ $z_1$, $x_1$, $x_2$). (The nodes $v$, $c'$ and $c''$ are just intermediate steps in the computation of $b$ and $c$ which are implicit in the numerical description of the algorithm and are hidden in syntactic sugar).

Rather than examining the details of the definition of $\mathbf{bp}$, let us observe at once that, from the standpoint of programming languages, it suffers from a serious defect: unlike $\opFwd$, it is \emph{not} compositional.  
Indeed, in order to define \lstinline{$\mathbf{bp}$(let x = $G$ in $F$)}, we need to know and exploit the inner structure of $\bp[]F$ and $\bp[]G$, whereas from the definition of $\opFwd$ given above it is clear that no such knowledge is required in that case, \ie, $\Fwd{F}$ and $\Fwd{G}$ are treated as ``black boxes''. Our way of understanding previous work on the subject \cite{Elliott,Purdue} is that it was all about \emph{making symbolic backprop compositional}. This is the topic of the next Section.

\section{Our Approach to Compositional Backpropagation}\label{sect:us}
As mentioned above, the key to modular and efficient differentiable programming is making symbolic backprop (the $\mathbf{bp}$ transformation) compositional. We will show that this may be achieved by considering a programming language with a notion of linear negation. The goal of this section is to explain why \emph{negation} and why \emph{linear}.

Let us start with by looking at an extremely simple example, which is just the composition of two unary functions:
\begin{equation}
\label{eq:SimpleComp}
G := \text{\lstinline|let z = $\;f\;$ x in $\;g\;$ z|}
\end{equation}

As a hypergraph, $G$ has three nodes $x,y,z$, of which $x$ is the input and $y$ the output (corresponding to the root of $G$) and two edges $x\stackrel{f}{\longrightarrow}z$ and $z\stackrel{g}{\longrightarrow}y$. Note that, since $G$ has only one input, its gradient is equal to its derivative and forward and reverse mode AD have the same complexity. This is why the example is interesting: it shows the difference between the two algorithms by putting them in a context where they must perform essentially the same operations. In the sequel, we set $h:=\sem G$ and we denote by $f'$, $g'$ and $h'$ the derivatives of $f$, $g$ and $h$, respectively.

For what concerns forward mode, we invite the reader to check that
\begin{center}
	$\Fwd G =$
	\lstinline|let z = let (v,a) = x in ($f$ v, $\Mult{\text{(}f' \text{ v)}}{\text{a}}$) in let (w, b) = z in ($g$ w, $\Mult{(g'  \text{ w})}{\text{b}}$)|
\end{center}
We may simplify this a bit by observing that (renaming $w$ as $z$)

\begin{center}
	$\Fwd G\isub{\text{\lstinline{(x,a)}}}{\text{\lstinline{x}}} \red^\ast $
	\lstinline{let z = $f$ x in let b = ($f'$ x)$\Mult$a in ($g$ z, ($g'$ z)$\Mult$b)}
\end{center}

On the other hand, applying the definition of \refsubsect{first-order}, we get
\begin{center}
	$\bp[\text{\lstinline{x}},\text{\lstinline{a}}]{G} = $
	\lstinline{let z = $f$ x in let b = ($g'$ z)$\Mult$a in ($g$ z,($f'$ x)$\Mult$b)}
\end{center}
%
Note that, if we substitute $r\in\Real$ for \lstinline{x} and $1$ for \lstinline{a}, in both cases we obtain $(h(r),h'(r))$ in the same number of steps, as expected. However, the order in which the derivatives are computed relative to the order of the composition $g\circ f$ is different: it is covariant in forward mode, contravariant in reverse mode. 
This corresponds precisely to the behavior of the two algorithms:
\begin{itemize}
	\item in forward mode, we start with $x:=(\cval r,\cder 1)$, from which we infer $z:=(\cval{f(r)},\cder{f'(r)})$, from which we infer $y:=(\cval{g(f(r))},\cder{\Mult{g'(f(r))}{f'(r)}})$;
	\item in reverse mode, the forward phase leaves us with $x:=(\cval r,\crder 0)$, $z:=(\cval{f(r)},\crder 0)$, $y:=(\cval{g(f(r))},\crder 1)$, at which point the backward phase proceeds back from the output towards the input, inferring first $z:=(\cval{f(r)},\crder{g'(f(r))})$ and then $x:=(\cval r,\crder{\Mult{f'(r)}{g'(f(r))}})$.
\end{itemize}

A lesson we learn from this example, in the perspective of compositionality, is that both algorithms may be seen as mapping the subprograms $f$ and $g$, which have one input and one output, to two subprograms $\overrightarrow f$ and $\overrightarrow g$ (or $\overleftarrow f$ and $\overleftarrow g$) having two inputs and two outputs, but then assembling them rather differently:

\begin{center}
\begin{tabular}{ccc}
	forward mode && reverse mode \\
	\begin{minipage}{0.4\textwidth}
		\scalebox{\scalefact}{%
		\begin{tikzpicture}
			\path (0,0) node(f) {$\overrightarrow f$};
			\draw (-0.4,-0.5) rectangle (0.4,0.5);
			\path (2,0) node(g) {$\overrightarrow g$};
			\draw (-0.4+2,-0.5) rectangle (0.4+2,0.5);
			\path (-0.4-1,0.3) node(x) [color=blue(pigment)] {$r$};
			\path (-0.4,0.3) node(a) {};
			\path (-0.4+0.8,0.3) node(b) {};
			\path (-0.4+2,0.3) node(c) {};
			\path (-0.4+2.8,0.3) node(d) {};
			\path (-0.4+3.8,0.3) node(y)[anchor=west,color=blue(pigment)] {$h(r)$};
			\path (-0.4-1,-0.3) node(x') [color=britishracinggreen] {$1$};
			\path (-0.4,-0.3) node(a') {};
			\path (-0.4+0.8,-0.3) node(b') {};
			\path (-0.4+2,-0.3) node(c') {};
			\path (-0.4+2.8,-0.3) node(d') {};
			\path (-0.4+3.8,-0.3) node(y')[anchor=west,color=britishracinggreen] {$h'(r)$};
			\draw[->,color=blue(pigment)] (x) to (a.center);
			\draw[->,color=blue(pigment)] (b.center) to (c.center);
			\draw[->,color=blue(pigment)] (d.center) to (y);
			\draw[->,color=britishracinggreen] (x') to (a'.center);
			\draw[->,color=britishracinggreen] (b'.center) to (c'.center);
			\draw[->,color=britishracinggreen] (d'.center) to (y');
		\end{tikzpicture}}
	\end{minipage}
	&\qquad&
	\begin{minipage}{0.4\textwidth}
		\scalebox{\scalefact}{%
		\begin{tikzpicture}
			\path (0,0) node(f) {$\overleftarrow f$};
			\draw (-0.4,-0.5) rectangle (0.4,0.5);
			\path (2,0) node(g) {$\overleftarrow g$};
			\draw (-0.4+2,-0.5) rectangle (0.4+2,0.5);
			\path (-0.4-1,0.3) node(x)[anchor=east,color=blue(pigment)] {$r$};
			\path (-0.4,0.3) node(a) {};
			\path (-0.4+0.8,0.3) node(b) {};
			\path (-0.4+2,0.3) node(c) {};
			\path (-0.4+2.8,0.3) node(d) {};
			\path (-0.4+3.8,0.3) node(y)[anchor=west,color=blue(pigment)] {$h(r)$};
			\path (-0.4-1,-0.3) node(x')[anchor=east,color=burgundy] {$h'(r)$};
			\path (-0.4,-0.3) node(a') {};
			\path (-0.4+0.8,-0.3) node(b') {};
			\path (-0.4+2,-0.3) node(c') {};
			\path (-0.4+2.8,-0.3) node(d') {};
			\path (-0.4+3.8,-0.3) node(y')[anchor=west,color=burgundy] {$1$};
			\draw[->,color=blue(pigment)] (x) to (a.center);
			\draw[->,color=blue(pigment)] (b.center) to (c.center);
			\draw[->,color=blue(pigment)] (d.center) to (y);
			\draw[<-,color=burgundy] (x') to (a'.center);
			\draw[<-,color=burgundy] (b'.center) to (c'.center);
			\draw[<-,color=burgundy] (d'.center) to (y');
		\end{tikzpicture}}
	\end{minipage}
\end{tabular}
\end{center}
\smallskip

The picture on the right suggested to \cite{Purdue} the idea of solving the compositionality issue via \emph{continuations}: drawing inspiration from ``There and Back Again'' \cite{DanvyGoldberg}, the blocks $\overleftarrow f$ and $\overleftarrow g$ are seen as function calls in the CPS transform of $G$, so that the forward phase takes place along the call path, while the backward phase takes place along the return path. However, in order for their approach to work, \cite{Purdue} must use references, \ie, the memory maintained by the backprop algorithm is explicitly present and is manipulated as described in \refsect{Backprop}. Moreover, since the memory is updated \emph{after} the return from each function call, they must actually resort to \emph{delimited continuations}. On the whole, although they do succeed in presenting reverse mode AD as a compositional program transformation, the work of \cite{Purdue} is closer to an (elegant!) implementation of the backprop algorithm in a functional language with references than to an abstract, purely compositional description of its dynamics.


Let us focus, instead, on the idea of contravariance. The archetypal contravariant operation is \emph{negation}. For a (real) vector space $A$, negation corresponds to the \emph{dual space} $A\multimap\Real$, which may be generalized to $A^{\bot_E}:=A\multimap E$ for an arbitrary space $E$, although in fact we will always take $E=\Real^d$ for some $d\in\Nat$. For brevity, let us keep $E$ implicit and simply write $A^\bot$. There is a canonical way, resembling CPS, of transforming a differentiable function $f:\Real\to\Real$ with derivative $f'$ into a function $\Derr f:\Real\times\Real^\bot\to\Real\times\Real^\bot$ from which the derivative of $f$ may be extracted. Namely, let, for all $x\in\Real$ and $x\dualvar\in\Real^\bot$,
\begin{equation}\label{eq:derr_def}
\Derr f(x,x\dualvar) := \left(f(x),\lambda a.x\dualvar(f'(x)\cdot a)\right),
\end{equation}
where we are using $\lambda$-notation with the standard meaning. We let the reader verify that, if we suppose $E=\Real$, then for all $x\in\Real$,
$(\pi_2\Derr f(x,I))1 = f'(x)$, 
where $\pi_2$ is projection of the second component and $I:\Real\to\Real$ is the identity function (which is obviously linear). More importantly, $\Derr$ is compositional: for all $x\in\Real$ and $x\dualvar\in\Real\multimap\Real$, we have\footnote{The equation in the second line uses both commutativity and associativity of product, but only the latter is really necessary: by replacing $f'(x)\cdot a$ with $a\cdot f'(x)$  in Eq.~\ref{eq:derr_def} one can check that commutativity is not needed. The former notation reflects that this is actually a linear application: in general, if $f:A\to B$, then $f'(x):A\multimap B$ and $a:A$.
 When $A=B=k$ with $k$ a ring (commutative or not), $k\multimap k\cong k$ and linear application becomes the product of $k$, so the notation $a\cdot f'(x)$ makes sense and backprop has in fact been applied to non-commutative rings~\cite{IsokawaKMP,PearsonBisset}. In the general case, it makes no sense to swap function and argument and it is unclear how backprop would extend.}
\begin{align*}
	\Derr g(\Derr f(x,x\dualvar)) &=
	\Derr g(f(x),\lambda a.x\dualvar(f'(x)\cdot a)) =
	(g(f(x)),\lambda b.(\lambda a.x\dualvar(f'(x)\cdot a))(g'(f(x))\cdot b)) \\
	&=
	(g(f(x)),\lambda b.x\dualvar(f'(x)\cdot(g'(f(x))\cdot b)) =
	(g(f(x)),\lambda b.x\dualvar((g'(f(x))\cdot f'(x))\cdot b)) \\
	&=
	((g\circ f)(x),\lambda b.x\dualvar((g\circ f)'(x)\cdot b)) =
	\Derr(g\circ f)(x,x\dualvar).
\end{align*}
%
This observation may be generalized to maps $f:\Real^n\to\Real$: for all $\seq x\in\Real^n$ and $\seq x\dualvar=x_1\dualvar,\ldots,x_n\dualvar\in\Real^\bot$,
$$\RevDer{f}(\seq x;\seq x\dualvar):=\left(f(\seq x),\lambda a.\sum_{i=1}^n x_i\dualvar\left(\partial_i f(\seq x)\cdot a\right)\right).$$
In the AD literature, the $x_i\dualvar$ are called \emph{backpropagators}~\cite{Pearlmutter:2008}. Obviously $\RevDer{f}:(\Real\times\Real^\bot)^n\to\Real\times\Real^\bot$ and we invite the reader to check that, if we take $E=\Real^n$, we have
$$(\pi_2\RevDer{f}(\seq x;\iota_1,\ldots,\iota_n))1=\nabla f(\seq x),$$
where, for all $1\leq i\leq n$, $\iota_i:\Real\to\Real^n$ is the injection into the $i$-th component, \ie, $\iota_i(x)=(0,\ldots,x,\ldots,0)$ with zeros everywhere except at position $i$, which is a linear function. Moreover, $\opRev$ is compositional.\footnote{Technically, functions of type $\Real^n\to\Real$ for varying $n$ form what is known as a \emph{cartesian operad}, or \emph{clone}, and $\opRev$ is a morphism of such structures. A bit more explicitly, one may compose $f:\Real^n\to\Real$ with $g:\Real^{m+1}\to\Real$ by ``plugging'' $f$ into the $i$-th coordinate of $g$, forming $g\circ^i f:\Real^{n+m}\to\Real$, for any $1\leq i\leq m+1$; the operation $\opRev$ preserves such compositions.} This leads to the definition of a compositional program transformation $\opRev$ (\reftab{reverse}) which verifies $\RevDer{\tyR}=\tyR\times\dual$ and which, applied to our example (\ref{eq:SimpleComp}), gives
\begin{center}
\begin{tabular}{cc}
$\opRev(G)\; = $
&
\begin{minipage}{10cm}
\begin{lstlisting}
let z = 
	let (v,v$\dualvar$) = x in 
	($f$ v, fun b -> v$\dualvar$ (($f'$ v) $\cdot$ b)) in 
let (w, w$\dualvar$) = z in 
($g$ w, fun a -> w$\dualvar$ ($g'$(w) $\cdot$ a))
\end{lstlisting}
\end{minipage}
\end{tabular}
\end{center}
%
where \lstinline{w$\dualvar$, v$\dualvar$ : $\dual$} so that both \lstinline{fun b -> v$\dualvar$ (($f'$ v) $\cdot$ b))}  and  \lstinline{fun a -> w$\dualvar$ (($g'$ w) $\cdot$ a)} have also type \lstinline{$\dual$}. Notice the resemblance of $\opRev(G)$ and $\opFwd(G)$: this is not an accident, both are defined in a purely compositional way (in particular, $\opRev$(\lstinline{let x = $H$ in $F$}) = \lstinline{let x = $\opRev(H)$ in $\opRev(F)$}), abiding by the ``black box'' principle) and the only non trivial case is when they are applied to function symbols. Moreover, we have

\begin{center}
$\opRev(G)$\{\lstinline{(x,x$\dualvar$)}/\lstinline{x}\}\; $\red^\ast$\;
\lstinline{let z = $f$ x in ($g$ z, fun a -> let b = ($g'$ z)$\cdot$a in x$\dualvar$(($f'$ x)$\cdot$b))}
\end{center}

\noindent which, albeit of different type, is essentially identical to $\bp[x,a]{G}$. 
More precisely, if we write \lstinline{let b = ($g'$ z)$\cdot$a in ($f'$ x)$\cdot$b} as $F$, then we have
\begin{center}
$\bp[x,a]{G} = $\lstinline|let z=$f$ x in ($g$ z,$F$)| and 
$\opRev(G)$
\{\lstinline|(x,x$\dualvar$)|/\lstinline{x}\} $\mathrel{\red^\ast}$ \lstinline|let z=$f$ x in ($g$ z, fun a -> x$\dualvar$ $F$)|
\end{center}

Let us now explain why negation must be \emph{linear}. The above example is too simple for illustrating this point, so from now on let $G$ be the computational graph of \reffig{CompGraph}. Applying the definition of \reftab{reverse} and simplifying, we obtain that $\revder{\text{\lstinline|G|}}$ is equal to:

\newcommand{\hi}[1]{\textcolor{hilight}{#1}}
\begin{center}
\begin{tabular}{c@{ $\red^\ast$ }c}
\begin{minipage}{6.8cm}
\begin{lstlisting}
let z$_2$ =
	let z$_1$ = 
		let (v$_2$, v$_2\dualvar$) = x$_2$ in		
		let (v$_1$, v$_1\dualvar$) = x$_1$ in
		(v$_1$-v$_2$, fun c -> v$_1\dualvar$($1\cdot$c)+v$_2\dualvar$($-1\cdot$c)) in 
	let (w$_1$,w$_1\dualvar$) = z$_1$ in 
	($\Mult{\text{w}_1\!}{\!\text{w}_1}$, fun b -> w$_1\dualvar$(w$_1\cdot$b)+w$_1\dualvar$(w$_1\cdot$b)) in
let (w$_2$, w$_2\dualvar$) = z$_2$ in
($\sin$ w$_2$, fun a -> w$_2\dualvar$ (($\cos$ w$_2$)$\cdot$a))
\end{lstlisting}
\end{minipage}
&
\begin{minipage}{6.1cm}
\begin{lstlisting}
let (v$_2$, v$_2\dualvar$) = x$_2$ in		
let (v$_1$, v$_1\dualvar$) = x$_1$ in
let (z$_1$,z$_1\dualvar$) = 
	(v$_1$-v$_2$, @fun c -> v$\color{hilight}_1\dualvar$($\color{hilight}1\cdot$c)+v$\color{hilight}_2\dualvar$($\color{hilight}-1\cdot$c)@) in 
let (z$_2$, z$_2\dualvar$) =  
	(z$_1\cdot$z$_1$, fun b -> @z$\color{hilight}_1\dualvar$@(z$_1\cdot$b)+@z$\color{hilight}_1\dualvar$@(z$_1\cdot$b)) in
($\sin$ z$_2$, fun a -> z$_2\dualvar$ (($\cos$ z$_2$)$\cdot$a))
\end{lstlisting}
\end{minipage}
\end{tabular}
\end{center}

There is a potential issue here, due to the presence of two occurrences of \lstinline{z$_1\dualvar$} (highlighted in \hi{brown}) which are matched against the function corresponding to the derivative of \lstinline{x$_1$-x$_2$} (also highlighted in \hi{brown}, let us denote it by $\hi F$). Notice that such a derivative is present \emph{only once} in $\bp[x_1,x_2,a]{G}$: it corresponds to the rightmost nodes of \reffig{Backprop} (more precisely, the abstracted variable \lstinline{c} corresponds to node $c$ itself, whereas \lstinline{v$_1\dualvar$} and \lstinline{v$_2\dualvar$} correspond to nodes $x_1'$ and $x_2'$, respectively). Therefore, duplicating $\hi{F}$ would be a mistake in terms of efficiency.

The key observation here is that \lstinline{z$_1\dualvar$} is of type $\dual$, \ie, it is a \emph{linear} function (and indeed, $\hi{F}$ is linear: by distributivity of product over sum, \lstinline{c} morally appears only once in its body). This means that, for all $t,u:\tyR$, we have $z_1\dualvar t + z_1\dualvar u = z_1\dualvar (t + u)$. In the $\lambda$-calculus we consider (\refsect{prel}), this becomes an evaluation step oriented from left to right, called \emph{linear factoring} \eqref{red:linear}, allowing us to evaluate $\revder{G}$ with the same efficiency as $\bp[x_1,x_2,a]{G}$. The linear factoring $ft+fu\red f(t+u)$ would be semantically unsound in general if $f:\lnot\tyR=\tyR\to\tyR$, which is why we must explicitly track the linearity of negations.

So we have a compositional transformation $\opRev$ which takes a computational graph $G$ with $n$ inputs $x_1,\ldots,x_n$ and returns a program $x_1:\tyR\times\dual,\ldots,x_n:\tyR\times\dual\vdash\revder{G}:\tyR\times\dual$ in the simply-typed $\lambda$-calculus augmented with linear negation, such that $\revder{G}$ evaluates to (essentially) $\bp[x_1,\ldots,x_n,a]G$. Actually, thanks to another nice observation of \cite{Purdue}, we can do much more: we can extend $\opRev$ for free to the \emph{whole} simply-typed $\lambda$-calculus, just letting $\revder{\text{\lstinline{fun x ->}}t}:=\text{\lstinline{fun x ->}}\revder{t}$ and $\revder{tu}:=\revder{t}\revder{u}$, and, whenever $\text{\lstinline{x}}_1:\tyR,\ldots,\text{\lstinline{x}}_n:\tyR\vdash t:\tyR$, we have that $\revder{t}$ still computes $\nabla\sem{t}$ with the same efficiency as the evaluation of $t$! Indeed, the definition of $\opRev$ immediately gives us that if $t\red^\ast u$ in $p$ steps, then $\revder{t}\red^\ast\revder{u}$ in $O(p)$ steps (point 2 of \reflemma{Commutation}).\footnote{Morally, the simply-typed $\lambda$-calculus augmented with a set $\cF$ of function symbols is the free cartesian semi-closed 2-multicategory on $\cF$ (semi-closed in the sense of \cite{Hyland}). Therefore, once $\opRev$ is defined on $\cF$, it automatically extends to a morphism of such structures. In particular, it functorially maps evaluations (which are 2-arrows) to evaluations.} But since $t$ has ground type and ground free variables, eliminating all higher-order redexes gives $t\red^\ast G$ for some computational graph $G$, hence $\revder{t}$ evaluates to (essentially) $\bp[]G$. So $\revder{t}$ computes the gradient of $\sem t=\sem G$ (remember that the semantics is invariant under evaluation) with a cost equal to $O(\size G)$ plus the cost of the evaluation $t\red^\ast G$, which is the best we can expect in general.

To conclude, we should add that in the technical development we actually use Accattoli's \emph{linear substitution calculus} \cite{Accattoli:LSC}, which is a bit more sophisticated than the standard simply-typed $\lambda$-calculus. This is due to the presence of linear negation, but it is also motivated by efficiency, which is the whole point of backpropagation. To be taken seriously, a proposal of using functional programming as the foundation of (generalized) AD ought to come with a thorough complexity analysis ensuring that we are not losing efficiency in moving from computational graphs to $\lambda$-terms. Thanks to its tight connection with abstract machines and reasonable cost models~\cite{Accattoli:2014}, 
ultimately owed to its relationship with Girard's proof nets \cite{Accattoli18} (a graphical representation of linear logic proofs which may be seen as a higher order version of computational graphs), the linear substitution calculus is an ideal compromise between the abstractness of the standard $\lambda$-calculus and the concreteness of implementations, and provides a solid basis for such an analysis.


\section{The Linear Substitution Calculus}\label{sect:prel}
\paragraph*{Terms and Types.}
Since the linear factoring rule \eqref{red:linear} is type-sensitive (as mentioned above, it is unsound in general), it is convenient to adopt a Church-style presentation of our calculus, \ie, with explicit type annotations on variables. The set of \emph{types} is generated by the following grammar:
\begin{align*}
A,B,C&::= \tyR \gsep A \times B \gsep A\to B \gsep \dual[d] && \text{(simple types)}
\end{align*}
where $\tyR$ is the ground type of real numbers. The negation $\dual[d]$ corresponds to the linear implication (in the sense of linear logic \cite{girard87tcs}) 
$\tyR\multimap \tyR^d.$ However, in order to keep the calculus as simple as possible, we avoid introducing a fully-fledged bilinear application (as for example in the bang-calculus \cite{EhrhardG16}) and opt instead for just a negation operator and dedicated typing rules. We may omit the subscript $d$ in $\dual[d]$ if clear from the context or unimportant.

An \emph{annotated variable} is either $x^{\exptype A}$ (called \emph{exponential variable}) with $A$ any type, or $x^{\lintype}$ (called \emph{linear variable}): the writing $x^{\gentype A}$ stands for one of the two annotations (in the linear case $A=\tyR$). The grammar of \emph{values} and \emph{terms} is given by mutual induction as follows, with $x^{\gentype A}$ varying over the set of annotated variables, $r$ over the set of real numbers $\mathbb R$ and $f$ over a finite set  $\mathcal F$ of function symbols over real numbers containing at least multiplication (noted in infix form $t\cdot u$):
\begin{align}\label{eq:terms}
	v &::=  x^{\gentype A}  \gsep \num r\gsep  \lambda x^{\gentype A}.t\gsep \pair{v_1,v_2} &&\text{(values)}\\
	t,u &::= v\gsep tu\gsep \pair{t,u}\gsep\unpair t {x^{\exptype A}} {y^{\exptype B}} u \gsep t\esub{x^{\gentype A}}{u} \gsep t+u 
	\gsep f(t_1,\ldots,t_k)&&\text{(terms)}
\end{align}
Since binders contain type annotations, bound variables will never be annotated in the sequel. In fact, we will entirely omit type annotations if irrelevant or clear from the context. Terms of the form $\num r$ are called \emph{numerals}. The term $t\esub{x}{u}$ (and its binary version $\unpair t {x} {y} u$) may be understood as the more familiar \lstinline|let x = $u$ in $t$| used in the previous informal sections. 


We denote by $\DualTerms[\mathcal F]$ the set of terms generated by the above grammar. We consider also the subset $\Terms[\mathcal F]$ of terms obtained by discarding linear negation and linear variables. 

We denote by $\size t$ the \emph{size} of a term $t$, \ie, the number of symbols appearing in $t$. We denote by $\fv{t}$ the set of \emph{free variables} of $t$, abstractions and explicit substitutions being the binding operators. As usual, $\alpha$-equivalent terms are treated as equal. A term $t$ is \emph{closed} if $\fv t=\emptyset$. In the following, we will use boldface metavariables to denote sequences: $\seq x$ will stand for a sequence of variables $x_1,\ldots,x_n$, $\seq t$ for a sequence of terms $t_1,\ldots,t_n$, etc. The length of the sequences will be left implicit, because either clear from the context or irrelevant. 

We use $n$-ary products $\pair{t_1,\dots,t_{n-1},t_n}$ as syntactic sugar for $\pair{t_1,\ldots\pair{t_{n-1},t_n}\dots}$, and we define \emph{Euclidean types} by $\tyR^d:= \tyR\times(\ldots\tyR \times \tyR\ldots)$. 
It will be useful to denote a bunch of sums without specifying the way these sums are associated. The notation
$$\sum_{i\in I}t_i,$$
will denote such a bunch for $I$ a finite set. In case $I$ is a singleton, the sum denotes its unique element. An empty sum of type $\tyR^d$ stands for $\pair{\num 0,\ldots,\num 0}$, which we denote by $\num{\seq 0}$. Of course this notation would denote a unique term modulo associativity and commutativity of $+$ and neutrality of $\num 0$, but we do not need to introduce these equations in the calculus.  

\begin{table}
\small
$$
\infer{\Gamma\vdash_{z} z:\tyR}{}
\qquad
\infer{\Gamma,x^{\exptype A}\vdash x:A}{}
\qquad
\infer{\Gamma\vdash_{(z)}\pair{t,u}:A\times B}{\Gamma\vdash_{(z)} t:A & \Gamma\vdash_{(z)} u:B}
\qquad
\infer{\Gamma\vdash_{(z)}\unpair{t}{x^{\exptype A}}{y^{\exptype B}}{u}:C}{\Gamma\vdash u:A\times B & \Gamma,x^{\exptype A},y^{\exptype B}\vdash_{(z)} t:C}
$$

$$
\infer{\Gamma\vdash\lambda x^{\exptype A}.t : A\to B}{\Gamma,x^{\exptype A}\vdash t:B}
\qquad
\infer{\Gamma\vdash tu:B}{\Gamma\vdash t:A\to B & \Gamma\vdash u:A}
\qquad
\infer{\Gamma\vdash\lambda z^{\lintype}.t : \dual[d]}{\Gamma\vdash_{z} t:\tyR^d}
\qquad
\infer{\Gamma\vdash_{(z)} tu:\tyR^d}{\Gamma\vdash t:\dual[d] & \Gamma\vdash_{(z)} u:\tyR}
$$

$$
\infer{\Gamma\vdash_{(z)} t\esub{x^{\exptype A}}{u}:C}{\Gamma\vdash u:A & \Gamma,x^{\exptype A}\vdash_{(z)} t:C}
\qquad
\infer{\Gamma\vdash_{(z')} t\esub{z^\lintype}{u}:\tyR^d}{\Gamma\vdash_{(z')} u:\tyR & \Gamma\vdash_{z} t:\tyR^{d}}
\qquad
\infer{\Gamma\vdash f(t_1,\ldots,t_k):\tyR}{\Gamma\vdash t_1:\tyR &\ldots& \Gamma\vdash t_k:\tyR}
\qquad
\infer{\Gamma\vdash \num r:\tyR}{r\in\mathbb R}
$$

$$
\infer{\Gamma\vdash_{(z)}t\cdot u:\tyR}{\Gamma\vdash_{(z)} t:\tyR & \Gamma\vdash u:\tyR}
\qquad
\infer{\Gamma\vdash_{(z)}t\cdot u:\tyR}{\Gamma\vdash t:\tyR & \Gamma\vdash_{(z)} u:\tyR}
\qquad
\infer{\Gamma\vdash_{z} \num{\seq 0}:\tyR^d}{}
\qquad
\infer{\Gamma\vdash_{(z)} t+u:\tyR^d}{\Gamma\vdash_{(z)} t:\tyR^d &\Gamma\vdash_{(z)} u:\tyR^d}
$$
\caption{The typing rules. In the pairing and sum rules, either all three sequents have $z$, or none does.}\label{table:types}
\end{table}
The typing rules are in \reftab{types}. The meta-variables $\Gamma, \Delta$ vary over the set of typing contexts, which are finite sequences of exponential type annotated variables without repetitions.  There are two kind of sequents: $\Gamma\vdash t:A$ and $\Gamma\vdash_z t:\tyR^d$. In this latter $d\in\mathbb N$ and $z$ is linear type annotated variable which occurs free \emph{linearly} in $t$. The typing rules define what ``occurring linearly'' means, following the standard rules of linear logic.\footnote{In this paper we focus on exactly what is required to express the backpropagation algorithm in the $\lambda$-calculus, avoiding a full linear logic typing assignment and just tracking the linearity of a single variable of type $\tyR$ in judgments typing a term with a Euclidean type $\tyR^d$. 
} The writing $\Gamma\vdash_{(z)}t:A$ stands for either $\Gamma\vdash t:A$ or $\Gamma\vdash_{z}t:A$, and in the latter case $A=\tyR^d$ for some $d$.

\paragraph*{Contexts.} We consider one-hole contexts, or simply \emph{contexts}, which are defined by the above grammar \eqref{eq:terms} restricted to the terms having exactly one occurrence of a specific variable $\{\cdot\}$, called the \emph{hole}. Meta-variables $\metaCtxt,\metaCtxtp$ will range over the set of  one-hole contexts. Given a context $\metaCtxt$ and a term $t$ we denote by $\ctxt t$ the substitution of the hole in $\metaCtxt$ by $t$ allowing the possible capture of free variables of $t$.  A particular class of contexts are the \emph{substitution contexts} which have the form of a hole followed by a stack (possibly empty) of explicit substitutions: $\{\cdot\}\esub{p_1}{t_1}\dots\esub{p_n}{t_n}$ with each $p_i$ a variable or a pair of variables. Meta-variables $\metaSubCtxt,\metaSubCtxtp$ will range over substitution contexts. If $\metaSubCtxt$ is a substitution context, we will use the notation $\subctxt{t}$ instead of $\ctxt[\metaSubCtxt]t$.

\paragraph{Rewriting rules.}
\begin{table}
\begin{subtable}{\textwidth}
	\begin{align}
		(\lambda x.t)\alpha u &\red t\esub{x}{u}\alpha
		\label{red:lambda} \\
		\unpair{s}{x}{y}{\pair{t,u}\alpha} &\red s\esub{x}{t}\esub{y}{u}\alpha
		\label{red:pair}\\
		\ctxt{x}\esub{x^{\exptype A}}{v\alpha} &\red \ctxt{v}\esub{x^{\exptype A}}{v}\alpha
		\label{red:subst}\\
		t\esub{x^{\exptype A}}{v\alpha} &\red t\alpha &&\text{if x}\not\in\fv t
		\label{red:gc}\\
		t\esub{x^{\lintype}}{v\alpha} &\red t\isub{v}{x}\alpha
		\label{red:lin}\\
		\num r\alpha+\num q\beta&\red \num{r+q}\alpha\beta
		\label{red:plus}\\
		\pair{t_1,t_2}\alpha+\pair{u_1,u_2}\beta&\red \pair{t_1+u_1,t_2+u_2}\alpha\beta
		\label{red:plusPair}\\
		\num r\alpha\cdot \num q\beta&\red \num{rq}\alpha\beta
		\label{red:times}
\end{align}
\caption{$\beta$-rules. In \eqref{red:subst}, \eqref{red:gc} and \eqref{red:lin}, $v$ is a value. In \eqref{red:subst}, $\metaCtxt$ is an arbitrary context not binding $x$. We write $\eqref{red:subst}^{\mathfrak n}$ to refer to the instance of $\eqref{red:subst}$ in which $v$ is a numeral.}
\label{table:beta}
\end{subtable}

\begin{subtable}{\textwidth}
\begin{align}
	t &\red \lambda y.ty
	\label{red:etaLambda}\\
	t\esub{x}{\pair{u,u'}} &
	\red t\esub{x}{\pair{y,y'}}\esub{\pair{y,y'}}{\pair{u,u'}}
	\label{red:etaPair}
\end{align}
\caption{$\eta$-rules. In~\eqref{red:etaLambda} $t$ has an arrow type or $\dual$. The new variables on the right-hand side of both rules are fresh.}
\label{table:eta}
\end{subtable}
\begin{subtable}{\textwidth}
\begin{align}
(x^{\dual}\alpha t)\beta+(x^{\dual}\alpha' t')\beta' &\red x^{\dual}(t+t')\alpha\beta\alpha'\beta'\label{red:linear}
\end{align}
\caption{linear factoring ($\ell$-rule for short), where we suppose that none of $\alpha,\beta,\alpha',\beta'$ binds $x$.}
\label{table:linear}
\end{subtable}
\begin{subtable}{\textwidth}
	\begin{align}
		t\esub{x}{u}\esub{y}{w} &\ \equiv\ t\esub{y}{w}\esub{x}{u} && \text{if }y\not\in\fv u\text{ and }x\not\in\fv w
		\label{eq:comm}\\
		t\esub{x}{u}\esub{y}{w} &\ \equiv\ t\esub{x}{u\esub{y}{w}} && \text{if }y\not\in\fv t
		\label{eq:esub}\\
		t\esub{x^{\exptype A}}{u} &\ \equiv\ t_{\{y/x\}}\esub{x^{\exptype A}}{u}\esub{y^{\exptype A}}{u}
		\label{eq:dup}\\
		(s\square t)\esub{x}{u} &\ \equiv\ s\esub{x}{u}\square t &&\text{ if $x\notin\fv t$}
		\label{eq:app1}\\
		(s\square t)\esub{x}{u} &\ \equiv\ s\square(t\esub{x}{u}) &&\text{ if $x\notin\fv s$}
		\label{eq:app2}
\end{align}
\caption{Structural equivalence. In \eqref{eq:dup}, $t_{\{y/x\}}$ denotes $t$ in which some (possibly none) occurrences of $x$ are renamed to a fresh variable $y$. In \eqref{eq:app1}, \eqref{eq:app2} the writing $s \square t$ stands for either $st$ or $s+t$ or $s\cdot t$ or $\pair{s,t}$.}
\label{table:structural_equivalence}
\end{subtable}
\caption{The reduction and the structural equivalence relations, where we suppose the usual convention that no free variable in one term can be captured in the other term of a relation.}
\label{table:reductions}
\end{table}

The \emph{reduction relation} is given in \reftab{reductions} divided in three sub-groups:
\begin{align*}
 \beta&:=\{\eqref{red:lambda},\eqref{red:pair},\eqref{red:subst},\eqref{red:gc},\eqref{red:lin},\eqref{red:plus},\eqref{red:plusPair},\eqref{red:times}\}&&\text{evaluation rules,}\\ 
 \eta&:=\{\eqref{red:etaLambda},\eqref{red:etaPair}\}&&\text{extensional rules,}\\
 \ell&:=\{\eqref{red:linear}\}&&\text{linear factoring.}
\end{align*}
 
In case one wants to consider numeric computations (other than sum and products), then of course one must also include suitable reduction rules associated with the function symbols: 
\begin{equation}\label{red:function}
	f(\num r_1\alpha_1,\dots,\num r_n\alpha_n)\red \num{\sem{f}(r_1,\dots,r_n)}\alpha_1\dots\alpha_n
\end{equation}
The rule~\eqref{red:lambda} transforms a $\lambda$-calculus application into an explicit substitution. The difference between the two is that the latter commutes over terms by the structural equivalence defined in \reftab{structural_equivalence}, while the former does not. Rule \eqref{red:pair} deconstructs a pair, while rule \eqref{red:subst} implements a ``micro-step'' substitution, closer to abstract machines~\cite{Accattoli:2014}. 
The special case in which $v$ is a numeral is referred to as $\eqref{red:subst}^{\mathfrak n}$. Rule \eqref{red:gc} implements garbage collection, and rule \eqref{red:lin} linear substitution. The rules \eqref{red:etaLambda} and \eqref{red:etaPair} are standard instances of $\eta$-expansion rules. They are useful in the proof of Theorem~\ref{th:main theorem}. Rule~\eqref{red:linear} has already been discussed.

Notice that $\Terms[\mathcal F]$ is a standard linear explicit substitution calculus encompassing both call-by-need and call-by-value~\cite{Accattoli:2014}. For instance, the usual by-value $\beta$-rule $(\lambda x.t)v\fred t\isub{v}{x}$ is derivable. In this respect, the reader may think of $\Terms[\mathcal F]$ as nothing but the plain simply-typed \mbox{$\lambda$-calculus}, and consider explicit substitutions as needed merely to represent computational graphs (which are obtained by restricting to the ground type $\tyR$ only). The situation is different in $\DualTerms[\mathcal F]$, in which linearity plays a key role for expressing backpropagation.

Given any $X\subseteq\beta\cup\eta\cup\ell$, we denote by $\red[X]$ the context closure of the union of the reduction relations in $X$, for any context $\metaCtxt$:
\begin{align*}
	\ctxt t\red[X]\ctxt u,&\text{ whenever } t\red[X] u.
\end{align*}
This means that we do not consider a specific evaluation strategy (call-by-value, call-by-name etc\dots), in order to be as general as possible and to allow a future analysis concerning a more efficient operational semantics.

A term $t$ is a \emph{$X$-normal form} if there is no term $u$ with $t\red[X] u$. If the set $X$ is a singleton $\{\iota\}$, we simply write $\red[\iota]$. If $X=\beta\cup\eta\cup\ell$, we write just $\fred$. If $k\in\Nat$, $\reds[k]{X}$ denotes a sequence of length $k$ of $\red[X]$, whereas $\reds{X}$ denotes a sequence arbitrary length (including null), \ie, $\reds{X}$ is the reflexive-transitive closure of $\red[X]$. Juxtaposition of labels means their union, so that, for example, $\red[\beta\eta]$ denotes the context closure of all reduction relations except \eqref{red:linear}.  

\emph{Structural equivalence} $\equiv$ is the smallest equivalence relation containing the context closure of the rules \eqref{eq:comm}--\eqref{eq:app2} in \reftab{structural_equivalence}. Morally, structural equivalence relates terms which would correspond to the same state of an abstract machine implementing the calculus, in which explicit substitutions represent pointers to memory. We refer to \cite{Accattoli:2014,AccattoliBarras} for more details. The crucial property of $\equiv$ is that it is a (strong) bisimulation with respect to $\fred$, which means in particular that it may always be postponed to the end of an evaluation (\refprop{postponement}). The following properties are standard and we give them without proof.


\begin{proposition}[Subject reduction]\label{prop:subjectReduction}
If $t\fred u$ or $t\equiv u$ and $\Gamma\vdash_{(z)} t:A$, then $\Gamma\vdash_{(z)} u:A$.
\end{proposition}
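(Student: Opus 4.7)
The plan is to induct on the derivation of $t \fred u$, respectively $t \equiv u$. Both relations are the context closure of finitely many base rules, so a standard replacement lemma (stating that in any typing derivation a subterm may be replaced by another of the same type without disturbing the derivation) reduces the proof to verifying the statement at the root, rule by rule. Two substitution lemmas are needed. The first, for exponential variables: if $\Gamma, x^{\exptype A} \vdash_{(z)} t : C$ and $\Gamma \vdash v : A$, then $\Gamma \vdash_{(z)} t\isub{v}{x} : C$. The second, for linear variables: if $\Gamma \vdash_{y} t : \tyR^d$ and $\Gamma \vdash_{(z)} v : \tyR$, then $\Gamma \vdash_{(z)} t\isub{v}{y} : \tyR^d$, the tag $(z)$ being transferred from $v$ because $y$ occurs linearly exactly once in $t$. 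Both lemmas go by induction on the typing derivation of $t$.

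With these in hand, the $\beta$-rules of Table~\ref{table:beta} follow by the usual inversion-and-reassembly pattern: \eqref{red:lambda} and \eqref{red:pair} invoke the explicit-substitution and pair-destructor typing rules; \eqref{red:subst} and \eqref{red:gc} use the first substitution lemma; \eqref{red:lin} uses the second; and the arithmetic rules \eqref{red:plus}, \eqref{red:plusPair}, \eqref{red:times} (together with \eqref{red:function}) are immediate by inspection of the fixed types of numerals and symbol applications. The $\eta$-rules of Table~\ref{table:eta} are discharged by constructing a typing of the expanded form using a fresh variable of the right kind: exponential in the arrow case of \eqref{red:etaLambda} and in \eqref{red:etaPair}, linear in the $\dual[d]$ case of \eqref{red:etaLambda}, where the fresh variable $y$ is typed by the axiom $\Gamma \vdash_y y : \tyR$, yielding $\Gamma \vdash_y ty : \tyR^d$ by $\dual[d]$-application and $\Gamma \vdash \lambda y.ty : \dual[d]$ by $\dual[d]$-introduction. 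The structural equivalence rules of Table~\ref{table:structural_equivalence} are routine: \eqref{eq:comm}--\eqref{eq:app2} amount to permuting or hoisting independent explicit substitutions, enabled by the stated free-variable side conditions, while \eqref{eq:dup} is contraction on exponential variables, justified by the first substitution lemma.

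The main obstacle is the linear factoring rule \eqref{red:linear}. Assuming both summands $(x^\dual\alpha t)\beta$ and $(x^\dual\alpha' t')\beta'$ are typed at $\tyR^d$, repeated inversion on the $\dual[d]$-application rule yields $x^\dual : \dual[d]$ and arguments of type $\tyR$ inside each substitution context; hence $t+t' : \tyR$ and a single outer $\dual[d]$-application reconstructs the type $\tyR^d$ of the contractum $x^\dual(t+t')\alpha\beta\alpha'\beta'$. The delicate point is tracking the unique linear variable $z$ when the whole term is typed in a sequent $\vdash_z$: since the $\dual[d]$-application rule places $z$ only in the $\tyR$-argument (never in the function $x^\dual$), the unique linear occurrence of $z$ must reside in exactly one of $\alpha, \beta, \alpha', \beta', t, t'$, and each case admits a matching reconstruction on the right-hand side via the typing rule for $+$, which propagates the $(z)$-annotation uniformly. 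The side condition that none of $\alpha, \beta, \alpha', \beta'$ binds $x^\dual$ is precisely what ensures that pulling $x^\dual$ outside all four contexts remains well-scoped, completing the case.
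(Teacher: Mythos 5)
Your overall strategy is the paper's: reduce to the root case via context closure, then inspect each rule, with auxiliary lemmas proved by induction on terms. Your explicit linear substitution lemma (needed for rule \eqref{red:lin}) is a point where you are more careful than the paper's sketch, which only states the exponential version. Two points do not go through as written, however.

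The more important one is your treatment of the linear factoring rule \eqref{red:linear}. You claim that when the redex is typed in a sequent $\vdash_z$, ``the unique linear occurrence of $z$ must reside in exactly one of $\alpha,\beta,\alpha',\beta',t,t'$''. This misreads the type system: the sum rule of \reftab{types} is \emph{additive} in the linear tag --- its caption stipulates that either all three sequents carry $z$ or none does --- so if $(x^{\dual}\alpha t)\beta+(x^{\dual}\alpha' t')\beta'$ is typed with $\vdash_z$ then \emph{both} summands are, and $z$ occurs linearly once in each of them. Your six-way case split therefore does not describe the situation that actually arises. The correct argument is in fact more uniform than yours: invert each summand separately to track $(z)$ into the argument side of its $\dual[d]$-application (or into the relevant substitution contexts), then re-type the contractum with one instance of the additive sum rule for $t+t'$ followed by one $\dual[d]$-application under $\alpha\beta\alpha'\beta'$. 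No uniqueness of the occurrence of $z$ is available, nor needed.

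Second, for the structural rule \eqref{eq:dup} you invoke only ``contraction \ldots\ justified by the first substitution lemma''. Substitution (identifying $y$ with $x$) handles the right-to-left direction, but $\equiv$ is symmetric, and the left-to-right direction requires what the paper calls a \emph{splitting} property: if $\Gamma,x^{\exptype A}\vdash t:B$ then $\Gamma,x^{\exptype A},y^{\exptype A}\vdash t_{\{y/x\}}:B$ for any partial renaming of occurrences of $x$ to a fresh $y$ (this also yields weakening). This lemma is not a consequence of your substitution lemmas and is missing from your proposal; it is proved by the same induction on $t$ and should be added. With these two repairs your proof matches the paper's.
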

\begin{lemma}[$\equiv$ is a strong bisimulation]\label{lemma:bisimulation}
Let $\iota$ be any reduction rule and let $t'\equiv t\red[\iota] u$, then there exists $t'\red[\iota] u'$ such that $u'\equiv u$.
\end{lemma}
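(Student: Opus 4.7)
The proof is by induction on the derivation of $t' \equiv t$. It suffices to treat the atomic case in which $t'$ and $t$ are related by a single application of one of the axioms \eqref{eq:comm}--\eqref{eq:app2} inside some context; by exploiting the contextual closure of $\fred$ we may further assume that the axiom is applied at the root of $t$. The plan is then to perform a case analysis on the pair formed by this axiom and the reduction rule $\iota$, together with the position of the $\iota$-redex in $t$.

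The easy sub-cases are those in which the equivalence pattern and the $\iota$-redex are disjoint or harmlessly nested: firing $\iota$ from $t'$ yields some $u'$ to which the same root-level equivalence still applies, giving $u' \equiv u$ immediately. This subsumes all situations in which $\iota$ fires strictly inside one of the subterms that the equivalence rule merely relocates, which covers most interactions with \eqref{eq:comm}, \eqref{eq:esub}, \eqref{eq:app1} and \eqref{eq:app2}.

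The genuinely interesting sub-cases are those where the axiom touches the shape of the redex. Here the key observation is that the substitution contexts $\alpha,\beta,\ldots$ appearing on the left-hand sides of the rules in \reftab{beta} and on both sides of the linear factoring rule \eqref{red:linear} are precisely there to absorb external substitutions that structural equivalence can push in or pull out. Consequently, the redex pattern is preserved when an explicit substitution is moved across a pairing, an application or a sum, or when two substitutions swap places or regroup: in each case, the reduction can still be fired from $t'$, only with a slightly different $\alpha$, and the two reducts are equal up to $\equiv$. The $\eta$-rules \eqref{red:etaLambda} and \eqref{red:etaPair} are handled analogously, since their left-hand sides are unconstrained by explicit substitution contexts.

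The delicate part will be the duplication axiom \eqref{eq:dup}, which introduces a fresh substitution $\esub{y^{\exptype A}}{u}$ and renames some occurrences of $x$ to $y$. When $\iota$ fires on a subterm involving the duplicated variable or substitution, one has to produce a matching reduction from $t'$ that fires on the appropriate copy and then re-close the diagram with a suitably adapted instance of \eqref{eq:dup}. This works because the exponential substitution rules \eqref{red:subst}, \eqref{red:gc}, \eqref{red:lin} preserve the value being substituted, and because the head of a linear factoring redex \eqref{red:linear} is just a variable $x^{\dual}$ whose identity is preserved by the axioms. The main obstacle I anticipate is precisely the bookkeeping in this case, compounded by the fact that \eqref{red:linear} involves four substitution contexts and two summands: one must verify that the term obtained by firing $\iota$ from $t'$ matches $u$ \emph{up to $\equiv$}, not syntactically, by carefully tracking which residual substitutions survive on either side. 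Once this verification is carried out rule-by-rule, strong bisimulation follows.
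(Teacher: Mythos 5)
Your plan coincides with the paper's own proof sketch: it likewise decomposes $\equiv$ into single $\equiv_1$-steps and closes the diagram by a case analysis of each structural axiom against each reduction rule, relying on the substitution contexts $\alpha,\beta,\dots$ built into the rule patterns to absorb the explicit substitutions that the equivalence displaces. The only ingredients the paper isolates explicitly and you leave implicit are two easy auxiliary facts that your rule-by-rule verification would need to record: a single $\equiv_1$-step preserves value-hood (so that the side conditions of \eqref{red:subst}, \eqref{red:gc} and \eqref{red:lin} can still be met on the $t'$ side), and it maps $\ctxt{x}$ to a term of the form $\ctxt[\metaCtxt']{\subctxt{x}}$ (needed because the left-hand side of \eqref{red:subst} contains an arbitrary one-hole context, not merely a substitution context).
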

\begin{proposition}[postponement of $\equiv$]\label{prop:postponement}
Let $X$ be any subset of the reduction rules in \reftab{reductions} (including the variant $\eqref{red:subst}^{\mathfrak n}$) and let $t \mathrel{(\red[X]\cup\equiv)^\ast} u$ with $k$ $X$-steps, then there exists $u'$ such that $t\reds[k]{X} u'\equiv u$.
\end{proposition}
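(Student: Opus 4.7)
The statement is a standard consequence of the fact that $\equiv$ is a strong bisimulation with respect to every reduction rule. I would proceed by a simple commutation argument: push each $\equiv$-step rightward past every subsequent $\red[X]$-step, until all $\equiv$-steps accumulate at the right end of the sequence. Since each push preserves the multiset of rules used (it trades one $X$-step for one $X$-step of the same kind), the total count of $X$-steps remains exactly $k$.

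More precisely, unfold $t \mathrel{(\red[X]\cup\equiv)^\ast} u$ as a finite sequence $t = s_0 \mathrel{R_1} s_1 \mathrel{R_2} \cdots \mathrel{R_n} s_n = u$, where each $R_i$ is either $\equiv$ or $\red[\iota_i]$ for some $\iota_i\in X$. Define a well-founded measure on such sequences, for instance the sum, over all $\equiv$-steps appearing in the sequence, of the number of $X$-steps strictly to their right. If this measure is $0$, all $\equiv$-steps occur after all $X$-steps, and by transitivity of $\equiv$ the suffix of $\equiv$-steps collapses into a single $\equiv$, giving $t \reds[k]{X} u' \equiv u$ with $u' = s_{n-m}$, where $m$ is the number of trailing $\equiv$-steps.

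If the measure is positive, locate an index $i$ with $R_i{=}\equiv$ and $R_{i+1}{=}\red[\iota]$ for some $\iota\in X$. Then $s_{i-1} \equiv s_i \red[\iota] s_{i+1}$, so by \reflemma{bisimulation} applied to $\iota$ there exists $s'_i$ with $s_{i-1} \red[\iota] s'_i \equiv s_{i+1}$. Replacing $s_i$ by $s'_i$ in the sequence yields a new mixed sequence with the same number of $X$-steps but with strictly smaller measure, because the $\equiv$-step at position $i$ has been moved one position to the right. Iterating terminates and yields the desired decomposition.

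\textbf{Potential obstacle.} The only subtlety is that the statement explicitly allows $X$ to include the restricted variant $\eqref{red:subst}^{\mathfrak n}$, and the bisimulation lemma must be applicable to this variant as well. This is unproblematic because the rewriting step $s_i \red[\iota] s_{i+1}$ is by a specific rule $\iota$, and the side condition ``$v$ is a numeral'' in $\eqref{red:subst}^{\mathfrak n}$ is invariant under $\equiv$ (structural equivalence cannot turn a value into a non-value, nor a numeral into a non-numeral), so the residual step produced by the bisimulation is still an instance of the same restricted rule. Hence the bookkeeping in the induction goes through unchanged, and the count of $X$-steps, including any restriction like $\eqref{red:subst}^{\mathfrak n}$, is preserved exactly.
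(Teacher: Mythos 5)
Your proof is correct and is essentially the paper's own argument: the paper proves this by induction on $k$ using \reflemma{bisimulation}, which is exactly the commutation-and-push-rightward scheme you spell out (your explicit measure and the remark on $\eqref{red:subst}^{\mathfrak n}$ are just more detailed bookkeeping of the same idea).
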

\begin{proposition}[values]\label{prop:values}
Given a closed term $t$ of type $A$, if $t$ is a $\beta$-normal form, then it is a value.
\end{proposition}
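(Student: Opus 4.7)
The plan is to proceed by structural induction on $t$, using a canonical-forms style argument driven by the type of $t$. Since $t$ is closed, the variable case is vacuous. If $t$ is a numeral or an abstraction (exponential or linear), it is immediately a value. If $t = \pair{t_1,t_2}$, then inversion on the typing rule for pairs gives closed, $\beta$-normal, well-typed $t_1$ and $t_2$; the induction hypothesis turns them into values $v_1, v_2$, so $t = \pair{v_1, v_2}$ is itself a value by the grammar.

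For every remaining syntactic form, I would derive a contradiction by exhibiting a $\beta$-redex inside $t$. The uniform strategy is: the ``head'' subterm is closed, $\beta$-normal, and typable, so by induction it is a closed value, and then a short sub-analysis on the shape of a closed value of the relevant type shows it must be exactly one of the shapes required by a $\beta$-rule. Concretely: if $t = t_1 t_2$, then $t_1$ is a closed value of arrow or $\dual$ type, hence of the form $\lambda x.s$, and rule \eqref{red:lambda} fires with empty substitution context; if $t = \unpair{s}{x}{y}{u}$, then $u$ is a closed value of product type, hence a pair $\pair{v_1,v_2}$, and rule \eqref{red:pair} fires; if $t = u\esub{x^{\exptype A}}{w}$ or $t = u\esub{x^\lintype}{w}$, then $w$ is a closed value (in particular, bearing no explicit substitutions since values contain none), so depending on whether $x$ occurs free in $u$ and on whether $x$ is exponential or linear, one of \eqref{red:subst}, \eqref{red:gc}, \eqref{red:lin} applies; if $t = t_1 + t_2$, $t_1 \cdot t_2$, or $f(\seq t)$, the subterms are closed values of ground or Euclidean type, hence numerals or finite nestings of pairs of such, and the corresponding arithmetic rule (\eqref{red:plus}, \eqref{red:plusPair}, \eqref{red:times}, or \eqref{red:function}) fires. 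Each case contradicts $\beta$-normality of $t$, so those forms are impossible and $t$ must have been a value.

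The main obstacle is the inner canonical-forms argument itself, which I would isolate as an auxiliary observation: \emph{if $v$ is a closed value and $\vdash v : A$, then the head constructor of $v$ is determined by $A$} — pairs for product types, $\lambda$-abstractions for arrow or $\dual$ types, numerals for $\tyR$, and iterated pair values for $\tyR^d$. This is a trivial induction on $v$ relying on the fact that the only typing rules whose conclusion has a value on the left assign the corresponding type connective. A small notational care is required because the calculus has two flavors of judgment $\vdash$ and $\vdash_z$: but for a closed term the linear variable $z$ cannot occur, so only the plain judgment $\vdash t:A$ is relevant. Finally, the $\eta$- and $\ell$-rules play no role in the statement, since the hypothesis only constrains $\beta$-normality.
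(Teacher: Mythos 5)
Your proof is correct and follows essentially the same route as the paper's: induction over the (syntax-directed) typing derivation, using the induction hypothesis plus a canonical-forms observation to show that every non-value form of the appropriate type exposes a $\beta$-redex, contradicting normality. The only differences are presentational — you make explicit the canonical-forms lemma, the emptiness of the substitution contexts in the fired redexes, and the irrelevance of the $\vdash_z$ judgment for closed terms, all of which the paper leaves implicit.
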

\begin{proposition}[Weak normalization]\label{prop:wn}
For every term $t$, and every set $X\subseteq\beta\ell$, there exists a $X$-normal form $u$ such that $t\reds{X} u$.
\end{proposition}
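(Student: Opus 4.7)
My plan is to prove the stronger statement that $\fred$ is strongly normalizing on simply-typed terms, from which weak normalization for any $X\subseteq\beta\ell$ follows (since any $X$-reduction sequence is a $\fred$-reduction sequence, and strong normalization is inherited by subrelations). The natural tool is the method of reducibility candidates à la Tait–Girard, adapted to the Church-style typing, the explicit substitutions and the linear-negation primitive. For each type $A$, I would define a family of reducibility sets: $\mathrm{Red}_\tyR$ is the set of strongly $\fred$-normalizing terms of type $\tyR$; at product type, $t\in\mathrm{Red}_{A\times B}$ iff both destructors $\unpair{x}{x}{y}{t}$ and $\unpair{y}{x}{y}{t}$ land in $\mathrm{Red}_A$ and $\mathrm{Red}_B$ respectively; at arrow type the usual application-stability clause applies ($t\in\mathrm{Red}_{A\to B}$ iff $tu\in\mathrm{Red}_B$ for all $u\in\mathrm{Red}_A$); and at linear-negation type $\dual[d]$ the analogous clause is used, with domain $\mathrm{Red}_\tyR$ and codomain $\mathrm{Red}_{\tyR^d}$, the latter being derived inductively via the product clause.

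The heart of the proof is the adequacy lemma: every typable term, instantiated with reducible values at its free variables (and, when the judgment has the form $\Gamma\vdash_z t:\tyR^d$, with a reducible term of type $\tyR$ substituted for the linear variable $z$), is itself reducible. The proof proceeds by induction on the typing derivation. The cases for abstraction, application, pairing, unpairing, explicit substitution, sum and product against the $\beta$-rules \eqref{red:lambda}--\eqref{red:times} are routine adaptations of the standard reducibility argument for the simply-typed linear substitution calculus. Closure of $\mathrm{Red}_A$ under $\equiv$ and under $\fred$-expansion by neutral terms relies on \reflemma{bisimulation} and \refprop{postponement}; closure of $\mathrm{Red}_\tyR$ under function-symbol applications uses rule \eqref{red:function} together with \refprop{values}, which forces closed $\beta$-normal terms of type $\tyR$ to be numerals.

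The main obstacle will be the linear factoring rule \eqref{red:linear}, whose left-hand side is a sum of two applications sharing a single linear head variable $x^{\dual}$ and whose typing requires the sub-linear judgment $\vdash_x$. To fit it into the reducibility framework I would exploit two facts: first, that an $\ell$-step strictly decreases the number of head occurrences of linear variables inside sums, so $\ell$ in isolation is strongly normalizing and cannot be responsible for any infinite reduction; and second, that $\mathrm{Red}_{\tyR^d}$ is closed under the term constructors appearing on the right-hand side of \eqref{red:linear}, since the merged summand $t+t'$ and merged substitution context remain well-typed with the same variable $x$ occurring linearly, and reducibility of the components propagates through the sum, substitution and linear-head-application clauses. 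Combined with the reducibility-candidates argument for $\beta$, this yields strong normalization of $\fred$; applying adequacy to the identity substitution on any derivation $\Gamma\vdash t:A$ then gives $t\in\mathrm{Red}_A$, whence $t$ admits a $\fred$-normal form, and the same reduction witnesses $X$-weak normalization for every $X\subseteq\beta\ell$.
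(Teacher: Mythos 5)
There is a genuine error at the very first step: you propose to prove that $\fred$ is strongly normalizing, but $\fred$ is by definition the contextual closure of $\beta\cup\eta\cup\ell$, and the $\eta$-rules \eqref{red:etaLambda} and \eqref{red:etaPair} are \emph{expansions}. Rule \eqref{red:etaLambda} applies to any term of arrow type (or $\dual$) and produces another term of arrow type, so it can be fired forever; hence $\fred$ is not strongly normalizing, and a term of arrow type does not even possess a $\fred$-normal form. Your reducibility sets cannot both consist of strongly $\fred$-normalizing terms and be closed under the constructions you need, so the adequacy lemma as stated is unprovable. The target must be $\red[\beta\ell]$ only --- which does suffice, since the statement quantifies over $X\subseteq\beta\ell$. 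Even after this repair, your treatment of linear factoring is not yet an argument: the fact that $\ell$ alone is terminating does not show it ``cannot be responsible for any infinite reduction'' once combined with $\beta$, because strong normalization is not preserved under unions of relations; one needs either a combined measure or a postponement/commutation lemma between $\ell$ and the $\beta$-rules, neither of which you supply. Similar care is required for rule \eqref{red:subst}, which duplicates the substituted value and so can increase any naive redex count.

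You should also be aware that you are taking a far heavier route than the paper, which explicitly declines to prove strong normalization (calling the proof ``more involved and uninteresting for our purposes'') and instead establishes weak normalization directly. Its proof assigns to $t$ a lexicographic weight $W(t)$ recording, for each type size from largest to smallest, the number of \eqref{red:lambda}/\eqref{red:pair}/\eqref{red:subst}-redexes whose substituted term has a type of that size, together with a basic weight for the remaining rules, and shows that every term that is not $X$-normal admits \emph{some} $X$-step strictly decreasing $W$; well-foundedness of the lexicographic order then produces a normal form with no reducibility machinery at all. If you wish to keep your approach, restrict the claim to $\beta\ell$ and do the real work on the $\ell$/$\beta$ interaction; as written, the proof does not go through.
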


The $\beta\ell$-rewriting enjoys also strong normalization, even modulo $\equiv$, but the proof is more involved and uninteresting for our purposes, so we omit it. The strong normalization is however immediate if we restrict  the contraction rule $\eqref{red:subst}$ to numerals, a property which will be useful in the sequel.
\begin{lemma}\label{lemma:size_shrinks}
For any $t\reds{\eqref{red:subst}^{\mathfrak n}\eqref{red:gc}\eqref{red:plusPair}\eqref{red:plus}\eqref{red:times}}u$, the number of steps in the sequence is $O(\size t)$. 
\end{lemma}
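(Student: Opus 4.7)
The plan is to exhibit a non-negative integer-valued measure $\mu$ on terms such that every step in the restricted rewriting system strictly decreases $\mu$ and such that $\mu(t) = O(\size t)$; the desired bound on the length of any reduction from $t$ follows immediately.

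Concretely, I would set $\mu(t) := V(t) + E(t) + S(t) + M(t) + 2\,P(t)$, where $V(t)$ counts the non-numeral variable occurrences in $t$, $E(t)$ the explicit substitutions, $S(t)$ the occurrences of $+$, $M(t)$ the occurrences of $\cdot$, and $P(t)$ the pair constructors $\pair{\cdot,\cdot}$. All of these counts are additive under context filling, so it is enough to check that the top-level version of each rule in $\{\eqref{red:subst}^{\mathfrak n},\eqref{red:gc},\eqref{red:plusPair},\eqref{red:plus},\eqref{red:times}\}$ decreases $\mu$ by at least one. Inspection is routine: rule $\eqref{red:subst}^{\mathfrak n}$ replaces one $x$ by a numeral, so $\Delta V = -1$ and the other counts are preserved; rule $\eqref{red:gc}$ discards a substitution together with its value body, so $\Delta E \leq -1$ and no other count grows; rules $\eqref{red:plus}$ and $\eqref{red:times}$ each remove their operator, giving $\Delta S = -1$ or $\Delta M = -1$; finally, rule $\eqref{red:plusPair}$ turns two pair constructors and one $+$ into one pair constructor and two $+$'s, so $\Delta P = -1$ and $\Delta S = +1$, and the weighting $2P$ gives $\Delta \mu = -1$.

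Once this is established, the bound on $\mu$ is cheap: each symbol of $t$ is counted by at most one of $V, E, S, M, P$, so $V(t) + E(t) + S(t) + M(t) + P(t) \leq \size t$ and hence $\mu(t) \leq 2\size t$. Any reduction sequence $t \reds{X} u$ with $X = \{\eqref{red:subst}^{\mathfrak n},\eqref{red:gc},\eqref{red:plusPair},\eqref{red:plus},\eqref{red:times}\}$ therefore has length at most $\mu(t) = O(\size t)$, as required.

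The only subtle point is choosing a measure that strictly decreases on $\eqref{red:plusPair}$: this rule preserves the total number of symbols, since it trades a pair constructor for a $+$. Any potential function that is just an unweighted sum of syntactic-category counts is bound to fail here, and the key observation is that pair constructors must be penalized more heavily than $+$. Weighting them by $2$ (with all other categories weighted by $1$) is the simplest choice that yields a strict net decrease on $\eqref{red:plusPair}$ while remaining linear in $\size t$.
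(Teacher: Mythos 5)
Your proof is correct, and it follows the same overall strategy as the paper's: exhibit a non-negative potential that is $O(\size t)$ and strictly decreases under every rule in the set. The difference is in the choice of measure. The paper uses $\bisize{t}$, the number of symbols with each variable occurrence counted twice; this handles $\eqref{red:subst}^{\mathfrak n}$ (a weight-$2$ variable occurrence becomes a weight-$1$ numeral) and trivially handles $\eqref{red:gc}$, $\eqref{red:plus}$, $\eqref{red:times}$. However, $\bisize{\cdot}$ is \emph{invariant} under $\eqref{red:plusPair}$, which trades two pair constructors and one $+$ for one pair constructor and two $+$'s without changing the symbol count — and indeed the appendix restates the lemma and carries out the proof without $\eqref{red:plusPair}$, even though the rule appears in the statement in the main text. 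Your measure $V+E+S+M+2P$, which penalizes pair constructors twice as much as sums, closes exactly this gap: it strictly decreases on $\eqref{red:plusPair}$ as well, while remaining bounded by $2\size t$. So your argument actually covers the statement as literally given more completely than the paper's own proof; the "subtle point" you flag at the end is precisely the point where the paper's measure fails.
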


\paragraph{Denotational semantics.} The cartesian closed category of sets and functions gives a denotational model for this calculus. Types are interpreted as sets, as follows:
\begin{align*}
\sem\tyR&:= \mathbb R&
\sem{A\rightarrow B}&:= \text{set of functions from }\sem{A}\text{ to }\sem{B}\\
\sem{A\times B}&:= \sem{A}\times \sem{B}&
\sem{\dual[d]}&:= \text{set of linear maps from }\mathbb R\text{ to }\mathbb R^d
\end{align*}
Notice that the restriction to linear functions in $\sem{\dual[d]}$ is such that rule ~\eqref{red:linear} is sound. The interpretation of a judgment $\Gamma \vdash t :A$ is a function $\sem t^\Gamma$ from the cartesian product $\sem \Gamma$ of the denotations of the types in $\Gamma$ to $\sem A$. The interpretation of a judgment $\Gamma \vdash_z t :\tyR^d$, is given as a function $\sem t^{\Gamma;z}$ associating with every $\seq g\in\sem\Gamma$ a linear map from $\mathbb R$ to $\mathbb R^d$. 
The definition is by induction on $t$ and completely standard (explicit substitution is functional composition). 
We omit the superscript $\Gamma$ (or $\Gamma;z$) if irrelevant. This interpretation supposes to have associated each function symbol in $\mathcal F$ with a suitable map over real numbers. By a standard reasoning, one can prove that:
\begin{proposition}[Semantic soundness]\label{prop:semanticSoundness}
Let $\Gamma \vdash_{(z)} t:A$, then $t\fred u$ or $t\equiv u$, gives $\sem{t}=\sem{u}$.
\end{proposition}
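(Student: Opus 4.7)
My plan is to proceed by induction on the derivation that $t\fred u$ or $t\equiv u$. Since both relations are defined as the context closure of a set of base rules, the inductive step reduces to a \emph{contextual soundness lemma}: if $\sem{t}=\sem{t'}$ at the appropriate type, then $\sem{\ctxt t}=\sem{\ctxt{t'}}$ for every one-hole context $\metaCtxt$. This lemma is proved by a routine induction on $\metaCtxt$, exploiting the compositional nature of the semantics: each syntactic constructor is interpreted by a set-theoretic operation (function application, pairing, projection, sum, product, linear application, functional composition for explicit substitution) that is a function of the denotations of its immediate subterms, hence it preserves equality. One must simultaneously handle both forms of judgment ($\Gamma\vdash t:A$ and $\Gamma\vdash_z t:\tyR^d$), which is straightforward since the typing rules of Table~\ref{table:types} already track which holes allow the linear marker $z$.

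It then suffices to verify the statement for each base rule. Most of the $\beta$-rules are entirely standard: \eqref{red:lambda}, \eqref{red:pair}, \eqref{red:subst}, \eqref{red:lin} all follow from the equation $\sem{t\esub{x}{u}}(\seq g)=\sem{t}(\seq g,\sem{u}(\seq g))$, which in turn is the standard semantic reading of substitution in a cartesian closed category of sets. Rule \eqref{red:gc} is sound because $x\notin\fv t$ implies that $\sem t$ does not depend on its $x$-component. The arithmetic rules \eqref{red:plus}, \eqref{red:plusPair}, \eqref{red:times} (and their extension \eqref{red:function}) hold definitionally, since the relevant function symbols are interpreted by their set-theoretic counterparts. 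The structural equivalences of Table~\ref{table:structural_equivalence} are equally routine: \eqref{eq:comm} and \eqref{eq:esub} follow from the fact that independent substitutions commute as function compositions; \eqref{eq:dup} follows because evaluating the same term twice yields the same value and the semantic interpretation does not distinguish sharing from duplication; \eqref{eq:app1} and \eqref{eq:app2} express the fact that substitution commutes with each binary constructor in its non-binding coordinate, an immediate consequence of compositionality.

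The key step — and the only one that really uses the setup of the paper — is the linear factoring rule \eqref{red:linear}. Here we must show
\[
\sem{(x^{\dual}\alpha t)\beta+(x^{\dual}\alpha' t')\beta'} \;=\; \sem{x^{\dual}(t+t')\alpha\beta\alpha'\beta'}.
\]
After unfolding the substitution contexts (which are just compositions of semantic substitutions and are sound by the preceding cases, together with \refprop{postponement}-style rearrangements), both sides reduce to an application of the map $\sem x\in\sem{\dual[d]}$ to elements of $\Real$. The left-hand side produces $\sem x(a)+\sem x(a')$ with $a=\sem t$, $a'=\sem{t'}$ (substitutions applied), while the right-hand side produces $\sem x(a+a')$. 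Equality holds precisely because $\sem{\dual[d]}$ was defined as the set of $\Real$-linear maps $\Real\to\Real^d$; this is exactly the reason linearity was tracked in the type system. The typing discipline ensures that $x$ appears linearly in each summand, so that this reasoning applies inside any admissible context.

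Finally, the $\eta$-rules \eqref{red:etaLambda} and \eqref{red:etaPair} are handled by extensionality in \textbf{Set}: two functions (resp.\ two inhabitants of a cartesian product) are equal iff they agree on every argument (resp.\ have equal components), which yields $\sem t=\sem{\lambda y.ty}$ and the analogous identity for pairs. The expected main obstacle is simply bookkeeping in the \eqref{red:linear} case — making sure the substitution contexts $\alpha,\beta,\alpha',\beta'$ are correctly pushed through the sum using the bisimulation property \reflemma{bisimulation} — but no genuinely new ingredient beyond linearity of $\sem{\dual[d]}$ is required.
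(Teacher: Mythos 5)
Your proposal is correct and matches what the paper intends: the paper itself omits this proof, declaring it ``completely standard,'' and your argument (induction on the context closure, rule-by-rule verification, with the only non-routine case being \eqref{red:linear}, which holds precisely because $\sem{\dual[d]}$ consists of linear maps) is exactly the standard reasoning being invoked. No gaps.
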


The semantic soundness gives as a by-product a light form of confluence on Euclidean types\footnote{More sophisticated notions of confluence modulo an extension of $\equiv$ hold for the reductions in \reftab{reductions}, but we avoid to discuss this point here because inessential for our purposes.}.
\begin{corollary}\label{cor:unicity_value}
Let $\vdash t:\tyR^d$ and $v$ and $v'$ be $\beta$-normal forms s.t.~$t\freds v$ and $t\freds v'$. Then $v=v'$.
\end{corollary}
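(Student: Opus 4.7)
The plan is to combine three results already established in the excerpt: subject reduction, the characterization of closed $\beta$-normal forms as values, and semantic soundness. Concretely, I would proceed in three steps, the first two of which are type-theoretic and the third of which exploits the denotational semantics.

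First, by iterated application of Proposition~\ref{prop:subjectReduction} along the reduction sequences $t\freds v$ and $t\freds v'$, I would conclude $\vdash v:\tyR^d$ and $\vdash v':\tyR^d$. Since $v$ and $v'$ are closed $\beta$-normal forms, Proposition~\ref{prop:values} applies and tells us both are values. Next I would show, by induction on $d$, that a closed value of type $\tyR^d$ must be a right-associated tuple of numerals $\pair{\num{r_1},\pair{\num{r_2},\ldots,\num{r_d}}\ldots}$ with $r_i\in\mathbb{R}$. Indeed, inspecting the value grammar: a variable is ruled out because $v$ is closed, and a $\lambda$-abstraction is ruled out since its typing rules force an arrow or linear-negation type, neither of which matches $\tyR^d$. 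Hence $v$ is either a numeral (when the type is $\tyR$) or a pair $\pair{v_1,v_2}$ whose typing forces $v_1:\tyR$ and $v_2:\tyR^{d-1}$; the induction hypothesis concludes. Therefore $v=\pair{\num{r_1},\ldots,\num{r_d}}$ and $v'=\pair{\num{r'_1},\ldots,\num{r'_d}}$ for suitable real numbers.

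Finally, applying Proposition~\ref{prop:semanticSoundness} along both reduction sequences gives $\sem{t}=\sem{v}$ and $\sem{t}=\sem{v'}$, so $\sem{v}=\sem{v'}$. By the standard interpretation, $\sem{\pair{\num{r_1},\ldots,\num{r_d}}}=(r_1,\ldots,r_d)\in\sem{\tyR^d}=\mathbb{R}^d$, so the equality of denotations forces $r_i=r'_i$ for every $i$, hence $v=v'$.

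There is no serious obstacle: the work has been done upstream by the three cited propositions, and the only point worth double-checking is that the values exhibited above are genuinely syntactic equalities (not just semantic), which is immediate since right-associated tuples of numerals have a unique syntactic shape in the grammar of terms. One might also worry that $v$ could still hide $\eta$- or $\ell$-redexes, but the explicit form $\pair{\num{r_1},\ldots,\num{r_d}}$ contains no redexes of any kind, so this concern dissolves on its own.
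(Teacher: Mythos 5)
Your proof is correct and follows essentially the same route as the paper's: the paper's own argument is exactly ``closed $\beta$-normal forms of type $\tyR^d$ are values (Prop.~\ref{prop:values}), the interpretation is injective on such values, and semantic soundness (Prop.~\ref{prop:semanticSoundness}) forces $\sem v=\sem{v'}$.'' You have merely filled in the details the paper leaves implicit, namely the appeal to subject reduction and the induction showing that a closed value of type $\tyR^d$ must be a tuple of numerals.
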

\begin{proof}
It is not hard to show that the interpretation is injective on tuples of numerals, \ie, for $v,v':\tyR^d$ values, $\sem v = \sem{v'}$ implies $v=v'$. The statement then follows from Props.~\ref{prop:values} and~\ref{prop:semanticSoundness}.
\end{proof}

From now on, we will suppose that:
\begingroup
\leqnos
  \begin{equation}
	\tag{$\star$}\label{hyp:diffentiability} \text{all function symbols in $\mathcal F$ are associated by $\sem{-}$ with differentiable maps on real numbers.}
  \end{equation}
\endgroup
As mentioned at the end of Sect.~\ref{subsect:symbolic}, this hypothesis is essentially cosmetic, in that it allows us to use actual gradients and to avoid the more technical notion of subgradient (see also Sect.~\ref{sect:concl}).

Any term $x_1^{\exptype{\tyR}},\dots,x_n^{\exptype{\tyR}}\vdash t:\tyR$ is denoted by an $n$-ary map $\sem t$ over $\mathbb R$. Since differentiable functions compose, then \eqref{hyp:diffentiability} implies that $\sem t$ is also differentiable, if $t$ contains only variables of type $\tyR$. In the general case, one can use Prop.~\ref{prop:wn} in order to $\beta$-reduce $t$ into a $\beta$-normal form $u$ containing only variables of type $\tyR$. Then by Prop.~\ref{prop:semanticSoundness} $\sem t=\sem u$ and so $\sem t$ is differentiable. This justifies the following notation, given $x_1^{\exptype{\tyR}},\dots,x_n^{\exptype{\tyR}}\vdash t:\tyR$ and a vector $\seq r\in\mathbb R^n$:
\begin{equation}\label{eq:syntactalGradient}
\nabla t (\seq r) := \pair{\num{\partial_1 \sem{t}(\seq r)},\dots,\num{\partial_n \sem{t}(\seq r)}}.
\end{equation}
Section \ref{sect:results} gives an efficient way of computing $\nabla t(\seq r)$ based on the syntactic structure of $t$.

\section{The Backpropagation Transformation}\label{sect:results}

\begin{table}
\begin{subtable}{\textwidth}
	\begin{align*}
		\revder[d]{\tyR} &:= \tyR\times\dual[d] &
		\revder[d]{A\to B} &:= \revder[d]{A}\to\revder[d]{B} &
		\revder[d]{A\times B} &:= \revder[d]{A}\times\revder[d]{B}
	\end{align*}
\caption{The action of the transformation on types.}\label{table:reverseTypes}
\end{subtable}
\begin{subtable}{\textwidth}
	\begin{align*}
		\revder[d]{x^{\exptype A}} &:= x^{\exptype{\revder[d]{A}}} \\
		\revder[d]{\lambda x^{\exptype A}.t} &:= \lambda x^{\exptype{\revder[d]{A}}}.\revder[d]{t} \\
		\revder[d]{tu} &:= \revder[d]{t}\revder[d]{u} \\
		\revder[d]{\pair{t,u}} &:= \pair{\revder[d]{t},\revder[d]{u}} \\
		\revder[d]{\unpair t {x^{\exptype{A}}} {y^{\exptype{B}}} u} &:= \unpair{\revder[d]{t}} {x^{\exptype{\revder[d] A}}} {y^{\exptype{\revder[d] B}}} {\revder[d]{u}} \\
		\revder[d]{t\esub{x^{\exptype{A}}}{u}} &:= \revder[d]{t}\esub{x^{\exptype{\revder[d] A}}}{\revder[d]{u}} \\
		\revder[d]{\num r} &:= \pair{\num r,\lambda a^\tyR.\num{\seq 0}} \\
		\revder[d]{t+u} &:= \pair{x+y,\lambda a^\tyR. (x\dualvar a+y\dualvar a)}\esub{\pair{x^{\exptype{\tyR}},{x\dualvar}^{\exptype{\dual[d]}}}}{\revder[d] t}\esub{\pair{y^{\exptype{\tyR}},{y\dualvar}^{\exptype{\dual[d]}}}}{\revder[d] u}\\
		\revder[d]{f(\seq t)} &:= \unpair{\pair{f(\seq x)\ ,\ \lambda a^\tyR.\sum_{i=1}^k x_i\dualvar\left(\partial_i f(\seq x)\cdot a\right)}}{\seq x^{\exptype{\tyR}}}{\seq {x\dualvar}^{\exptype{\dual[d]}}}{\revder[d]{\seq t}}
	\end{align*}
\caption{The action of the transformation on terms.
In the definition of $\revder{f(\seq t)}$, the sequences $\seq t,\seq x,\seq x\dualvar$ have all length $k$ equal to the arity of $f$ and the notation $\esub{\pair{\seq x^{\exptype{\tyR}},\seq {x\dualvar}^{\exptype{\dual[d]}}}}{\revder{\seq t}}$ stands for $\esub{\pair{x_1^{\exptype{\tyR}},{x_1\dualvar}^{\exptype{\dual[d]}}}}{\revder{t_1}}\cdots\esub{\pair{x_k^{\exptype{\tyR}},{x_k\dualvar}^{\exptype{\dual[d]}}}}{\revder{t_k}}$. As mentioned in \refsect{us}, the variables with superscript $\ast$ correspond to \emph{backpropagators} in AD terminology~\cite{Pearlmutter:2008}.
}\label{table:reverseTerms}
\end{subtable}
\caption{The reverse gradient $\revderSymbol[d]$ relative to $\mathbb R^d$, for an arbitrary natural number $d$. }
\label{table:reverse}
\end{table}
Let us fix two sets of function symbols $\mathcal F,\mathcal F'$ such that $\mathcal F\subseteq \mathcal F'$,  together with partial functions $\partial i:\mathcal F\longrightarrow\mathcal F'$,
for each positive integer $i$ such that, for all $f\in\mathcal F$ of arity $k$ and for all $1\leq i\leq k$, $\partial_i f$ is defined and its arity is equal to $k$. In addition to the hypothesis \eqref{hyp:diffentiability}  in Sect.~\ref{sect:prel}, we also suppose:
\begingroup
\leqnos
\begin{equation}
\tag{$\star\star$}
\label{hyp:partialderiv}
 \sem{\partial_i f}:= \partial_i \sem f
\end{equation}
\endgroup

For any $d\in\mathbb N$, \reftab{reverse} defines a program transformation $\revderSymbol[d]$ from $\Terms[\mathcal F]$ to $\DualTerms[\mathcal F']$, called the \emph{reverse gradient relative to $\mathbb R^d$}. 
Given a term $\seq x^{\exptype{\tyR}}\vdash t:\tyR$, Cor.~\ref{cor:gradient} proves that $\revder[d]{t}$ computes the gradient of $t$ in at most $O(m+\size G)$ steps, where $m$ is the cost of evaluating $t$ to a computational graph $G$. Moreover, Cor.~\ref{cor:gradient} is a consequence of  Th.~\ref{th:main theorem}, proving that actually one can reduce $\revder[d]{t}$ to a term expressing the backpropagation algorithm applied to \emph{any} computational graph $G$ $\beta$-equivalent to $t$ (indeed, to a single $\lambda$-term $t$ one can associate different computational graphs, with different sizes and with different degrees of sharing).  \refsubsect{first-order} sets the framework needed to state and prove our results. Grammar \eqref{def:ground_term} defines the notion of ground term corresponding to a computational graph and Def.~\ref{def:bp} gives the computational graph associated with the backpropagation applied to ground terms. Prop.~\ref{prop:backprop_sound} formally proves the soundness of this algorithm. \refsubsect{ho} then moves to the more general case of $\lambda$-terms, giving the soundness of our reverse gradient transformation $\revderSymbol[d]$.

Let us underline that, in the last line of \reftab{reverseTerms}, the index $d$ of $\revderSymbol[d]$ is totally independent from the arity $k$ of the function symbol $f$ and from the indexes $i\leq k$ of the variables $x\dualvar_i$ tagging the sum introduced by $\revderSymbol[d]$. The fact that $d$ may be arbitrary (it plays a role only in the last remark of \refsubsect{ho}) is a crucial feature allowing its compositionality, in contrast with the definition of $\bp{G}$ which has to refer to $\seq x$ containing the free variables in $G$  (see the case $\bp{f(\seq y)}$ in Def.~\ref{def:bp}). This being said, we henceforth omit the index $d$.

\subsection{Backpropagation on Computational Graphs}\label{subsect:first-order}

We restrict $\Terms[\mathcal F]$ to terms of type $\tyR$ not containing higher types, deemed \emph{ground terms}, as follows:
\begin{align}\label{def:ground_term}
F,G&::=x^{\exptype{\tyR}}\gsep 
F\esub{x^{\exptype{\tyR}}}{G} \gsep 
f(x_1^{\exptype{\tyR}},\ldots,x_k^{\exptype{\tyR}})
\gsep \num r\gsep F+G
\end{align}
A term $f(G_1,\ldots,G_k)$ is considered as syntactic sugar for $f(x_1,\ldots,x_k)\esub{x_1}{G_1}\dots\esub{x_k}{G_k}$. 
Fig.~\ref{fig:CompGraph} and~\ref{fig:Backprop} give examples of ground terms with the associated computational graph. Notice that the type system of \reftab{types} assigns to any ground term $G$ a type judgment $x_1^{\exptype{\tyR}},\dots,x_n^{\exptype{\tyR}}\vdash G:\tyR$ for a suitable set of variables, so $\nabla G(\seq r)$ is defined by~\eqref{eq:syntactalGradient} and the hypothesis \eqref{hyp:diffentiability} and \eqref{hyp:partialderiv}.

%

We now define the transformation $\mathbf{bp}$ implementing symbolic backpropagation, as described on hypergraphs \eg\ in \cite[Sect.~3]{VanIwaarden}. We first introduce the following notation, evaluating some trivial sums on the fly: given two ground terms $F_0,F_1$,
\[
	F_0\oplus F_1:=
		\begin{cases}
		F_{i}&\text{if $F_{i-1}=\num 0$,}\\
		F_0+F_1&\text{otherwise}
		\end{cases}
\]

\begin{definition}\label{def:bp}
Let $\Gamma =x_1^{\exptype{\tyR}},\ldots,x_n^{\exptype{\tyR}}$. Given a ground term of type $\Gamma\vdash G:\tyR$ and a fresh variable $a^\tyR$, we define the term
$$\Gamma,a^{\exptype{\tyR}}\vdash\bp G:\tyR\times\tyR^n$$
by induction on $G$, as follows. The definition is based on the inductive invariant that
\[
	\bp G=\pair{G_0,\pair{G_1,\ldots,G_n}\alpha}\beta
\]
where 
$\alpha$ and $\beta$ are substitution contexts and $a$ does not appear free in $G_0$ or in $\beta$.

\begin{itemize}
	\item $\bp{x_i}:=\pair{x_i,\pair{\num 0,\ldots,a,\ldots,\num 0}}$, where $a$ appears at the $i$-th position in the tuple.
	\item $\bp{\num r}:=\pair{\num r,\pair{\num 0,\ldots,\num 0}}$.
	\item Let $f$ be of arity $k$ and $\seq y$ a subsequence of length $k$ of $\seq x$. Then,
	\[
		\bp[\seq x,a]{f(\seq y)}:=
		\pair{
			f(\seq y),
			\pair{
			\num 0,\dots,\num 0,
			\partial_1 f(\seq y)\cdot a,
			\num 0,\dots,\num 0,
			\partial_k f(\seq y)\cdot a,
			\num 0,\dots,\num 0
			}
		},
	\]
	where the non-zero terms in the tuple are at the positions determined by $\seq y$ within $\seq x$. 
	\item Let $G=F'\esub{z^{\exptype{\tyR}}}{F''}$ and suppose that (with $b^{\exptype\tyR}$ a new variable distinct from $a^{\exptype\tyR}$)
	\begin{align*}
		\bp[\seq x,z,a]{F'} &= \pair{F_0',\pair{F_1',\ldots,F_n',H}\alpha'}\beta',&
		\bp[\seq x,b]{F''} &= \pair{F_0'',\pair{F_1'',\ldots,F_n''}\alpha''}\beta''.
	\end{align*}
	Notice that for every $i\leq n$ we can suppose by renaming that the set of variables of $F_i'$ bound by $\alpha',\beta'$ is disjoint from the set of variables of $F_i''$
	 bound by $\alpha'',\beta''$. Also, notice that $F_0''$ has type $\seq x^{\exptype{\tyR}}\vdash F_0'':\tyR$.  So we can define:
	\begin{multline*}
	\bp[\seq x,a]{F'\esub{z^{\exptype{\tyR}}}{F''}}:=\\
	\begin{cases}
	\pair{F_0',\pair{F_1',\ldots,F_n'}\alpha'}\beta'\esub{z^{\exptype{\tyR}}}{F_0''}\beta''&\text{if }H= \num 0,\\
	\pair{F_0',\pair{F_1'\oplus F_1'',\ldots,F_n'\oplus F_n''}\alpha''\esub{b^{\exptype\tyR}}{H}\alpha'}\beta'\esub{z^{\exptype{\tyR}}}{F_0''}\beta''&\text{otherwise}.
	\end{cases}
	\end{multline*}

	\item Let $G=F'+ F''$ and  suppose that
	\begin{align*}
		\bp[\seq x,a]{F'} &= \pair{F_0',\pair{F_1',\ldots,F_n'}\alpha'}\beta',&
		\bp[\seq x,a]{F''} &= \pair{F_0'',\pair{F_1'',\ldots,F_n''}\alpha''}\beta'',
	\end{align*}
	then, 
	$\bp[\seq x,a]{F'+ F''}:=\pair{F_0'+ F_0'',\pair{F_1'\oplus F_1'',\ldots,F_n'\oplus F_n''}\alpha'\alpha''}\beta'\beta''$.
\end{itemize}
\end{definition}
\begin{lemma}\label{lemma:backprop_size}
Let $G$ be a ground term with $\fv G=\seq x$, then $\size{\bp[\seq x, a] G}= O(\size G)$.
\end{lemma}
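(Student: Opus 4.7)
The plan is to proceed by structural induction on the ground term $G$, proving a bound of the form $\size{\bp[\seq x,a] G} \leq C \cdot \size G$, where the constant $C$ depends on $n := |\seq x|$ and on the maximum arity $K$ of the function symbols in $\mathcal F$. These parameters are absorbed into the implicit $O$-notation (just as the paper treats $K$ as bounded in the complexity analysis of forward mode in Sect.~\ref{sect:tutorial}).

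First I would dispatch the three base cases. For a variable, $\bp[\seq x,a]{x_i}$ is a pair containing $x_i$ and an $n$-tuple of constants (all $\num 0$ except $a$ at position $i$), so its size is $O(n)$. The numeral case $\bp[\seq x,a]{\num r}$ is analogous, giving size $O(n)$. For a function symbol application $\bp[\seq x,a]{f(\seq y)}$ with $f$ of arity $k \leq K$, the result is a pair whose second component is an $n$-tuple with $n-k$ zero entries and $k$ non-zero entries of the form $\partial_i f(\seq y)\cdot a$, each of size $O(k)$; the total size is $O(n + k^2) = O(1)$ under our constancy assumptions on $n$ and $K$.

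Next I would treat the two inductive cases. The key observation, extracted by inspection of Def.~\ref{def:bp}, is that $\bp G$ is assembled by \emph{repositioning} (not duplicating) each of the components $F_0', F_1', \ldots, F_n', H, \alpha', \beta'$ of $\bp F'$ and each of $F_0'', F_1'', \ldots, F_n'', \alpha'', \beta''$ of $\bp F''$, plus at most $O(n)$ extra symbols (at most $n$ fresh $+$'s introduced by the $\oplus$ combinators, the outer pair, and $O(1)$ new substitution binders such as $\esub{z}{F_0''}$ and $\esub{b}{H}$). Concretely, this yields the inequality
\[
	\size{\bp[\seq x,a] G} \;\leq\; \size{\bp[\seq x,z,a] {F'}} + \size{\bp[\seq x,b] {F''}} + O(n)
\]
for $G = F'\esub{z}{F''}$ (in both sub-cases $H=\num 0$ and $H\neq\num 0$) and analogously for $G = F' + F''$. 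Applying the induction hypothesis and choosing $C$ large enough to absorb the $O(n)$ overhead into $C\cdot\size G$ (using $\size G \geq \size{F'} + \size{F''} + 1$) closes the induction.

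The main obstacle, and the step requiring the most care, is verifying that \emph{no} component of $\bp F'$ or $\bp F''$ is syntactically duplicated in $\bp G$: any such duplication would cause the bound to degrade to quadratic. In the explicit substitution case, $F_0''$ is \emph{moved} into $\esub{z}{F_0''}$ rather than being both retained as the primal of the inner term and also copied outside, and similarly the adjoint $H$ tagged to the bound variable $z$ is relocated into $\esub{b}{H}$ rather than duplicated. The $\oplus$ combinators merge components pairwise without duplication (and in fact shrink the term whenever one side is $\num 0$). Once this is verified, the additive recurrence above holds and the lemma follows.
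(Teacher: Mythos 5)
There is a genuine gap, and it sits exactly where you park the parameters ``absorbed into the implicit $O$-notation'': the hidden constant may depend on the maximal arity $K$ (the set $\mathcal F$ is finite, so $K$ really is a constant of the language), but it must \emph{not} depend on $n=|\seq x|$, which grows with $G$. The entire point of this lemma --- and of backpropagation versus forward mode --- is that the bound is uniform in the number of inputs; with an $n$-dependent constant the statement degenerates to $\size{\bp[\seq x,a]{G}}=O(n\,\size G)$, which is the forward-mode cost and would void the $O(\size G)$ reduction length claimed in Proposition~\ref{prop:backprop_sound}. Your additive recurrence $\size{\bp[\seq x,a]{G}}\leq\size{\bp[\seq x,z,a]{F'}}+\size{\bp[\seq x,b]{F''}}+O(n)$ is correct, and so is the non-duplication observation backing it, but unrolling it over the $\Theta(\size G)$ inductive steps (and the base cases, each of which already emits an $n$-tuple of size $\Theta(n)$) yields only $O(n\,\size G)$. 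Non-duplication is necessary but not sufficient.

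What is missing is a per-component amortization, which is how the paper proceeds: writing $\bp[\seq x,a]{G}=\pair{G_0,\pair{G_1,\ldots,G_n}\alpha}\beta$, one shows by induction that $\size{G_0\beta}=O(\size G)$, that $\size{G_i}=O(k(\#_{x_i}G+1))$ where $\#_{x_i}G$ is the number of occurrences of $x_i$ in $G$, and that $\size{G_i\alpha}=O(\size G)$. Summing the middle bound over $i$ gives $\sum_{i}\size{G_i}=O(\size G+n)$, and the residual $n$ is absorbed by the observation $n\leq\size G$ when $\seq x=\fv G$ (the paper proves the strengthened claim $O(n+\size G)$ for $\seq x\supseteq\fv G$ precisely to make this induction go through). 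The point your per-step $O(n)$ bound misses is that a fresh $+$ is created at component $i$ only when $x_i$ occurs in \emph{both} subterms, because $\oplus$ short-circuits the $\num 0$ cases; so the symbols ever added to the gradient tuple are charged to occurrences of variables in $G$, not $n$ per step. The paper's closing remark after the proof makes the stakes concrete: replacing $\oplus$ by the syntactic $+$ falsifies the lemma ($\bp{x_1+\dots+x_n}$ becomes quadratic in $n$), whereas under your convention that $n$ is constant it would remain ``true'' --- a sign that the convention has silently changed the statement being proved.
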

%


\begin{proposition}[{Soundness of $\mathbf{bp}$}]\label{prop:backprop_sound}
Let $G$ be a ground term whose free variables are given by a sequence $\seq x$ of length $n$. Then for every $\seq r\in \mathbb R^n$, we have:
	$$
		\bp{G}\esub{a}{\num 1}\esub{\seq x}{\seq{\num r}}
		\;\reds[O(\size G)]{\eqref{red:subst}^{\mathfrak n}\eqref{red:gc}\eqref{red:plus}\eqref{red:times}}\equiv\;
		\pair{\num{\sem{G}(\seq{r})},\nabla G(\seq r)}.
	$$
\end{proposition}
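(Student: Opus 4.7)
The plan is to proceed by structural induction on $G$, strengthening the invariant of Definition~\ref{def:bp}: after the substitutions $\esub{a}{\num 1}\esub{\seq x}{\seq{\num r}}$ propagate, $G_0$ should reduce to $\num{\sem G(\seq r)}$ and each $G_i$ to $\num{\partial_i\sem G(\seq r)}$, in $O(\size G)$ steps of the allowed rules (complemented by~\eqref{red:function} to evaluate basic function symbols) modulo $\equiv$. A secondary invariant I would establish by inspection of Definition~\ref{def:bp} is that, in $\bp[\seq x,z,a]{F'}$, both the components $F_i'$ and the adjoint $H$ depend linearly on the adjoint input $a$; this is needed to invoke the induction hypothesis for $F''$ at an adjoint value other than~$\num 1$. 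The base cases are straightforward: for $G=x_i$ or $G=\num r$, the substitutions propagate in $O(1)$ steps to the coordinate vector $e_i$ or the zero vector, and for $G=f(\seq y)$, each $\partial_i f(\seq y)\cdot a$ with numerals substituted reduces via~\eqref{red:function} and~\eqref{red:times} to $\num{\partial_i\sem f(\seq{r_y})}$, which by hypothesis~\eqref{hyp:partialderiv} is $\partial_{y_i}\sem G(\seq r)$.

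The main obstacle will be the let case $G=F'\esub{z}{F''}$, which implicitly realizes the chain rule. The key flow I would establish is: by induction on $F''$, its primal component $F_0''$ reduces to $\num{\sem{F''}(\seq r)}$, so the outer substitution $\esub{z}{F_0''}$ can fire via~$\eqref{red:subst}^{\mathfrak n}$. Applying the induction to $F'$ at the input $(\seq r,\sem{F''}(\seq r))$ with adjoint $a:=\num 1$ then reduces $F_0'$ to $\num{\sem G(\seq r)}$, each $F_i'$ to $\num{\partial_i\sem{F'}(\seq r,\sem{F''}(\seq r))}$, and $H$ to $\num{\partial_z\sem{F'}(\seq r,\sem{F''}(\seq r))}$. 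The inner substitution $\esub{b}{H}$ now fires, and re-applying the induction on $F''$ with $b$ set to this numeric adjoint — using the linearity invariant — reduces each $F_i''$ to $\num{\partial_z\sem{F'}(\seq r,\sem{F''}(\seq r))\cdot\partial_i\sem{F''}(\seq r)}$. Finally, each $F_i'\oplus F_i''$ reduces via~\eqref{red:plus} to $\num{\partial_i\sem{F'}(\seq r,\sem{F''}(\seq r)) + \partial_z\sem{F'}(\seq r,\sem{F''}(\seq r))\cdot\partial_i\sem{F''}(\seq r)}$, which is $\partial_i\sem G(\seq r)$ by the chain rule. The sum case $G=F'+F''$ will follow similarly from linearity of differentiation.

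The step-count bound is the easier part: Lemma~\ref{lemma:backprop_size} gives $\size{\bp G}=O(\size G)$, and Lemma~\ref{lemma:size_shrinks} bounds any sequence in the allowed rules by the initial size. The real difficulty, as indicated, lies in orchestrating the $\equiv$-rearrangements of the substitution contexts $\alpha',\alpha'',\beta',\beta''$ so that bindings fire in the correct order (first $z$, then $b$, then the inputs) without unintended interference. I would invoke Proposition~\ref{prop:postponement} to postpone the $\equiv$-steps without inflating the step count, and Corollary~\ref{cor:unicity_value} to identify the final reduced form with the claimed pair up to $\equiv$.
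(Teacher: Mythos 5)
Your proposal is correct and follows essentially the same route as the paper's proof: a structural induction establishing that the primal component reduces to $\num{\sem G(\seq r)}$ and each adjoint component, at an arbitrary numeric adjoint $q$, to $\num{\partial_i\sem G(\seq r)\cdot q}$, with the let case orchestrated exactly as you describe (evaluate $F_0''$, fire $\esub{z}{\cdot}$ via $\eqref{red:subst}^{\mathfrak n}$, apply the IH to $F'$, fire $\esub{b}{H}$, re-apply the IH to $F''$ at the resulting numeral, then \eqref{red:plus} and the arithmetic chain rule), and the step count obtained from Lemmas~\ref{lemma:backprop_size} and~\ref{lemma:size_shrinks} together with Proposition~\ref{prop:postponement}. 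The only cosmetic difference is that the paper realizes your ``linearity in the adjoint'' invariant by directly quantifying the inductive claim over all real adjoint values $q$, which is the same idea in a slightly more explicit form.
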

\begin{proof}
Let $R=\{\eqref{red:subst}^{\mathfrak n}\eqref{red:gc}\eqref{red:plus}\eqref{red:times}\}$. By induction on $G$ we prove that, supposing $\bp{G}=\pair{G_0,\pair{G_1,\ldots,G_n}\alpha}\beta$, for every vector of real numbers $\seq r$, we have:
(i) $G_0\beta\esub{\seq x}{\seq{\num r}}\reds{R}\equiv \num{\sem{G}(\seq{r})}$; (ii)  for all $1\leq i\leq n$, for all real number $q$, $G_i\alpha\beta\esub{a}{\num q}\esub{\seq x}{\seq{\num r}}\reds{R}\equiv\num{\partial_{x_i} \sem{G}(\seq r)\cdot q}$.
From $(i)$ and $(ii)$: 
\begin{multline*}
\bp{G}\esub{a}{\num 1}\esub{\seq x}{\num r}\equiv\\
\pair{G_0\beta\esub{\seq x}{\seq{\num r}},\pair{G_1\alpha\beta\esub{a}{\num 1}\esub{\seq x}{\seq{\num r}},\ldots,G_n\alpha\beta\esub{a}{\num 1}\esub{\seq x}{\seq{\num r}}}}$ $\reds{R}\,\equiv
\pair{\num{\sem{G}(\seq{r})},\num{\nabla_{\seq r} \sem{G}}}.
\end{multline*}
Prop.~\ref{prop:postponement} allows to postpone all structural equivalences at the end. From Lem.~\ref{lemma:size_shrinks} and $\size{\bp{G}}=O(\size G)$ (Lem.~\ref{lemma:backprop_size}) the length of the reduction is $O(\size G)$. 

We give only the proof of $(ii)$ for $G=F'\esub{z}{F''}$, the other cases being similar or trivial. Using the notations from Def.~\ref{def:bp} we have that the $i$-th component of the gradient tuple of $\bp{G}$ together with the associated substitutions is equal to (we suppose $F_i'\oplus F_i''=F_i'+F_i''$, the other cases being simpler):
\begin{align*}
&(F_i'+F_i'')\alpha''\esub{b}{H}\alpha'\beta'\esub{z}{F_0''}\beta''\esub{a}{\num q}\esub{\seq x}{\seq{\num r}}\\
&
\equiv(F_i'+F_i'')\alpha''\esub{b}{H}\alpha'\beta'\beta''\esub{a}{\num q}\esub{\seq x}{\seq{\num r}}\esub{z}{F_0''\beta''\esub{\seq x}{\seq{\num r}}}&
\\
&
\reds{R}
(F_i'+F_i'')\alpha''\esub{b}{H}\alpha'\beta'\beta''\esub{a}{\num q}\esub{\seq x}{\seq{\num r}}\esub{z}{\num{\sem{F''}(\seq r)}}
&
{\text{by (i)}}
\\
&\equiv(F_i'+F_i'')\alpha''\alpha'\beta'\beta''\esub{a}{\num q}\esub{\seq x}{\seq{\num r}}\esub{z}{\num{\sem{F''}(\seq r)}}\esub{b}{H\alpha'\beta'\beta''\esub{a}{\num q}\esub{\seq x}{\seq{\num r}}\esub{z}{\num{\sem{F''}(\seq r)}}}&
\\
&\reds[O(\size{F'})]{R}
(F_i'+F_i'')\alpha''\alpha'\beta'\beta''\esub{a}{\num q}\esub{\seq x}{\seq{\num r}}\esub{z}{\num{\sem{F''}(\seq r)}}\esub{b}{\num{\partial_z \sem{F'}(\seq r)\cdot q}}
&
\text{by IH}
\\
&\equiv 
F_i'\alpha'\beta'\esub{a}{\num q}\esub{\seq x}{\seq{\num r}}\esub{z}{\num{\sem{F''}(\seq r)}} 
+
F_i''\alpha''\beta''\esub{b}{\num{\partial_z \sem{F'}(\seq r)\cdot q}}\esub{\seq x}{\seq{\num r}}&
\\
&
\reds{R}
\num{\partial_{x_i}\sem{F'}(\seq r)\cdot q}
+
\num{\partial_{x_i}\sem{F''}(\seq r)\cdot \left(\partial_z \sem{F'}(\seq r)\cdot q\right)}
&
\text{by IH}\\
&\red[\eqref{red:plus}]
\num{
	\left(\partial_{x_i}\sem{F'}(\seq r)+\partial_z \sem{F'}(\seq r)\cdot \partial_{x_i}\sem{F''}(\seq r)\right)\cdot q
}=\num{\partial_{x_i}\sem{G}(\seq r)\cdot q}
\end{align*}
Notice that in order to move to the last line we  use the associativity, commutativity and distributivity of $+$ and $\cdot$ over real numbers, not on the corresponding syntactic symbols. In the base cases (variables, functional symbols and numerals), one performs linear substitutions~\eqref{red:subst} as well as garbage collection~\eqref{red:gc}. In the case of $\bp[\seq x,a]{f(x_{i_1},\dots,x_{i_k}})$ one instance of rule~\eqref{red:times} is needed.
\end{proof}

Notice that the result of the evaluation of $\bp{G}\esub{a}{\num 1}\esub{\seq x}{\seq{\num r}}$ is independent from the chosen reduction sequence by \refcor{unicity_value}.

\subsection{Backpropagation on Higher-Order Programs}\label{subsect:ho}
Let us now consider the soundness of our transformation $\revder{t}$ applied to any $\lambda$-term of type $x_1^{\exptype{\tyR}},\dots,x_n^{\exptype{\tyR}}\vdash t:\tyR$. The proof uses two essential ingredients: first, we remark that the transformation $\revder{t}$ commutes with any reduction step (Lem.~\ref{lemma:Commutation}); second, we prove that $\revderSymbol$ applied to a computational graph $G$ encodes actually all the relevant information of $\bp{G}$ (Lem.~\ref{lemma:main}). Notice that Lem.~\ref{lemma:main} introduces fresh variables $x'_i$ ($i\leq n$) annotated $\exptype{\dual}$, tagging the different components of the gradient of $G$ in a sum.  We can then conclude with Th.~\ref{th:main theorem} and its Cor.~\ref{cor:gradient}.

\begin{lemma}
	\label{lemma:Commutation}
	Let $t$ be a term of $\Terms[\mathcal F]$. Then:
	\begin{enumerate}
		\item if $t\equiv t'$, then $\revder{t}\equiv\revder{t'}$,
		\item if $t\red[\iota]t'$, for $\iota$ any reduction step in \reftab{reductions}, then $\revder{t}\reds[O(1)]{X}\revder{t'}$, where $X=\{\iota\}$ for any $\iota$ but  $\eqref{red:plus}, \eqref{red:times}$, in these latter cases $X=\{\iota, \eqref{red:lambda}, \eqref{red:subst}, \eqref{red:gc}\}$.		
	\end{enumerate}
\end{lemma}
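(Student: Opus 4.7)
The proof will proceed by induction on the derivations of $t\equiv t'$ and $t\red[\iota]t'$, each of which is defined as the contextual closure of a finite set of root rules. The driving observation, which I would establish first, is purely syntactic: on every constructor in \reftab{reverseTerms} except numerals, sums and function symbols, $\revderSymbol$ acts as a plain homomorphism: $\revder{\lambda x.t}=\lambda x.\revder{t}$, $\revder{tu}=\revder{t}\revder{u}$, $\revder{\pair{t,u}}=\pair{\revder t,\revder u}$, $\revder{t\esub{x}{u}}=\revder t\esub{x}{\revder u}$, and similarly for $\unpair$. A routine induction on contexts then yields, for every one-hole context $\metaCtxt$, a context $\metaCtxt'$ with $\revder{\ctxt s}=\ctxt[\metaCtxt']{\revder s}$, and for every substitution context $\alpha$ a substitution context $\revder\alpha$ with $\revder{s\alpha}=\revder s\revder\alpha$. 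This reduces both statements to the case where the root rule fires at the root of $t$.

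For Part 1, the five rules of \reftab{structural_equivalence} only manipulate explicit substitutions, which $\revderSymbol$ preserves verbatim. Hence each root instance lifts to the same instance of $\equiv$ on the transformed subterms; the side conditions on $\fv{\cdot}$ transfer because $\revderSymbol$ preserves free variables, the only fresh names being the locally bound $x,x\dualvar,a,\ldots$ introduced in the numeric cases.

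For Part 2, I would do a case analysis on $\iota$. The rules \eqref{red:lambda}, \eqref{red:pair}, \eqref{red:subst}, \eqref{red:gc}, \eqref{red:etaLambda}, \eqref{red:etaPair} all lift by a single step using the homomorphism property; for instance $\revder{(\lambda x.t)\alpha u}=(\lambda x.\revder t)\revder\alpha\revder u \red[\eqref{red:lambda}] \revder t\esub{x}{\revder u}\revder\alpha=\revder{t\esub{x}{u}\alpha}$, and similar one-step simulations handle the other five rules. Rule \eqref{red:lin} and the linear factoring rule \eqref{red:linear} are vacuous on $\Terms[\mathcal F]$, since neither linear variables nor the type $\dual$ appear there. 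Rule \eqref{red:plusPair} is either vacuous (if $\Terms[\mathcal F]$ sums are taken at ground type) or also lifts in one step. The genuinely non-trivial cases are \eqref{red:plus} and \eqref{red:times}. For \eqref{red:plus}, $\revder{\num r\alpha+\num q\beta}$ unfolds to
\[
\pair{x+y,\lambda a.(x\dualvar a+y\dualvar a)}\esub{\pair{x,x\dualvar}}{\pair{\num r,\lambda a.\num{\seq 0}}\revder\alpha}\esub{\pair{y,y\dualvar}}{\pair{\num q,\lambda a.\num{\seq 0}}\revder\beta},
\]
and the target $\revder{\num{r+q}\alpha\beta}=\pair{\num{r+q},\lambda a.\num{\seq 0}}\revder\alpha\revder\beta$ is reached by a constant number of administrative steps: break the pair substitutions and propagate the primals $\num r,\num q$ into the outer $x+y$ via \eqref{red:subst} and \eqref{red:gc}; fire $\iota=\eqref{red:plus}$ itself to obtain $\num{r+q}$; propagate the backpropagators $\lambda a.\num{\seq 0}$ into the body $x\dualvar a+y\dualvar a$ using \eqref{red:subst}; reduce the two resulting $\beta$-redexes $(\lambda a.\num{\seq 0})a$ with \eqref{red:lambda} and garbage-collect with \eqref{red:gc}. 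All residual micro-steps belong to the announced set $X=\{\iota,\eqref{red:lambda},\eqref{red:subst},\eqref{red:gc}\}$, and their number depends only on the fixed size of the syntactic envelope produced by $\revderSymbol$. The case of \eqref{red:times} is completely analogous.

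The hard part will be precisely this last case: one must verify that simplifying the output backpropagator collapses fully to $\lambda a.\num{\seq 0}$ using only the four rules in $X$, and in particular that no linear-factoring or pair-plus step is needed to finish the job. This is what makes the fact that the backpropagator $\lambda a.\num{\seq 0}$ associated with a numeral is a \emph{closed} value crucial: once substituted into the outer backpropagator, the two occurrences $x\dualvar a,y\dualvar a$ become closed $\beta$-redexes whose collapse only requires \eqref{red:lambda} and \eqref{red:gc}, after which the residual $\num{\seq 0}+\num{\seq 0}$ summand is already in the canonical form sought (or, equivalently, handled by $\iota$ itself at the ground layer). Checking this carefully — together with the matching computation for \eqref{red:times}, where the factor $\partial_i f(\cdot)\cdot a$ specialises to the constant $\num 0\cdot a$ — is the only truly bureaucratic part of the argument; everything else is a direct consequence of the compositionality of $\revderSymbol$.
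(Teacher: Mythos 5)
Your proposal follows the paper's proof in all essentials: both arguments reduce everything to root instances via the compositionality of $\revderSymbol$ (the paper packages this as a substitution lemma $\revder{t\isub{u}{x}}=\revder{t}\isub{\revder{u}}{x}$ proved by induction on $t$, where you use commutation with one-hole contexts --- the two are interchangeable here), both treat Part~1 and all non-numeric rules as immediate, and both concentrate the real work in the explicit computation for \eqref{red:plus}, which you reproduce correctly, including the final $\iota$-step collapsing $\num{\seq 0}+\num{\seq 0}$. (Minor shared imprecision: deconstructing the substitutions $\esub{\pair{x,x\dualvar}}{\pair{\num r,\lambda a.\num{\seq 0}}}$ needs rule \eqref{red:pair}, which the announced set $X$ omits; you inherit this from the statement and from the paper's own computation.)

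The one point where your account of ``the hard part'' goes astray is \eqref{red:times}, which you, like the paper (which only details \eqref{red:plus}), declare analogous. It is not quite: the argument fed to the discarded backpropagator is $\partial_i f(\seq x)\cdot a$, and for multiplication $\partial_i f$ is a \emph{projection}, so after substituting the numerals this argument is $\num q\cdot a$ (resp.\ $\num r\cdot a$), not the $\num 0\cdot a$ you claim. The case still goes through only because $\lambda a.\num{\seq 0}$ ignores its argument --- but then, after firing \eqref{red:lambda}, one is left with $\num{\seq 0}\esub{b}{\num q\cdot a}$, and \eqref{red:gc} as stated in \reftab{beta} erases only substitutions of \emph{values}, which $\num q\cdot a$ is not (in the \eqref{red:plus} case the argument is the bare variable $a$, a value, so \eqref{red:gc} does fire). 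Discharging this residual substitution needs an ingredient neither you nor the paper supplies (e.g.\ garbage collection of non-values up to $\equiv$, or first reducing the argument to a value, which is impossible here since $a$ is free). This is a defect of the lemma's bookkeeping shared with the original, so it does not invalidate your approach; but since you explicitly single out this case as the crux, be aware that your proposed resolution of it is not the correct one.
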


\begin{lemma}
	\label{lemma:main}
	Let $G$ be a ground term with $\fv G=\seq x=x_1^{\exptype{\tyR}},\dots,x_n^{\exptype{\tyR}}$ and suppose that $\bp{G}=\pair{G_0,\pair{G_1,\ldots,G_n}\alpha}\beta$. Then, there exists $J\subseteq \{1,\dots,n\}$ such that $i\not\in J$ implies $G_i=\num 0$ and such that
		$$\revder{G}\esub{\seq x^{\exptype{\revder\tyR}}}{\pair{\seq x^{\exptype{\tyR}},\lambda a^\tyR.{\seq x\dualvar}^{\exptype{\dual}} a}}\freds[O(\size G)]\equiv
		\pair{G_0,\lambda a.(\sum_{j\in J} x_j\dualvar G_{j})\alpha}\beta.$$	
\end{lemma}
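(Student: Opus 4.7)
The plan is to argue by structural induction on the ground term $G$ following grammar~\eqref{def:ground_term}. In each case we unfold $\revderSymbol$ from Table~\ref{table:reverseTerms}, reduce using the appropriate rewriting rules, and match the result against $\bp{G}$ from Def.~\ref{def:bp}, choosing $J$ to be the set of positions whose backpropagators $x_j^\ast$ survive in the normal form (equivalently, $\{j : G_j \neq \num 0\}$). Reductions come from three sources: the $\beta$-rules of Table~\ref{table:beta}, which evaluate the unpairings and explicit substitutions introduced by $\revderSymbol$; the $\eta$-rules of Table~\ref{table:eta}, used in the substitution case to expose the pair structure of substituted values; and the linear factoring rule~\eqref{red:linear}, which consolidates occurrences $x_j^\ast F_j' + x_j^\ast F_j''$ into $x_j^\ast(F_j' + F_j'')$, mirroring the $\oplus$ operation of Def.~\ref{def:bp}. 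The base cases are immediate: for $G = x_i$ a single \eqref{red:subst} step yields $\pair{x_i, \lambda a. x_i^\ast a}$ with $J = \{i\}$; for $G = \num r$, Table~\ref{table:reverseTerms} already delivers $\pair{\num r, \lambda a. \num{\seq 0}}$ with $J = \emptyset$.

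For $G = f(y_1, \ldots, y_k)$ with each $y_i$ in $\seq x$, unfolding the function case of Table~\ref{table:reverseTerms} and substituting each $y_i$ with $\pair{y_i, \lambda a. y_i^\ast a}$, a constant number of \eqref{red:pair} and \eqref{red:lin} steps produces $\pair{f(\seq y), \lambda a. \sum_i y_i^\ast (\partial_i f(\seq y) \cdot a)}$, matching $\bp{f(\seq y)}$ with $J$ the set of positions of the $y_i$'s in $\seq x$. For $G = F' + F''$, the IH gives the reduction of each summand to a pair of the prescribed form; the sum clause of Table~\ref{table:reverseTerms} combines them in $O(1)$ further $\beta$-steps. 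Whenever an index $j$ belongs to both $J'$ and $J''$, the linear factoring rule~\eqref{red:linear} collapses the two summands $x_j^\ast F_j'$ and $x_j^\ast F_j''$ into a single $x_j^\ast(F_j' + F_j'')$, matching the $\oplus$ of Def.~\ref{def:bp}, and the resulting index set is $J = J' \cup J''$.

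The main obstacle is the substitution case $G = F' \esub{z}{F''}$. First, apply the IH to $\revder{F''}$ under the outer substitution, reducing (up to $\equiv$) to $V'' = \pair{F_0'', \lambda a.\,(\sum_{j \in J''} x_j^\ast F_j'')\alpha''}\beta''$. Since $V''$ is of pair type but not itself a pair value syntactically, we use the $\eta$-rule~\eqref{red:etaPair} together with~\eqref{red:pair} to replace the substitution $\esub{z}{V''}$ by two substitutions, one binding $z$'s primal component to $F_0''$ and one binding $z$'s backpropagator component to $\lambda a.\,(\sum_{j \in J''} x_j^\ast F_j'')\alpha''$. With this decomposition, the IH applies to $\revder{F'}$ with $z$ handled via the standard pair-substitution pattern of the statement. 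If $z$ contributes nontrivially ($H \neq \num 0$ in Def.~\ref{def:bp}), the gradient sum for $F'$ contains a subterm $z^\ast H$ which, after one~\eqref{red:lambda} and one~\eqref{red:lin} step, becomes $(\sum_{j \in J''} x_j^\ast F_j'')\alpha''\esub{b}{H}$, exactly the $\esub{b}{H}\alpha'$ decoration of Def.~\ref{def:bp}. Linear factoring~\eqref{red:linear} then consolidates backpropagators shared between $J' \setminus \{z\}$ and $J''$, and the structural equivalences~\eqref{eq:comm}--\eqref{eq:app2} rearrange the accumulated substitution contexts into the $\alpha, \beta$ shape of $\bp G$. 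Each constructor adds $O(1)$ rewriting steps on top of the inductive cost, so a routine recurrence yields the overall bound $O(\size G)$.
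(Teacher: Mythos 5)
Your proof follows the paper's argument essentially step for step: the same structural induction with the same base, function-symbol and sum cases, and the same key move in the substitution case of using \eqref{red:etaPair}/\eqref{red:pair} (plus \eqref{red:etaLambda}) to expose the pair structure of the substituted value before applying the inductive hypothesis to the outer term, with \eqref{red:linear} mirroring $\oplus$. The only slips are cosmetic: the backpropagator component of $z$ is an \emph{exponential} variable of type $\dual$, so it is substituted via \eqref{red:subst} and \eqref{red:gc} rather than \eqref{red:lin}, and the per-node cost is $O(1)$ plus $\#(J'\cap J'')$ factoring steps, which still sums to $O(\size G)$ exactly as in the paper.
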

\begin{proof}
By induction on $G$. We only consider $G=F\esub{z}{F'}$, the other cases being simpler.  Let $\bp[\seq x,z,a]{F} = \pair{F_0,\pair{F_1,\ldots,F_n,H}\alpha}\beta$ and $\bp[\seq x,b]{F'} = \pair{F_0',\pair{F_1',\ldots,F_n'}\alpha'}\beta'$. Let us also consider $H\neq\underline 0$ (the other case being simpler).
	The term $\revder{G}\esub{\seq x}{\pair{\seq x,\lambda a.\seq x\dualvar a}}$ is structural equivalent to:
	\begin{align*}
	&(\revder{F}\esub{\seq x}{\pair{\seq x,\lambda a.\seq x\dualvar a}})\esub{z}{(\revder{F'}\esub{\seq x}{\pair{\seq x,\lambda b.\seq x\dualvar b}})}
	&
	\\
	&
	\freds[O(\size{F'})]\equiv
	(\revder{F}\esub{\seq x}{\pair{\seq x,\lambda a.\seq x\dualvar a}})\esub{z}{\pair{F_0',\lambda b.(\sum_{j\in J'} x_j\dualvar F'_{j})\alpha'}\beta'}
	&\text{by IH}
	\\
	&\equiv(\revder{F}\esub{\seq x}{\pair{\seq x,\lambda a.\seq x\dualvar a}})\esub{z}{\pair{F_0',\lambda b.(\sum_{j\in J'} x_j\dualvar F'_{j})\alpha'}}\beta'&
	\\
	&\red[\eqref{red:etaPair}]\red[\eqref{red:pair}](\revder{F}\esub{\seq x}{\pair{\seq x,\lambda a.\seq x\dualvar a}}\esub{z}{\pair{z,z\dualvar}})\esub{z}{F_0'}\esub{z\dualvar}{\lambda b.(\sum_{j\in J'} x_j\dualvar F'_{j})\alpha'}\beta'
	\\
	&\red[\eqref{red:etaLambda}](\revder{F}\esub{\seq x}{\pair{\seq x,\lambda a.\seq x\dualvar a}}\esub{z}{\pair{z,\lambda a.z\dualvar a}})\esub{z}{F_0'}\esub{z\dualvar}{\lambda b.(\sum_{j\in J'} x_j\dualvar F'_{j})\alpha'}\beta'\\
	&\freds[O(\size F)]\equiv
	\pair{F_0,\lambda a.(\sum_{j\in J} x_j\dualvar F_{j}+z\dualvar H)\alpha}\beta\esub{z}{F_0'}\esub{z\dualvar}{\lambda b.(\sum_{j\in J'} x_j\dualvar F'_{j})\alpha'}\beta'
	&\text{by IH}
	\\
	&\red[\eqref{red:subst}]\red[\eqref{red:gc}]\pair{F_0,\lambda a.(\sum_{j\in J} x_j\dualvar F_{j}+(\lambda b.(\sum_{j\in J'} x_j\dualvar F'_{j})\alpha')H)\alpha}\beta\esub{z}{F_0'}\beta'\\
	&\red[\eqref{red:lambda}]\pair{F_0,\lambda a.(\sum_{j\in J} x_j\dualvar F_{j}+(\sum_{j\in J'} x_j\dualvar F'_{j})\alpha'\esub{b}H)\alpha}\beta\esub{z}{F_0'}\beta'\\
	&
	\reds[\#J\cap\#J']{\eqref{red:linear}}
	\pair{F_0,\lambda a.(\sum_{j\in J\cup J'} x_j\dualvar(F_{j}\oplus F'_{j}))\alpha'\esub{b}H\alpha}\beta\esub{z}{F_0'}\beta'
	\end{align*}	
\end{proof}

By composing the reductions in \reflemma{Commutation} and \reflemma{main}, we get:
\begin{theorem}\label{th:main theorem}
Let $t$ be a term of $\Terms[\mathcal F]$ of type $\tyR$ with $\fv t=\seq x=x_1^{\exptype{\tyR}},\dots,x_n^{\exptype{\tyR}}$. For any ground term $G$ such that $t\, (\red[\beta]\cup\equiv)\dualvar\, G$ in $m$ $\beta$-steps, we have, for a suitable $J\subseteq\{1,\dots,n\}$:
\[
	\revder t\esub{\seq x^{\exptype{\revder\tyR}}}{\pair{\seq x^{\exptype{\tyR}},\lambda a^\tyR.{\seq x\dualvar}^{\exptype{\dual}} a}} \freds[O(m+\size G)]\,\equiv \pair{G_0,\lambda a.(\sum_{j\in J} x_j\dualvar G_{j})\alpha}\beta
\]
\noindent where $\bp{G}=\pair{G_0,\pair{G_1,\ldots,G_n}\alpha}\beta$ and $\forall i\notin J$, $G_i=\num 0$. 
\end{theorem}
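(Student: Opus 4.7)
The plan is to chain the two preceding lemmas. Lemma Commutation transports the source reduction $t\,(\red[\beta]\cup\equiv)^\ast\,G$ through the transformation $\revderSymbol$ with only constant-factor overhead, and Lemma main turns the transformed ground term, once fed with the canonical initial backpropagators, into exactly the shape required on the right-hand side.

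\textbf{Step 1: Simulate the source reduction inside $\revderSymbol$.} Unfold the hypothesis as a sequence of $m$ $\beta$-steps interleaved with finitely many $\equiv$-steps. Applying Lemma Commutation iteratively (clause 1 for $\equiv$-steps, clause 2 for $\beta$-steps) I obtain a reduction $\revder t \mathrel{(\fred\cup\equiv)^\ast} \revder{G}$ whose $\beta\eta\ell$-cost is $O(m)$. By Proposition postponement, the $\equiv$-steps may be pushed to the end, giving $\revder t \freds[O(m)] u \equiv \revder{G}$ for some $u$. Since $\equiv$ is a congruence and reduction is closed under contexts, plugging everything into the outer substitution context yields
\[
  \revder t\esub{\seq x^{\exptype{\revder\tyR}}}{\pair{\seq x,\lambda a.\seq x\dualvar a}}
  \;\freds[O(m)]\;
  u\esub{\seq x^{\exptype{\revder\tyR}}}{\pair{\seq x,\lambda a.\seq x\dualvar a}}
  \;\equiv\;
  \revder{G}\esub{\seq x^{\exptype{\revder\tyR}}}{\pair{\seq x,\lambda a.\seq x\dualvar a}}.
\]

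\textbf{Step 2: Evaluate on the ground term.} Now apply Lemma main to $G$. Writing $\bp{G}=\pair{G_0,\pair{G_1,\ldots,G_n}\alpha}\beta$, the lemma furnishes the subset $J\subseteq\{1,\dots,n\}$ with $G_i=\num 0$ for $i\notin J$ and delivers
\[
  \revder{G}\esub{\seq x^{\exptype{\revder\tyR}}}{\pair{\seq x,\lambda a.\seq x\dualvar a}}
  \;\freds[O(\size G)]\equiv\;
  \pair{G_0,\lambda a.\bigl(\textstyle\sum_{j\in J} x_j\dualvar G_j\bigr)\alpha}\beta.
\]
Using Lemma bisimulation (strong bisimulation of $\equiv$ with $\fred$), the same sequence can be replayed starting from the $\equiv$-equivalent term reached at the end of Step~1, at identical cost.

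\textbf{Step 3: Compose and count.} Concatenating the two reduction sequences gives a total of $O(m)+O(\size G)=O(m+\size G)$ $\beta\eta\ell$-steps from $\revder t\esub{\seq x}{\pair{\seq x,\lambda a.\seq x\dualvar a}}$ to the target term, modulo structural equivalence. One last appeal to Proposition postponement collects all $\equiv$-steps at the end, matching the statement.

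\textbf{Main obstacle.} There is no conceptual difficulty: the theorem is essentially the conjunction of the two lemmas glued by postponement and bisimulation. The only point that deserves a little care is that the outer substitution $\esub{\seq x}{\pair{\seq x,\lambda a.\seq x\dualvar a}}$ does not interfere with the simulation of Step~1 — this is immediate because the substituted variables occur only free in $\revder t$ and the reductions provided by Lemma Commutation act within the body, so the outer context may be carried along without obstruction. The accounting of the $O(m+\size G)$ bound is therefore just the sum of the two lemma-level bounds.
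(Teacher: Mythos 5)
Your proposal is correct and follows essentially the same route as the paper's own proof: transport the reduction $t\mathrel{(\red[\beta]\cup\equiv)^\ast}G$ through $\revderSymbol$ via Lemma~\ref{lemma:Commutation}, apply Lemma~\ref{lemma:main} to the resulting ground term, and compose the two sequences using Proposition~\ref{prop:postponement} (with Lemma~\ref{lemma:bisimulation} handling the replay across $\equiv$). The extra care you take about the outer substitution context not interfering with the simulation is a sound, if implicit, point in the paper's one-line composition.
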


\begin{corollary}\label{cor:gradient}
Let $t$ be a term of $\Terms[\mathcal F]$ of type $\seq x^{\exptype{\seq\tyR}}\vdash t: \tyR$, with $\seq x=x_1^{\exptype{\tyR}},\dots,x_n^{\exptype{\tyR}}$ the free variables of $t$.  Let  $m$ be the number of $\beta$-steps needed to reduce $t$ to a ground term $G$.
For any vector $\seq r\in\mathbb R^n$, let $\nabla t(\seq r)=\pair{\num{g_1},\dots,\num{g_n}}$. Then, for a suitable $J\subseteq\{1,\dots,n\}$, with $i\notin J$, $g_i=0$, we have:
\[
	\unpair{z\dualvar\num 1}z{z\dualvar}{\revder t\esub{\seq x}{\pair{\seq x,\lambda a.\seq x\dualvar a}}}\esub{\seq x}{\num{\seq r}}\freds[O(m+\size G)]\,\equiv\sum_{j\in J}x_j\dualvar\num{g_j}.
\]
\end{corollary}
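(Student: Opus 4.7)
The plan is to chain Theorem~\ref{th:main theorem} with the proof of Proposition~\ref{prop:backprop_sound}. First I apply the theorem to the subterm $\revder t\esub{\seq x}{\pair{\seq x,\lambda a.\seq x\dualvar a}}$ sitting inside the outer unpair and the outer substitution $\esub{\seq x}{\num{\seq r}}$. Context closure of $\fred$ (together with \refprop{postponement} to handle the intervening $\equiv$) yields, in $O(m+\size G)$ steps, the intermediate term
\[
	\unpair{z\dualvar\num 1}{z}{z\dualvar}{\pair{G_0,\lambda a.(\textstyle\sum_{j\in J} x_j\dualvar G_j)\alpha}\beta}\esub{\seq x}{\num{\seq r}},
\]
where $\bp{G}=\pair{G_0,\pair{G_1,\ldots,G_n}\alpha}\beta$ and $G_i=\num 0$ for $i\notin J$, as provided by the theorem.

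Next I reduce this to the shape needed to apply \refprop{backprop_sound}. A single \eqref{red:pair} step unpairs the outer pair; one \eqref{red:subst} step followed by \eqref{red:gc} substitutes $z\dualvar$ by $\lambda a.(\sum_{j\in J} x_j\dualvar G_j)$; and a \eqref{red:lambda} step applies that $\lambda$ to $\num 1$, consuming the $a$. After these $O(1)$ steps (and a few uses of $\equiv$ to rearrange the substitution stack) I obtain, up to $\equiv$,
\[
	\bigl(\textstyle\sum_{j\in J} x_j\dualvar G_j\bigr)\,\alpha\esub{a}{\num 1}\esub{z}{G_0}\beta\esub{\seq x}{\num{\seq r}}.
\]
This term is morally \emph{$\bp G$ evaluated at $(\seq r,1)$, with the gradient tuple $\pair{G_1,\ldots,G_n}$ replaced by the sum $\sum_{j\in J}x_j\dualvar G_j$}; the first projection $G_0$ is irrelevant because $z$ does not occur in the body, so after forward-evaluating $G_0\beta\esub{\seq x}{\num{\seq r}}$ to $\num{\sem G(\seq r)}$ (part (i) of the proof of \refprop{backprop_sound}, in $O(\size G)$ steps) the substitution $\esub{z}{\cdot}$ can be discarded by \eqref{red:gc}.

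It remains to evaluate the sum. Part (ii) of the inductive proof of \refprop{backprop_sound} shows that, with $q=1$, each $G_j\alpha\beta\esub{a}{\num 1}\esub{\seq x}{\num{\seq r}}$ reduces modulo $\equiv$ to $\num{g_j}$, and crucially the \emph{combined} length of all these reductions is $O(\size G)$ (this is where \reflemma{backprop_size} and \reflemma{size_shrinks} come in). Since the $x_j\dualvar$ are free and the sum structure only introduces $+$-connectives around the $G_j$'s, the same sequence of reductions applies verbatim inside the sum, yielding $\sum_{j\in J} x_j\dualvar\num{g_j}$. Counting: $O(m+\size G)$ steps for Theorem~\ref{th:main theorem}, $O(1)$ steps to unpair and $\beta$-reduce, and $O(\size G)$ to evaluate the sum, giving $O(m+\size G)$ overall. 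Since $g_i=0$ for $i\notin J$, the $\sum_{j\in J}$ coincides with the RHS of the statement.

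The main obstacle is the accounting in the last step: a naive strategy that pushes $\alpha,\beta,\esub{\seq x}{\num{\seq r}}$ pointwise inside the sum via \eqref{eq:dup} would duplicate these substitutions $|J|$ times and inflate the bound to $O(|G|^2)$. To preserve the linear bound one must reduce the sum \emph{as a whole}, reusing the shared substitution context precisely the way \refprop{backprop_sound} reuses it to produce the gradient tuple in aggregate time $O(\size G)$. Everything else is a straightforward bookkeeping exercise in the linear substitution calculus.
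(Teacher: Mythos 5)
Your proposal is correct and follows exactly the paper's own route: the paper's proof is the one-line instruction ``Apply Theorem~\ref{th:main theorem} to $\revder t\esub{\seq x}{\pair{\seq x,\lambda a.\seq x\dualvar a}}$ and then Proposition~\ref{prop:backprop_sound}'', and your write-up simply makes explicit the intermediate unpairing/substitution steps and the cost accounting that the paper leaves implicit. The care you take about evaluating the sum as a whole rather than duplicating the shared substitution context is a legitimate refinement of the bookkeeping already underlying the proof of Proposition~\ref{prop:backprop_sound}, not a departure from it.
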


One can go further and obtain the tuple of numerals $\nabla t(\seq r)$ from $\sum_{j\in J}x_j\dualvar\num{g_j}$ just by: (i) considering $\revder[d] t$ with $d=n$ (here is the only point where we require a specific choice of $d$); (ii) replacing each $x\dualvar_i$ ($i\leq n$), morally of type $\tyR\multimap\tyR^n$,  with the corresponding injection 
$\lambda a.\pair{0\dots,0,a,0,\dots,0}$; (iii) adding all tuples resulting from the reductions \eqref{red:lambda},\eqref{red:subst},\eqref{red:gc}: $\sum_{j\in J}\pair{\num 0,\dots,\num 0,\num{g_j},\num 0\dots,\num 0}\freds\pair{\num g_1,\dots,\num g_n}$.
However, this last reduction is quadratic in $n$, as it performs $\#J=O(n)$ sums of vectors of size $n$ without considering that these latter are null but on one coordinate. It would then be preferable to add an \emph{ad hoc} read-back operation allowing to decode the tuple $\nabla t(\seq r)$ out of the tagged sum $\sum_{j\in J}x_j\dualvar\num{g_j}$ more parsimoniously. 

Notice that the proof of Cor.~\ref{cor:gradient} considers a specific reduction sequence for getting $\nabla t (\seq r)$. However, Cor.~\ref{cor:unicity_value} guarantees that the result is independent from the chosen sequence.\footnote{This is in fact the case if one admits the injections $\lambda a.\pair{0\dots,0,a,0,\dots,0}$ of type $\dual$, as discussed in the previous note.} This suggests considering more efficient strategies than that of Cor.~\ref{cor:gradient}, or even more modular by decomposing the computation along the different components of $t$. We discuss this last point in the next Section.

\section{An Example: Recurrent Neural Networks}\label{sect:example}


\subsection{Derivative of a dynamically generated polynomial}
\newcommand{\tyN}{\mathsf{Nat}}
Let $\tyN:=(\tyR\to\tyR)\to\tyR\to\tyR$ be the type of Church natural numbers and consider
$$G:=w\cdot y+x\qquad
\qquad\qquad t:= \lambda n^{\oc\tyN}.\lambda x.n(\lambda y.G)x.$$
We have $x^{\oc\tyR},y^{\oc\tyR},w^{\oc\tyR}\vdash G:\tyR$ and $w^{\oc\tyR}\vdash t:\tyN\to\tyR\to\tyR$. If $\num n$ encodes $n\in\Nat$, the term $t\,\num n$ dynamically generates the function $\lambda x.f_n(w,x)$ with $f_n(w,x):=(w^n+w^{n-1}+\cdots+w+1)\cdot x$ (we take some liberty in simplifying arithmetic expressions for the sake of readability). Notice that the free variable $x$ of $\lambda y.G$, a term to be iterated at will, is captured later in $t$. We have
$$\revder{\lambda y.G}\ \freds\lambda\!\!\pair{y,y\dualvar}\!.\pair{G,\lambda a.w\dualvar(y\cdot a)+y\dualvar(w\cdot a)+x\dualvar a}=:G',$$
where we used the shorthand $\lambda\!\!\pair{y,y\dualvar}\!.u:=\lambda p.\unpair{u}{y}{y\dualvar}{p}$. Hence, when \eg\ $n=2$,
\begin{align*}
	\revder{t\,\num{2}} &= (\lambda n.\lambda x.n\revder{\lambda y.G}x)\,\num 2\ \freds (\lambda n.\lambda x.n\,G'\,x)\,\num 2\ \fred\ \lambda x.\num 2\,G'\,x\ \fred\ \lambda x.G'(G'x) \\
	&\fred \lambda\!\!\pair{x,x\dualvar}\!.G'(G'\pair{x,x\dualvar})\ \freds\ \lambda\!\!\pair{x,x\dualvar}\!.G'\pair{wx+x,H},
\end{align*}
where $H:=\lambda a.w\dualvar(x\cdot a)+x\dualvar(w\cdot a)+x\dualvar a$. The computation continues with
\begin{align*}
	&\freds\ \lambda\!\!\pair{x,x\dualvar}\!.\pair{(w^2+w+1)x,\lambda a.w\dualvar((wx+x)\cdot a)+H(w\cdot a)+x\dualvar a} \\
	&\freds\ \lambda\!\!\pair{x,x\dualvar}\!.\pair{(w^2+w+1)x,\lambda a.w\dualvar((wx+x)\cdot a)+w\dualvar(wx\cdot a)+x\dualvar(w^2\cdot a)+x\dualvar(w\cdot a)+x\dualvar a} \\
	&\freds\ \lambda\!\!\pair{x,x\dualvar}\!.\pair{(w^2+w+1)x,\lambda a.w\dualvar((2wx+x)\cdot a)+x\dualvar((w^2+w+1)\cdot a)}
\end{align*}
We see that the argument of $w\dualvar$ (resp.\ $x\dualvar$) is the derivative of $f_2$ with respect to $w$ (resp.\ $x$). This shows how locally free variables are handled correctly. Also observe that $G'$, which is trivial here but could in principle be a very complex function, may be pre-computed independently of $n$.

\subsection{Recurrent neural networks}

Recurrent neural networks (RNNs) are meant to process inputs that are arbitrary sequences of vectors in $\Real^d$. They are heavily used for natural language processing: sequences could for instance stand for a word or a sentence, where each vector is a representation of a letter. Such networks iterate over the input to recursively build a desired output. One example is \emph{encoding recurrent neural networks}, which are used to encode a sequence of vectors $x_i\in \Real^d$ into an output vector $h \in \Real^m$. For instance, in \emph{sentiment analysis} \cite{dos2014deep,glorot2011domain,severyn2015twitter}, it is used to predict if the expressed opinion in a sentence is positive, negative or neutral.

Given a sequence $x_1, ..., x_n\in \Real^d$, the encoding RNN produces intermediate outputs $h_1, ..., h_n \in \Real^m$ recursively, by applying a single layer $L : \Real^d \times \Real^m \rightarrow \Real^m$ to both the last output value $h_i$ and the next input vector $x_{i+1}$:
\begin{align*}
h_{i+1} & = L(x_{i+1}, h_i) & L(x, h)& = \sigma (E\cdot x + R\cdot h)
\end{align*}
where $\sigma$ is the sigmoid activation function, $E$ is a $d\times m$-matrix
called the \emph{token embedding matrix}, and $R$ is a $m\times m$ matrix whose coefficients are called 
\emph{the recurrent weights}. $h_0$ can be initialized to $\seq{0}$. In practice, a RNN ends with a loss function that we are looking to minimize (and which models the problem we want to solve). To keep the example simple, we will not consider it. For the same reason, we suppose that $d = m = 1$, the general encoding being a direct generalization. 

\paragraph*{Lists}

We represent lists in our language using the Church encoding, which defines a list by its right fold function. Given $A$ and $X$ be types, the type of lists $\tyList{A}{X}$ is defined as follows:
$$\tyList{A}{X} := (A \rightarrow X \rightarrow X) \rightarrow (X \rightarrow X)$$
Given $a_1 : A, \dots, a_n: A$, we define the list $[a_1; ...; a_n]_X$ of type $\tyList{A}{X}$ as:
$$[a_1; ... ; a_n]_X := \lambda f^{\exptype{(A\rightarrow X\rightarrow X)}}.\lambda x^{\exptype{X}}.f a_1 (f a_2 \dots (f a_n x)\dots)$$
In what follows we will omit the $X$ annotation if it is clear from context or if it does not matter. 
Finally, it can be noted that the reverse gradient of the list datatype is given by:
\begin{align*}
\revder{\tyList{A}{X}} = \tyList{\revder{A}}{\revder{X}} & &
\revder{[a_1; ... ; a_n]_X} = [\revder{a_1}, \dots, \revder{a_n}]_{\revder{X}}
\end{align*}

\paragraph*{Encoding a RNN}
Let $\sigma$ be the function symbol corresponding to the sigmoid function $\Real \rightarrow \Real$. In dimension $1$, the token embedding matrix and the recurrent matrix may be represented as two terms $\lambda x^{\exptype{\tyR}}.(\epsilon\cdot x)$ and $\lambda h^{\exptype{\tyR}}.(\rho\cdot h)$ respectively, where $\epsilon^{\exptype{\tyR}}$ and $\rho^{\exptype{\tyR}}$ are two variables. The recurring layer $L$ and the recurring network $N$ is then defined and typed as follows:
\begin{align*}
  L &:=  \lambda x^{\exptype{\tyR}}.\lambda h^{\exptype{\tyR}}.\sigma(\epsilon\cdot x + \rho\cdot h)& 
  \seq{\epsilon}^{\exptype{\tyR}}, \seq{\rho}^{\exptype{ \tyR}} & \vdash L : \tyR \rightarrow \tyR \rightarrow \tyR\\
  N &:= \lambda l^{\exptype{List(\tyR, \tyR)}}.l L \num{0} 
  &
  \seq{\epsilon}^{\exptype{\tyR}}, \seq{\rho}^{\exptype{\tyR}} &\vdash N : \tyList{\tyR}{\tyR} \rightarrow \tyR
\end{align*}
The free variables $\epsilon$ and $\rho$ are the parameters that we wish to learn via gradient descent.

\paragraph*{Backpropagation}
For RNNs, the gradient is usually computed using a technique called \emph{backpropagation through time} \cite{McClellandetal,Pearlmutter}. The method consists in unfolding the RNN (by applying it to an input sequence) and then applying the usual backpropagation over plain vanilla feedforward neural networks. We will now apply our reverse gradient transformation to an example of RNN, and show that backpropagation through time is naturally implemented by the reduction strategy of Theorem~\ref{th:main theorem}.

Given $l = [\num{a_1}; \dots; \num{a_n}]_\tyR$, we compute the gradient of $N l$ with respect to $\epsilon$ and $\rho$. The following proposition exposes the recursive equations expressing the gradient computed thanks to our transformation. They are similar to the equations of backpropagation through time.

\begin{proposition}\label{prop:example-grad}
  We have $\nabla (Nl)(e, r) = \pair{g^n_{\epsilon}, g^n_{\rho}}$ where $g^n_\epsilon$ and $g^n_\rho$ are given by the following recurrent equations:
  \begin{align*}
    g^0_{\epsilon} & = \num{0} & g^{i+1}_{\epsilon} & = \sigma_{i+1}'\cdot(\num{a_{i+1}} + \num{r}\cdot g^{i}_{\epsilon})& \sigma'_{i+1} & :=  \partial_1 \sigma(\num{e}\cdot\num{a_{i+1}} + \num{r}\cdot u_i) \\
    g^0_{\rho} &= \num{0} & g^{i+1}_{\rho} &=  \sigma_{i+1}'\cdot(u_{i+1} + \num{r}\cdot g^{i}_{\rho}) & & \\
    u_0 &= \num{0} & u_{i+1} &= \sigma(\num{e}\cdot\num{a_{i+1}} + \num{r}\cdot u_{i}) & &
  \end{align*}
\end{proposition}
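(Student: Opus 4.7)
The plan is to invoke Corollary~\ref{cor:gradient} and then unwind the reverse gradient term $\revder{Nl}$ through the inductive structure of the list iteration. First, I would observe that, since $l=[\num{a_1};\dots;\num{a_n}]_\tyR$, the application $Nl=lL\num 0$ $\beta$-reduces to the nested ground term $L\num{a_1}(L\num{a_2}(\cdots(L\num{a_n}\num 0)))$ and then further $\beta$-reduces to a computational graph $G_n$ of the form $\sigma(\num e\cdot\num{a}+\num r\cdot G_{n-1})$ with $G_0=\num 0$ (up to a relabeling of the $a_i$ caused by the right-fold convention of the Church encoding versus the left-to-right recurrence in the statement). Both the number of $\beta$-steps and the size of $G_n$ are $O(n)$, so Corollary~\ref{cor:gradient} applies and the gradient is computed by reducing $\revder{Nl}$ under the substitution $\epsilon\mapsto\pair{\num e,\lambda a.\epsilon\dualvar a}$, $\rho\mapsto\pair{\num r,\lambda a.\rho\dualvar a}$.

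Next, I would compute $\revder{L}$ explicitly from Table~\ref{table:reverseTerms}, obtaining a term which, when fed a pair $P_{i+1}=\pair{\num{a_{i+1}},\lambda a.\num{\seq 0}}$ and a pair $\pair{u_i,\lambda a.\beta_i(a)}$ representing the running state, $\beta$-reduces to a pair whose first component is $u_{i+1}=\sigma(\num e\cdot\num{a_{i+1}}+\num r\cdot u_i)$ and whose second component, once the backpropagators of $\epsilon,\rho$ and $P_{i+1}$ are expanded, takes the shape
\[
\lambda a.\bigl(\epsilon\dualvar(\sigma'_{i+1}\cdot\num{a_{i+1}}\cdot a)+\rho\dualvar(\sigma'_{i+1}\cdot u_i\cdot a)+\beta_i(\num r\cdot\sigma'_{i+1}\cdot a)\bigr),
\]
modulo irrelevant $\num{\seq 0}$ contributions arising from the null backpropagators carried by the numerals $\num{a_j}$. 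By Lemma~\ref{lemma:Commutation}, $\revder{Nl}$ then reduces to the iterated application of $\revder{L}\,P_{i+1}$ to $\pair{\num 0,\lambda a.\num{\seq 0}}$, performed $n$ times.

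The heart of the argument is a straightforward induction on $i$ establishing that, after the $i$-th such iteration, $\beta_i(a)\freds\epsilon\dualvar(\num{g^i_\epsilon}\cdot a)+\rho\dualvar(\num{g^i_\rho}\cdot a)$ with $u_i, g^i_\epsilon, g^i_\rho$ defined by the recurrences of the proposition. In the inductive step one substitutes the hypothesis into the displayed formula above and uses linear factoring~\eqref{red:linear} to merge the two occurrences of $\epsilon\dualvar$ (respectively $\rho\dualvar$) into a single term, while~\eqref{red:plus},~\eqref{red:times} and~\eqref{red:subst}$^{\mathfrak n}$ contract the numerical coefficients and compute $\sigma'_{i+1}$ via rule~\eqref{red:function}; the resulting coefficients are exactly $\sigma'_{i+1}\cdot(\num{a_{i+1}}+\num r\cdot g^i_\epsilon)$ and $\sigma'_{i+1}\cdot(u_{i+1}+\num r\cdot g^i_\rho)$. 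Applying Corollary~\ref{cor:gradient} to extract the gradient—that is, feeding $\num 1$ to $\beta_n$ and reading off the $\epsilon\dualvar$ and $\rho\dualvar$ tags—then yields $\pair{g^n_\epsilon,g^n_\rho}$.

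The main obstacle I anticipate is the bookkeeping of the linear-factoring step: each layer of the RNN contributes two fresh occurrences of $\epsilon\dualvar$ and $\rho\dualvar$ that must be merged with the occurrences already present inside $\beta_i$, and one must verify that these are the only contributions surviving in the limit, in particular that the null backpropagators attached to the numerals $\num{a_j}$ do not interfere with the accumulation. A minor subtlety is reconciling the right-fold Church encoding (which processes the list from tail to head) with the left-to-right indexing used in the statement, a purely cosmetic relabeling to fix once at the outset.
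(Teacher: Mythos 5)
Your proof follows essentially the same route as the paper's: invoke Corollary~\ref{cor:gradient}, compute the action of $\revder{L}$ on one layer (the paper's Lemma~\ref{lemma:example-backprop}), and then induct on the length of the list (the paper's Lemma~\ref{lemma:recurrent-gradient}), finally reading off the gradient from the $\epsilon\dualvar$- and $\rho\dualvar$-tagged sum. The one substantive difference is the shape of your inductive invariant: you perform linear factoring \emph{eagerly} at each layer, claiming $\beta_i(a)\freds\epsilon\dualvar(\num{g^i_\epsilon}\cdot a)+\rho\dualvar(\num{g^i_\rho}\cdot a)$ with $\num{g^i_\epsilon}$ a numeral, whereas the paper keeps the backpropagator in nested, unfactored form ($u_i'$ applied inside $u_{i+1}'$) and defers both the factoring and the numerical evaluation until $\num 1$ is supplied at the very end. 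Your stronger invariant is not literally reachable in the rewrite system: after factoring you obtain $\epsilon\dualvar(\num{c_1}\cdot a+\num{c_2}\cdot a)$, and there is no rule turning $\num{c_1}\cdot a+\num{c_2}\cdot a$ into $\num{c_1+c_2}\cdot a$ while $a$ is still abstract --- the calculus has no syntactic distributivity, and the paper is explicit (in the proof of Proposition~\ref{prop:backprop_sound}) that such identities are used only on real numbers, not on the corresponding syntactic symbols. This is easily repaired by either stating the invariant with the coefficients as unevaluated terms denoting $g^i_\epsilon$ and $g^i_\rho$, or by adopting the paper's nested invariant and factoring only after instantiating $a:=\num 1$; with that adjustment your argument goes through and the remaining bookkeeping (the null backpropagators of the numerals $\num{a_j}$, the tail-to-head fold order) is handled exactly as in the paper.
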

\begin{proof}
By Corollary \ref{cor:gradient}, this amounts to computing $D = \epsilon^* g_\epsilon + \rho^* g_\rho$ such that $$
\unpair{z^*\num 1}{z}{z^*}{\revder{Nl}\esub{\epsilon}{\pair{\epsilon,\lambda a.\epsilon^*a}}\esub{\rho}{\pair{\rho,\lambda a.\rho^*a}}}\esub{\epsilon}{\num{e}}\esub{\rho}{\num{r}} \reds{} D
$$
This is done by induction over the size of the list $l$.
\end{proof}

The strategy from Theorem \ref{th:main theorem} first computes $\revder{Nl}$, then reduces it to $\revder{G}$ such that $Nl \reds{} G$, with $G$ a ground term, \ie, only containing subterms of ground type. Here:
$$G = F(\num{a_1}, F(\num{a_2}, \dots, F(a_n, \num{0}) \dots))$$
where 
$L = \lambda x.\lambda h.F$ with $F = \epsilon\cdot x + \rho\cdot h$. Notice that $G$ is exactly the unfolding of the RNN, which should convince the reader that this strategy implements the backpropagation through time.

We now turn our attention to the effectiveness of the reduction strategy. During the reduction of $Nl$ to $G$, the $\lambda$-abstractions of $L$ must be eliminated. Each of these $\lambda$ is applied exactly to $n$ different arguments, requiring $L$ to be duplicated $n$ times. In other words, we satisfy the following reduction:
$$Nl \reds{} L\num{a_1}(L\num{a_2} \dots (L\num{a_n}\num{0})\dots) \reds{} G$$
By point 2 of \reflemma{Commutation}, the exact same reasoning may be applied to $\revder{Nl}$, in which $\revder{L}$ must be duplicated $n$ times:
$$\revder{Nl} \reds{} \revder{L}\revder{\num{a_1}}(\revder{L}\revder{\num{a_2}} \dots (\revder{L}\revder{\num{a_n}}\revder{\num{0}})\dots) \reds{} \revder{G}$$

A simple observation shows that $\revder{L}$ is not in $\beta$-normal form:
$$\revder{L} = \lambda x.\lambda h.\unpair{\pair{\sigma(\mu), \lambda a.\mu^* (\partial_1 \sigma(\mu)\cdot a)}}{\mu}{\mu^*}{\revder{\epsilon\cdot x+\rho\cdot h}}$$
$\revder{\epsilon\cdot x+\rho\cdot h}$ being a pair, we have $\revder{L} \reds[k]{\beta} u$ for some $\beta$-normal $M$. Moreover, it can be shown that when the dimensions $d$ and $m$ are $> 1$, $k = O((d+m)\times m)$. In our case, we see that each component of the sum in $\revder{\epsilon\cdot x + \rho\cdot h}$ will be responsible for at least one substitution. Finally, since $\revder{L}$ is duplicated $n$ times in the original backpropagation through time strategy, by using the strategy where $\revder{L}$ is reduced to $M$ before being substituted, we obtain a gain of $O( (d+m) m n)$ reduction steps. This is significant as the number of learning parameters (here $(d+m)m$) can be very large in typical neural networks. For instance, a recent model called GPT-2 \cite{radford2019language} has 1.5 billion parameters.

\section{Conclusion and Perspectives}\label{sect:concl}


\paragraph*{On expressiveness.} The simply-typed $\lambda$-calculus is certainly too restrictive as a programming language, but it does present the main obstacle in defining higher-order backpropagation, namely the native use of higher-order without necessarily going through computational graphs. It also has a minimum of expressiveness for interesting examples to exist, such as inductive types with very basic operations on them ({\tt map}, {\tt fold}), as shown in \refsect{example}. In fact, our results apply seamlessly to any \emph{total} language with set-theoretic semantics (\eg\ G\"odel's System T with arbitrary inductive types), with no need of additional technical ideas. This is already quite broad: no state-of-the-art deep learning architecture we are aware of (convolutional NNs, RNNs, Tree-RNNs, attention-based NNs, transformers\ldots) requires going beyond System T in order to be expressed.

We wish to stress that our current proof \emph{does} support, unchanged, non-differentiable functions like $\mathrm{ReLU}$. As mentioned at the end of Sect.~\ref{subsect:symbolic}, the differentiability hypothesis (\ref{hyp:diffentiability}) (end of \refsect{prel}) is essentially cosmetic, in that it allows us to use actual gradients and to avoid the more technical notion of subgradient. In the absence of (\ref{hyp:diffentiability}), \refcor{gradient} holds for all vectors in the domain of definition of Eq.~\ref{eq:syntactalGradient}, \ie, wherever the gradient makes sense.

Still in relation with partiality, we argue that our framework can accommodate full recursion (like \cite{Purdue}), by adding the definition $\revder{\mathsf Y_A}:=\mathsf Y_{\revder A}$ to \reftab{reverse}, where $\mathsf Y_A:(A\to A)\to A$ is the fixpoint operator. As for denotational semantics, one has to consider the cartesian closed category of pointed complete partial order sets and Scott-continuous functions. The type $\tyR$ is interpreted as the flat domain having a bottom element (representing non-termination) and all real numbers as maximal elements. \refcor{gradient} then holds for all vectors on which the program converges and is differentiable. However, this still falls short of being a full programming language like PCF because it lacks conditionals (on $\tyR$). The issues related to the presence of branchings are mentioned in~\cite{Plotkin}, where a proposal is suggested for first-order languages.

%

\paragraph*{On complexity.} 
In computational graphs, complexity analysis assumes that computing the value of a single node from its local inputs has unitary cost. In terms of low-level complexity models (\eg\ random access machines), this rests on the assumption that \emph{searching for the next node to evaluate is a constant-time operation}, which is fair because we may suppose the nodes to be linearly ordered compatibly with the dependencies of the graph.

Our analysis shows that, as soon as the programming language is suitably fine-grained, it is consistent to attribute a unitary cost to a single evaluation step (\ie, a rewriting rule from \reftab{reductions}). However, in general programming languages, it is unclear whether the search for the next redex has constant cost.  Albeit recent work \cite[Corollary 9.2]{AccattoliBarras} shows that this is the case in call-by-name evaluation of closed programs, a detailed analysis for our language is currently missing and we defer it to future work. 

\paragraph*{On higher-type derivatives.}
A major endeavor relating functional primitives with differential operators is differential linear logic \cite{Ehrhard18} and its associated differential \mbox{$\lambda$-calculus} \cite{EhrhardR03}. Indeed, the initial motivation of our work was to express the backpropagation algorithm in the differential \mbox{$\lambda$-calculus}. Although this is possible, we realized that it required discarding the most important programming primitive of the differential $\lambda$-calculus, namely the derivative operator $\mathrm D$ on higher-order types. This is probably a good place to point out a crucial feature of our program transformation $\revderSymbol$. Take a term $\lambda x.t(ux)$ of type $\tyR\rightarrow\tyR$ and such that $\vdash t:A\rightarrow \tyR$ and $\vdash u:\tyR\rightarrow A$. We have two different ways of computing the derivative of $t(ux)$ with respect to $x^\tyR$: either by our transformation $\revder{\lambda x.t(ux)}$ or by Ehrhard's derivative operator $\mathrm D (\lambda x.t(ux))$. Both ways are purely functional, but our operator decomposes as 
$\revder{\lambda x.t(ux)}=_{\beta} \lambda \pair{x, a}.\revder{t}(\revder{u}\pair{x, a})$, while Ehrhard's follows the chain rule (see Sect.~\ref{subsect:symbolic}), giving 
$\mathrm D (\lambda x.t(ux)) =_\beta \lambda\pair{x, a}. \mathrm D(t) \pair{ux, \mathrm D(u) \pair{x,a}}$.\footnote{Here we use some syntactic sugar in order to avoid notational bureaucracy. For example,  $\lambda \pair{x,a}\dots$ stands for $\lambda y.\dots\esub{\pair{x,a}}{y}$ in our language. Also, the differential $\lambda$-calculus of \cite{EhrhardR03} does not have pairs and so $\lambda\pair{x, a}$ is curried into $\lambda x.\lambda a$.} It is clear that this latter expression is inefficient with respect to backpropagation because of the duplication of the term $u$ (see also Sect.~\ref{sect:tutorial}).

So our $\revderSymbol$ and the operator $\mathrm D$ of the differential \mbox{$\lambda$-calculus} are extensionally different.  This is immediately seen on ``pure'' $\lambda$-terms (with no function symbols): when $A$ is higher order, there are pure terms $t:A\to\tyR$ with non-trivial derivative, which is \emph{always} computed by $\mathrm D(t)$, whereas $\revder t=t$ (modulo a type change), which cannot compute the derivative.\footnote{This does not contradict \refcor{gradient} because, when $A$ is of order zero (\ie, $A=\tyR^n$), any pure $t$ must be a projection and the above equality is correct.} 
Indeed, observe that our soundness statements (Theorem~\ref{th:main theorem}, Corollary~\ref{cor:gradient}) only hold for terms of ground type. At present, we seem to have no use for the extra generality provided by the differential $\lambda$-calculus, but it is a natural and intriguing question to understand whether the ability to compute derivatives at higher types (rather than just over $\tyR$) can play a role in developing new machine learning models.

\begin{acks}                            

 We would like to thank T.~Ehrhard, C.~Fouquer\'e, M.~Gaboardi, B.A.~Pearlmutter, Y.~Regis-Gianas, J.M.~Siskind, C.~Tasson and the anonymous reviewers for useful comments and discussions.

\end{acks}

\bibliography{Biblio.bib}

\newpage
\appendix
\section{Appendix}
\subsection{Proofs of Section~\ref{sect:prel}}

\noindent\textsc{Proposition \ref{prop:subjectReduction}}
{\it If $t\fred u$ or $t\equiv u$ and $\Gamma\vdash t:A$, then $\Gamma\vdash u:A$.}

\begin{proof}[Proof (Sketch)]
Let $\ctxt t\fred\ctxt u$ (resp.\ $\ctxt t\equiv\ctxt u$), where $t\red u $ (resp.\ $t\equiv u$) is a reduction step (resp.\ a structural equivalence) in \reftab{reductions}. The proof is a standard induction on the evaluation context $\metaCtxt$. The base case is by inspection of all cases in \reftab{reductions}, using the following properties:
\begin{itemize}
\item (substitution) if $\Gamma,x^A\vdash t: B$ and $\Gamma\vdash u:A$ then $\Gamma\vdash t\isub{u}{x}: B$;
\item (splitting) if $\Gamma,x^A\vdash t: B$ then for any term $t_{\{y/x\}}$ obtained from $t$ by renaming some (possibly none) occurrences of $x$ to a fresh variable $y$, we have $\Gamma, x^A, y^A\vdash t_{\{y/x\}}:B$.
\end{itemize}
The two properties are achieved by induction on $t$. Notice that the splitting property implies the so-called ``weakening'' lemma:  $\Gamma\vdash t: B$ implies  $\Gamma,x^A\vdash t: B$ for a fresh variable $x$. 
\end{proof}

\noindent\textsc{Lemma \ref{lemma:bisimulation}}
{\it Let $\iota$ be any reduction rule and let $t'\equiv t\red[\iota] u$, then there exists $t'\red[\iota] u'$ such that $u'\equiv u$.}
\begin{proof}[Proof (Sketch)]
The proof goes along the same lines as~\cite{Accattoli:2014}. We denote by $\equiv_1$ the symmetric and context closure of the relation induced by the rules \eqref{eq:comm}--\eqref{eq:app2}, \ie, $\equiv$ is the reflexive-transitive closure of $\equiv_1$. We then prove the following lemmata.
\begin{enumerate}
\item If $v\equiv_1 t$ and $v$ is a value, then $t$ is also a value.

(Remark that no structural equivalence rule allows for commuting an explicit substitution with a $\lambda$-abstraction. In the case $v=\pair{v_1,v_2}$, then the only possibility is $u=\pair{u_1,u_2}\alpha$ (rules \eqref{eq:app1}, \eqref{eq:app2}), but in this case we would have $v_i=u_i\alpha$, for $i=1$ or $i=2$, contradicting the hypothesis of $v_i$ to be a value).

\item If $\ctxt x\equiv_1 u$, then $u=\ctxt[\metaCtxt']{\subctxt x}$, for some context $\metaCtxt'$ and explicit substitution $\metaSubCtxt$.

(By induction on $\metaCtxt$ and case inspection).

\item If $t\equiv_1 t'$ and $t\red[\iota]u$, for some reduction step applied to the root of $t$, then there exists $u'\equiv u$ such that $t'\red[\iota] u'$.

(By case inspection. Notice that in case $t\esub{x}{v}\red[\eqref{red:gc}]t$ and $t\esub{x}{v}\equiv_1 t\esub{x}{u}$, with $v\equiv_1 u$, one should use (1) to assure that $u$ is a value, in particular no explicit substitution has been lifted from inside $v$).

\item If $t\equiv_1 t'$ and $t\red[\iota]u$, then there exists $u'\equiv u$ such that $t'\red[\iota] u'$.

(By 2 and 3).
\end{enumerate}

The general statement then follows by induction on the length of a sequence $t\equiv_1\dots\equiv_1 t'$ giving $t\equiv t'$, using (4) for the induction step. 
\end{proof}

\noindent\textsc{Proposition~\ref{prop:postponement}}
{\it Let $X$ be any subset of the reduction rules in \reftab{reductions} (including the variant $\eqref{red:subst}^{\mathfrak n}$) and let $t \mathrel{(\red[X]\cup\equiv)^\ast} u$ with $k$ $X$-steps, then there exists $u'$ such that $t\reds[k]{X} u'\equiv u$.}
\begin{proof}
By induction on $k$, using \reflemma{bisimulation}.
\end{proof}

\noindent\textsc{Proposition~\ref{prop:values}.}
{\it Given a closed term $t$ of type $A$, if $t$ is a $\beta$-normal form, then it is a value.}
\begin{proof}
By induction on a derivation $\vdash t:A$. If the last rule is an $\rightarrow$ introduction, then $t$ is a value. If it is an $\rightarrow$ elimination, then $t=u_1u_2$ and $\vdash u_1:B\rightarrow A$ or $\vdash u_1:\dual$. By IH $u_1$ is then a value, hence of the form $\lambda x.u'$ and therefore $u_1u_2$ is a redex \eqref{red:lambda}. If the last rule is a $\times$ introduction, we apply trivially the IH. If it is a $\times$ elimination, then $t=\unpair{t'}{x}{y}{u}$ and $\vdash u: A\times B$. By IH $u$ is a value, and hence it is of the form $u=\pair{v_1,v_2}$, therefore $t$ is a \eqref{red:pair} redex. The explicit substitution rule is similar to the previous one, using \eqref{red:subst} or \eqref{red:gc}. If $t=f(t_1,\dots,t_n)$, then we have by IH that each $t_i$ is a closed value, so a numeral, so we can apply the numerical rule associated with $f$. If $t$ is a numeral, then it is a value. If $t=t_1+t_2$, then by IH each $t_i$ is a value so we may apply either \eqref{red:plusPair} or \eqref{red:plus}.  
\end{proof}

\noindent\textsc{Proposition~\ref{prop:wn}}
{\it For every term $t$, and every set $X\subseteq\beta\ell$, there exists a $X$-normal form $u$ such that $t\reds{X} u$.}
\begin{proof}[Proof (Sketch)]
In case $\eqref{red:subst}\notin X$, it is trivial to find a notion of weight $w(t)\in\Nat$ such that $t\red[X]t'$ implies $w(t)>w(t')$. Otherwise, let $k$ be the greatest size of the type of a subterm of $t$, and let $W(t)$ be the sequence $(n_1,\dots, n_k, n_{k+1})$ where $n_{k+1}=w(t)$ and for every $1\leq i\leq k$, $n_i$ is the number of redexes of type $\eqref{red:lambda},\eqref{red:pair},\eqref{red:subst}$ in $t$ such that the term in the explicit substitution has a type of size $k+1-i$ (so that, as $i$ grows bigger, $n_i$ accounts for smaller and smaller types). It is not hard to show that, if $t$ is not $X$-normal, then $t\red[X]t'$ for some $t'$ such that $W(t)>W(t')$ in the lexicographical order. 
\end{proof}

\noindent\textsc{Lemma~\ref{lemma:size_shrinks}}
{\it For any $t\reds{\eqref{red:subst}^{\mathfrak n}\eqref{red:gc}\eqref{red:plus}\eqref{red:times}}u$, the number of reduction steps in the sequence is $O(\size t)$.}
\begin{proof}
Let $\bisize t$ be the number of symbols in $t$ where each variable is counted twice. The claim then follows by remarking that any rule \eqref{red:gc}, \eqref{red:plus}, \eqref{red:times} makes $\bisize t$ strictly decrease, and the same for \eqref{red:subst} whenever the substituted term is a numeral. 
\end{proof}

\begin{table}
\begin{subtable}{\textwidth}
\begin{align*}
\sem\tyR&:= \mathbb R\\
\sem{A\times B}&:= \sem{A}\times \sem{B}\\
\sem{A\rightarrow B}&:= \text{set of functions from }\sem{A}\text{ to }\sem{B}\\
\sem{\dual[d]}&:= \text{set of linear maps from }\mathbb R\text{ to }\sem{\tyR^d}\\
\sem{x_1^{\exptype{A_1}},\dots,x_n^{\exptype{A_n}}}&:= \sem{A_1}\times\dots\times\sem{A_n}
\end{align*}
\caption{interpretation of types}
\label{subtable:denotationTypes}
\end{subtable}

\begin{subtable}{\textwidth}
\begin{align*}
\sem{\Gamma\vdash t:A}:&\text{ function from }\sem{\Gamma}\text{ to } \sem{A}&
\sem{\Gamma\vdash_z t:\tyR^d}:&\text{ function from }\sem{\Gamma}\text{ to } \sem{\dual[d]}
\end{align*}

\begin{align}
\sem{\Gamma\vdash x:A}(\seq g)&:= g_x\\
\sem{\Gamma\vdash_z z:\tyR}(\seq g)&:= r\mapsto r\\
\sem{\Gamma\vdash f(\seq t):\tyR}(\seq g)&:=f(\sem{\seq{\Gamma\vdash t:\tyR}}(\seq g))\\
\sem{\Gamma\vdash\num r:\tyR}(\seq g)&:= r\\
\sem{\Gamma\vdash\lambda x^{\exptype{A}}. t:A\to B}(\seq g)&:= d \mapsto \sem{\Gamma,x^{\exptype{A}}\vdash t:B}(\seq g,d)\\
\sem{\Gamma\vdash tu:B}^\Gamma(\seq g)&:=\sem{\Gamma\vdash t:A\to B}(\seq g)(\sem{\Gamma\vdash u:A}(\seq g))\\
\sem{\Gamma\vdash\lambda z^{\lintype}. t:\dual[d]}(\seq g)&:= \sem{\Gamma\vdash_z t:\tyR^d}(\seq g)\\
\sem{\Gamma\vdash_z tu:\tyR^d}(\seq g)&:=a\mapsto\sem{\Gamma\vdash t :\tyR^d}(\seq g)(\sem{\Gamma\vdash_z u:\tyR}(\seq g)(a))
\label{sem:tu_lin}\\
\sem{\Gamma\vdash_z\pair{t, u}:\tyR^d\times\tyR^{d'}}^{\Gamma}(\seq g)&:=a\mapsto(\sem{\Gamma\vdash_z t:\tyR^d}(\seq g)(a),\sem{\Gamma\vdash_z u:\tyR^{d'}}(\seq g)(a))
\label{sem:pair_lin}\\
\sem{\Gamma\vdash_z t\esub{x^{\exptype A}}{u}:\tyR^d}(\seq g)&:=a\mapsto\sem{\Gamma,x^{\exptype A}\vdash t:\tyR^d}(\seq g,\sem{\Gamma\vdash u::\tyR}(\seq g))(a)
\label{sem:esub_lin}\\
\sem{\Gamma\vdash_{z_1} t\esub{z_2^{\lintype}}{u}:\tyR^d}(\seq g)&:=a\mapsto\sem{\Gamma\vdash_{z_2}t:\tyR^d}(\seq g)(\sem{\Gamma\vdash_{z_2}u:\tyR}(\seq g)(a))
\label{sem:esub_linlin}\\
\sem{\Gamma\vdash_z\unpair t {x^{\exptype A}} {y^{\exptype B}} u:\tyR^d}(\seq g)&:=a\mapsto\sem{\Gamma,x^{\exptype A},y^{\exptype B}\vdash_z t:\tyR^d}(\seq g,\sem{\Gamma\vdash u:A\times B}(\seq g))(a)\\
\sem{\Gamma\vdash_z t\cdot u:\tyR}(\seq g)&:=a\mapsto\sem{\Gamma\vdash_z t:\tyR}(\seq g)(a)\cdot\sem{\Gamma\vdash u:\tyR}(\seq g)\\
\sem{\Gamma\vdash_z\num 0:\tyR}(\seq g)&:=a\mapsto 0\\
\sem{\Gamma\vdash_zt+u:\tyR^d}(\seq g)&:=a\mapsto\sem{\Gamma\vdash_zt:\tyR^d}(\seq g)(a)+\sem{\Gamma\vdash_z u:\tyR^d}(\seq g)(a)
\end{align}
\caption{interpretation of terms. In the cases admitting two version (with or without linear variables), we considered only the case with linear variables, the other being simpler. The reader may  check that the linearity (i.e.~ commutation with sums and scalar products) of the parameter $a$ associated with a linear variable is respected.}
\label{subtable:denotationTerms}
\end{subtable}
\caption{The denotational model induced by the category of sets and functions.}
\label{table:denotational}
\end{table}
Table~\ref{table:denotational} recalls the denotational model induced by the cartesian closed structure of the category of sets and functions. The proof of the soundness property is completely standard.
\subsection{Proofs of Section~\ref{sect:results}}

\noindent\textsc{Lemma~\ref{lemma:backprop_size}.}
{\it Let $G$ be a ground term whose free variables are given by a sequence $\seq x$ of length $n$. Then, 
$$\size{\bp G}= O(\size G).$$
}
\begin{proof}
We actually prove a slight stronger claim: let $G$ be a ground term whose free variables are \emph{in} a sequence $\seq x$ of length $n$. Then, 
$$\size{\bp G}= O(n+\size G).$$
Notice that then whenever $\seq x$ contains exactly the free variables of $G$, $n\leq \size G$ and so $O(n+\size G)=O(\size G)$. 

By induction on the definition of $\bp G=\pair{G_0,\pair{G_1,\ldots,G_n}\alpha}\beta$, one proves that:
\begin{itemize}
\item[(i)] $\size{G_0\beta}= O(\size G)$;
\item[(ii)] for all $0<i\leq n$, $\size{G_i}= O(k(\#_{x_i}G +1))$, with $\#_{x_i}G$ is the number of occurrences of $x_i$ in $G$;
\item[(iii)] for all $0<i\leq n$, $\size{G_i\alpha}= O(\size G)$.
\end{itemize}
Notice that the items (i)-(iii) give that 
$\size{\bp G}=\sum_{i=1}^n\size{G_i}+\size{G_0\beta}+\size{\alpha}= O(\#_{x_1}G+\dots+\#_{x_n}G)+2\size G) =O(\size G))$. 

We detail only the case $G=F\esub{z}{F'}$, taking the notation of the definition of $\bp[]{\,}$. Item (i) follows trivially from the induction hypothesis. Item (ii) is a consequence of the induction hypothesis and the fact that $\#_{x_i}(G)=\#_{x_i}(F)+\#_{x_i}(F')$.
Concerning item (iii), we have that: $\size{(F_i\oplus F_i')\alpha'\esub{b}{H}\alpha} = \size{F_i\alpha}+\size{F_i'\alpha'}+1+\size H= O(\size F+\size{F'})= O(\size G)$.

The factor $k$ in (ii) is due to the case $G=f(x_{i_1},\ldots,x_{i_k'})$.
\end{proof}
Notice that the above lemma would fail if we replace the meta-operator $\oplus$ with the sum constructor $+$ in the definition of $\bp{G}$. In that case we would have the size of $\bp{x_1+\dots+x_n}$ to be quadratic in $n$ because $\bp{x_1+\dots+x_n}$ copies the $n-1$ sums for every component of the gradient tuple. 

\noindent\textsc{Proposition \ref{prop:backprop_sound}.}
{\it Let $G$ be a ground term whose free variables are given by a sequence $\seq x$ of length $n$. Then for every $\seq r\in \mathbb R^n$, we have:
	$$
		\bp{G}\esub{a}{\num 1}\esub{\seq x}{\seq{\num r}}
		\;\reds[O(\size G)]{\eqref{red:subst}^{\mathfrak n}\eqref{red:gc}\eqref{red:plus}\eqref{red:times}}\equiv\;
		\pair{\num{\sem{G}(\seq{r})},\nabla G(\seq r)}.
	$$
}
\begin{proof}
By induction on $G$ we prove that, if $\bp{G}=\pair{G_0,\pair{G_1,\ldots,G_n}\alpha}\beta$, then, for every vector of real numbers $\seq r$, we have:
	\begin{itemize}
	\item[(i)] $G_0\beta\esub{\seq x}{\seq{\num r}}\reds{\eqref{red:subst}^{\mathfrak n}\eqref{red:gc}\eqref{red:plus}\eqref{red:times}}\equiv \num{\sem{G}(\seq{r})}$;
	\item[(ii)]  for all $1\leq i\leq n$, for all real number $q$, $G_i\alpha\beta\esub{a}{\num q}\esub{\seq x}{\seq{\num r}}\reds{\eqref{red:subst}^{\mathfrak n} \eqref{red:gc}\eqref{red:plus}\eqref{red:times}}\equiv\num{\partial_{x_i} \sem{G}(\seq r)\cdot q}$;
	\end{itemize}
From $(i)$ and $(ii)$ then follows: 
\begin{multline*}
\bp{G}\esub{a}{\num 1}\esub{\seq x}{\num r}\equiv\\
\pair{G_0\beta\esub{\seq x}{\seq{\num r}},\pair{G_1\alpha\beta\esub{a}{\num 1}\esub{\seq x}{\seq{\num r}},\ldots,G_n\alpha\beta\esub{a}{\num 1}\esub{\seq x}{\seq{\num r}}}}
\reds{\eqref{red:subst}^{\mathfrak n}\eqref{red:gc}\eqref{red:plus}\eqref{red:times}}\equiv
\pair{\num{\sem{G}(\seq{r})},\nabla G(\seq r)}.
\end{multline*}
Proposition~\ref{prop:postponement} allows to postpone all structural equivalences at the end. From Lemma~\ref{lemma:size_shrinks} and the fact that $\size{\bp{G}}=O(\size G)$ (Lemma~\ref{lemma:backprop_size}) the length of the reduction is $O(\size G)$

The non-trivial case for $(i)$ and $(ii)$ is $G=F'\esub{z}{F''}$. Using the notations from the definition:
\begin{align*}
G_0\beta\esub{\seq x}{\seq{\num r}}
&=F_0'\beta'\esub{z}{F_0''}\beta''\esub{\seq x}{\seq{\num r}}\\
&\equiv F_0'\beta'\esub{\seq x}{\seq{\num r}}\esub{z}{F_0''\beta''\esub{\seq x}{\seq{\num r}}}&
\\
&\reds{\eqref{red:subst}^{\mathfrak n}\eqref{red:gc}\eqref{red:plus}\eqref{red:times}}
F_0'\beta'\esub{\seq x}{\seq{\num r}}\esub{z}{\num{\sem{F''}(\seq r)}}&\text{by IH}\\
&
\reds{\eqref{red:subst}^{\mathfrak n}\eqref{red:gc}\eqref{red:plus}\eqref{red:times}}
\num{\sem{F'}(\seq{ r},\sem{F''}(\seq r))}=\num{\sem{G}(\seq{ r})}&\text{by IH}
\end{align*} 

As for (ii), we have that the $i$-th component of the gradient tuple of $\bp{G}$ together with the associated substitutions is equal to (we suppose $F_i'\oplus F_i''=F_i'+F_i''$, the other cases being simpler):
\begin{align*}
&(F_i'+F_i'')\alpha''\esub{b}{H}\alpha'\beta'\esub{z}{F_0''}\beta''\esub{a}{\num q}\esub{\seq x}{\seq{\num r}}\\
&\equiv(F_i'+F_i'')\alpha''\esub{b}{H}\alpha'\beta'\beta''\esub{a}{\num q}\esub{\seq x}{\seq{\num r}}\esub{z}{F_0''\beta''\esub{\seq x}{\seq{\num r}}}&
\\
&
\reds{\eqref{red:subst}^{\mathfrak n}\eqref{red:gc}\eqref{red:plus}\eqref{red:times}}
(F_i'+F_i'')\alpha''\esub{b}{H}\alpha'\beta'\beta''\esub{a}{\num q}\esub{\seq x}{\seq{\num r}}\esub{z}{\num{\sem{F''}(\seq r)}}&\text{by (i)}\\
&\equiv(F_i'+F_i'')\alpha''\alpha'\beta'\beta''\esub{a}{\num q}\esub{\seq x}{\seq{\num r}}\esub{z}{\num{\sem{F''}(\seq r)}}\esub{b}{H\alpha'\beta'\beta''\esub{a}{\num q}\esub{\seq x}{\seq{\num r}}\esub{z}{\num{\sem{F''}(\seq r)}}}&
\\
&\reds[O(\size F)]{\eqref{red:subst}^{\mathfrak n}\eqref{red:gc}\eqref{red:plus}\eqref{red:times}}
(F_i'+F_i'')\alpha''\alpha'\beta'\beta''\esub{a}{\num q}\esub{\seq x}{\seq{\num r}}\esub{z}{\num{\sem{F''}(\seq r)}}\esub{b}{\num{\partial_z \sem{F'}(\seq r)\cdot q}}&\text{by IH}\\
&\equiv 
F_i'\alpha'\beta'\esub{a}{\num q}\esub{\seq x}{\seq{\num r}}\esub{z}{\num{\sem{F''}(\seq r)}} 
+
F_i''\alpha''\beta''\esub{b}{\num{\partial_z \sem{F'}(\seq r)\cdot q}}\esub{\seq x}{\seq{\num r}}&
\\
&
\reds{\eqref{red:subst}^{\mathfrak n}\eqref{red:gc}\eqref{red:plus}\eqref{red:times}}
\num{\partial_{x_i}\sem{F'}(\seq r)\cdot q}
+
\num{\partial_{x_i}\sem{F''}(\seq r)\cdot \left(\partial_z \sem{F'}(\seq r)\cdot q\right)}&\text{by IH}\\
&\red[\eqref{red:plus}]\num{\partial_{x_i}\sem{F'}(\seq r)\cdot q+\partial_{x_i}\sem{F''}(\seq r)\cdot \left(\partial_z \sem{F'}(\seq r)\cdot q\right)}\\
&=\num{
	\left({\partial_{x_i}\sem{F'}}(\seq r)+\partial_z \sem{F'}(\seq r)\cdot \partial_{x_i}\sem{F''}(\seq r)\right)\cdot q
}=\num{\partial_{x_i}\sem{G}(\seq r)\cdot q}
\end{align*}
Notice that in order to move to the last line we  use the associativity, commutativity and distributivity of $+$ and $\cdot$ over real numbers, not on the corresponding syntactic symbols. 

Let us consider also the case $G=\pair{F',F''}$. Using the notations from the definition:
\begin{align*}
G_0\beta\esub{\seq x}{\seq{\num r}}
&=\pair{F_0',F_0''}\beta'\beta''\esub{\seq x}{\seq{\num r}}\\
&\equiv
\pair{F_0'\beta'\esub{\seq x}{\seq{\num r}},F_0''\beta''\esub{\seq x}{\seq{\num r}}}&
\\
&\reds{\eqref{red:subst}^{\mathfrak n}\eqref{red:gc}\eqref{red:plus}\eqref{red:times}}\,\equiv\,
\pair{\num{\sem{F'}(\seq r)},\num{\sem{F''}(\seq r)}}&\text{by IH}\\
&=\num{\sem{G}(\seq{ r})}
\end{align*} 
The case (ii) is similar (just a sum instead of a pair). 

In the base cases (variables, functional symbols and numerals), one performs linear substitutions~\eqref{red:subst} as well as garbage collection~\eqref{red:gc}. In the case of $\bp[\seq x,a]{f(x_{i_1},\dots,x_{i_k}})$  one instance of the rule~\eqref{red:times} is needed.
\end{proof}

\begin{lemma}\label{lemma:sizeReverse}
Let $t$ be a term of $\Terms[\mathcal F]$. We have: $\size{\revder{t}}= O(\size t)$.
\end{lemma}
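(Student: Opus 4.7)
The plan is to prove the bound by structural induction on $t$, showing that each clause of the definition in \reftab{reverseTerms} adds only a constant number of symbols on top of the sizes of the transformed subterms. Let $K$ denote the maximum arity of function symbols in $\mathcal F$, which is a finite constant since $\mathcal F$ is finite. I will establish, by induction on $t$, that $\size{\revder{t}} \leq c\cdot\size{t}$ for a constant $c$ depending only on $K$.

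The base cases ($\revder{x^{\exptype A}} = x^{\exptype{\revder A}}$ and $\revder{\num r} = \pair{\num r,\lambda a.\num{\seq 0}}$) are immediate: the first has size $1$ and the second is of constant size, independent of the source. The congruence cases ($\revder{\lambda x.t}$, $\revder{tu}$, $\revder{\pair{t,u}}$, $\revder{\unpair{t}{x}{y}{u}}$, and $\revder{t\esub{x}{u}}$) each introduce a bounded number of syntactic symbols (at most four or five, counting the binder, the application, the pattern, and the explicit substitution) on top of the recursive translations of the subterms, so the induction hypothesis yields the desired linear bound. The case $\revder{t+u}$ produces a term of size $O(1) + \size{\revder{t}} + \size{\revder{u}}$, which is again $O(\size{t+u})$ by induction.

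The only case that requires real attention is $\revder{f(\seq t)}$ for a function symbol $f$ of arity $k \leq K$. Unfolding the definition:
\[
\revder{f(\seq t)} = \unpair{\pair{f(\seq x)\ ,\ \lambda a.\sum_{i=1}^k x_i\dualvar(\partial_i f(\seq x)\cdot a)}}{\seq x^{\exptype{\tyR}}}{\seq{x\dualvar}^{\exptype{\dual}}}{\revder{\seq t}},
\]
where the sequence of nested pattern-matching explicit substitutions $\esub{\pair{\seq x,\seq x\dualvar}}{\revder{\seq t}}$ contributes $\sum_{i=1}^k \size{\revder{t_i}} + O(k)$ symbols, and the body (the function application $f(\seq x)$, the abstraction $\lambda a.\dots$, the summation with $k$ terms each involving an application of $x_i\dualvar$ to a product $\partial_i f(\seq x)\cdot a$) also contributes $O(k)$ symbols. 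Since $k \leq K$ is bounded, the total extra cost is $O(1)$, and by the induction hypothesis applied to each $t_i$ we get $\size{\revder{f(\seq t)}} = O(1) + \sum_{i=1}^k O(\size{t_i}) = O(\size{f(\seq t)})$.

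I do not anticipate any obstacle: the transformation is syntax-directed and locally size-preserving up to the constant $K$. The one subtle point is remembering that boundedness of $K$ is essential — without a bound on the arity of function symbols, the function-symbol clause would blow up linearly in $k$, and the overall bound would degrade. Since the paper's standing assumption is that $\mathcal F$ is finite, this subtlety is harmless.
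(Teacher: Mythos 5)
Your proof is correct and matches the paper's intent exactly: the paper's own proof is the one-line ``By just inspecting the definition in Table \ref{table:reverse}'', and your structural induction is simply that inspection carried out in full, including the one point that genuinely matters (the function-symbol clause costs $O(k)$ with $k$ bounded because $\mathcal F$ is finite). Nothing to add.
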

\begin{proof}
By just inspecting the definition in Table \ref{table:reverse}.
\end{proof}

\noindent\textsc{Lemma~\ref{lemma:Commutation}.}
{\it	Let $t$ be a term of $\Terms[\mathcal F]$. Then:
	\begin{enumerate}
		\item if $t\equiv t'$, then $\revder{t}\equiv\revder{t'}$,
		\item if $t\red[\iota]t'$, for $\iota$ any reduction step in \reftab{reductions}, then $\revder{t}\reds[O(1)]{X}\revder{t'}$, where $X=\{\iota\}$ for any $\iota$ but  $\eqref{red:plus}, \eqref{red:times}$, in these latter cases $X=\{\iota, \eqref{red:lambda}, \eqref{red:subst}, \eqref{red:gc}\}$.		
	\end{enumerate}
}
\begin{proof}
Item 1 is immediate by case inspection. Concerning item 2, one proves by induction on $t$ that:
\begin{itemize}
\item[($\bullet$)] for every term $u$ of $\Terms[\mathcal F]$, $\revder{t\isub{u}{x}}=\revder{t}\isub{\revder u}{x}$.
\end{itemize}
Then, suppose $t\red[\iota]t'$, notice that $\iota\neq \eqref{red:linear} $ since $t$ has no $\dual$-variable. If $\iota=\eqref{red:subst}$, then the statement follows from ($\bullet$). All other cases are immediate. We detail just the case of an instance of a numerical rule $\eqref{red:plus}, \eqref{red:times}$, e.g.~ $\num r+\num q\red\num{r+q}$. We have that $\revder{\num r+\num q}$ is:
\begin{multline*}
	\pair{x+y,\lambda a. (x'a+y'a)}\esub{\pair{x,x'}}{\pair{\num r, \lambda a.\num 0}}\esub{\pair{y,y'}}{\pair{\num q,\lambda a.\num 0}}\\
	\reds[O(1)]{\eqref{red:lambda}\eqref{red:subst}\eqref{red:gc}}{}\pair{\num r+\num q,\lambda a. (\num 0+\num 0)}\reds[2]{\eqref{red:plus}}\pair{\num{r+q},\lambda a. \num 0}
\end{multline*}
\end{proof}

\noindent\textsc{Lemma~\ref{lemma:main}.}
{\it
Let $G$ be a ground term of type $\seq x^{\exptype{\tyR}}\vdash G:\tyR$ with $\seq x=x_1^{\exptype{\tyR}},\dots,x_n^{\exptype{\tyR}}$ the free variables of $G$ and let $\bp{G}=\pair{G_0,\pair{G_1,\ldots,G_n}\alpha}\beta$. 
	For a suitable $J\subseteq \{1,\dots,n\}$, such that if $i\notin J$, then $G_i= \num 0$, we have:
		$$\revder{G}\esub{\seq x^{\exptype{\revder\tyR}}}{\pair{\seq x^{\exptype{\tyR}},\lambda a^\tyR.{\seq x'}^{\exptype{\dual}} a}}\freds[O(\size G)]\equiv
		\pair{G_0,\lambda a.(\sum_{j\in J} x_j'G_{j})\alpha}\beta.$$
\begin{proof}
By induction on $G$. We will freely use structural equivalence where necessary, knowing that Proposition~\ref{prop:postponement} allows us to postpone all structural equivalences to the end without affecting the number of reduction steps. The non-trivial cases are detailed below. Remark that the length of each reduction is $O(\size{\revder{G}})$ which is equal to $O(\size G)$ by Lemma~\ref{lemma:sizeReverse}.

Let $G=F\esub{z}{F'}$ and suppose that
	\begin{align*}
		\bp[\seq x,z,a]{F} &= \pair{F_0,\pair{F_1,\ldots,F_n,H}\alpha}\beta, &
		\bp[\seq x,b]{F'} &= \pair{F_0',\pair{F_1',\ldots,F_n'}\alpha'}\beta'.
	\end{align*}
	We have that $\revder{G}\esub{\seq x}{\pair{\seq x,\lambda a.\seq x'a}}$ is structural equivalent to  :
	\begin{align*}
	&(\revder{F}\esub{\seq x}{\pair{\seq x,\lambda a.\seq x'a}})\esub{z}{(\revder{F'}\esub{\seq x}{\pair{\seq x,\lambda b.\seq x'b}})}
	&
	\\
	&
	\freds[O(\size{F'})]\equiv
	(\revder{F}\esub{\seq x}{\pair{\seq x,\lambda a.\seq x'a}})\esub{z}{\pair{F_0',\lambda b.(\sum_{j\in J'} x_j'F'_{j})\alpha'}\beta'}
	&\text{by IH}
	\\
	&\equiv(\revder{F}\esub{\seq x}{\pair{\seq x,\lambda a.\seq x'a}})\esub{z}{\pair{F_0',\lambda b.(\sum_{j\in J'} x_j'F'_{j})\alpha'}}\beta'&
	\\
	&\red[\eqref{red:etaPair}]\red[\eqref{red:pair}](\revder{F}\esub{\seq x}{\pair{\seq x,\lambda a.\seq x'a}}\esub{z}{\pair{z,z'}})\esub{z}{F_0'}\esub{z'}{\lambda b.(\sum_{j\in J'} x_j'F'_{j})\alpha'}\beta'
	\\
	&\red[\eqref{red:etaLambda}](\revder{F}\esub{\seq x}{\pair{\seq x,\lambda a.\seq x'a}}\esub{z}{\pair{z,\lambda a.z'a}})\esub{z}{F_0'}\esub{z'}{\lambda b.(\sum_{j\in J'} x_j'F'_{j})\alpha'}\beta'
	\end{align*}
	
	Now we can apply the induction hypothesis on $\revder{F}$ and we split in two sub-cases, depending whether the variable $z'$ appears in the tuple associated with  $\revder{F}$ or not. In the first case, we have that the above term reduces by IH to:
	\begin{align*}
	&\freds[O(\size F)]\equiv
	\pair{F_0,\lambda a.(\sum_{j\in J} x_j'F_{j}+z'H)\alpha}\beta\esub{z}{F_0'}\esub{z'}{\lambda b.(\sum_{j\in J'} x_j'F'_{j})\alpha'}\beta'
	&\text{by IH}
	\\
	&\red[\eqref{red:subst}]\red[\eqref{red:gc}]\pair{F_0,\lambda a.(\sum_{j\in J} x_j'F_{j}+(\lambda b.(\sum_{j\in J'} x_j'F'_{j})\alpha')H)\alpha}\beta\esub{z}{F_0'}\beta'\\
	&\red[\eqref{red:lambda}]\pair{F_0,\lambda a.(\sum_{j\in J} x_j'F_{j}+(\sum_{j\in J'} x_j'F'_{j})\alpha'\esub{b}H)\alpha}\beta\esub{z}{F_0'}\beta'\\
	&
	\reds[\#J\cap\#J']{\eqref{red:linear}}
	\pair{F_0,\lambda a.(\sum_{j\in J\cup J'} x_j'(F_{j}\oplus F'_{j}))\alpha'\esub{b}H\alpha}\beta\esub{z}{F_0'}\beta'
	\end{align*}
	Notice that the above instance of~\eqref{red:subst} replaces exactly one occurrence of $z'$. 
The last term satisfies the statement of the lemma.

	In the case the variable $z'$ does not appear in the tuple associated with  $\revder{F}$, this means that $H = \num 0$. We then have: 
	\begin{align*}
	&\reds[O(\size F)]{\beta\eta\ell}\equiv
	\pair{F_0,\lambda a.(\sum_{j\in J} x_j'F_{j})\alpha}\beta\esub{z}{F_0'}\esub{z'}{\lambda b.(\sum_{j\in J'} x_j'F'_{j})\alpha'}\beta'
	&\text{by IH}
	\\
	&\red\pair{F_0,\lambda a.(\sum_{j\in J} x_j'F_{j})\alpha}\beta\esub{z}{F_0'}\beta'	
	&\text{by~\eqref{red:gc}}
	\end{align*}	
The last term satisfies the statement of the lemma.

Let $G=f(x_{i_1},\ldots,x_{i_k})$ for a suitable subset $x_{i_1},\ldots,x_{i_k}\subseteq\seq x$. In this case we have: 
	\begin{multline*}
		\revder{G}\esub{\seq x}{\pair{\seq x,\lambda a.\seq x'a}}=\unpair{\pair{f(\seq y)\ ,\ \lambda a.\sum_{j=1}^k y_j'\left({\partial_j f}(\seq y)\cdot a\right)}}
		{\seq y}{\seq y'}{\seq x}\esub{\seq x}{\pair{\seq x,\lambda a.\seq x'a}}\\
		\reds[O(n)]{\eqref{red:pair}\eqref{red:subst}} 
		\pair{f(\seq y)\ ,\ \lambda a.\sum_{j=1}^k y_j'\left({\partial_j f}(\seq y)\cdot a\right)}
		\esub{\seq y}{\seq x}\esub{\seq y'}{\lambda a.\seq x'a}\\
		\equiv
		\pair{f(\seq y)\ ,\ \lambda a.\sum_{j=1}^k y_j'\left({\partial_j f}(\seq{\overline y})\cdot a\right)}
		\esub{\seq y}{\seq x}\esub{\seq{\overline y}}{\seq x}\esub{\seq y'}{\lambda a.\seq x'a}\\
		\reds[O(k)]{\eqref{red:lambda}\eqref{red:subst}} 
		\reds[O(n)]{\eqref{red:gc}} 
		\pair{f(x_{i_1},\ldots,x_{i_k})\ ,\ \lambda a.\sum_{j=1}^k x_{i_j}'\left(\partial_j f(x_{i_1},\ldots,x_{i_k})\cdot a\right)}
	\end{multline*}
	\item Let $G=F+F'$ and  suppose that
	\begin{align*}
		\bp[\seq x,a]{F} &= \pair{F_0,\pair{F_1,\ldots,F_n}\alpha}\beta, &
		\bp[\seq x,a]{F'} &= \pair{F_0',\pair{F_1',\ldots,F_n'}\alpha'}\beta'.
	\end{align*}
		We have that $\revder{G}\esub{\seq x}{\pair{\seq x,\lambda a.\seq x'a}}$ is equal to:
	\begin{align*}
		&\pair{y_1+y_2,\lambda a. (y_1'a+y_2'a)}\esub{\pair{y_1,y_1'}}{\revder F}\esub{\pair{y_2,y_2'}}{\revder{F'}}\esub{\seq x}{\pair{\seq x,\lambda a.\seq x'a}}\\
		&
		\reds[O(\size G)]{\beta\eta\ell}\equiv
		\pair{F_0+F_0',\lambda a. (y_1'a+y_2'a)}\esub{y_1'}{
		\lambda a.(\sum_{j\in J} x_j'F_{j})\alpha}\beta
		\esub{y_2'}{
		\lambda a.(\sum_{j'\in J'} x_{j'}'F'_{j'})\alpha'}\beta'
		&&\text{by IH}\\
		&
		\reds[O(1)]{\eqref{red:subst}\eqref{red:lambda}\eqref{red:gc}}
		\pair{F_0+F_0',\lambda a. ((\sum_{j\in J} x_j'F_{j})\alpha+(\sum_{j'\in J'} x_{j'}'F'_{j'})\alpha')}\beta\beta'	
		\\
		&\equiv\pair{F_0+F_0',\lambda a. ((\sum_{j\in J} x_j'F_{j})+(\sum_{j'\in J'} x_{j'}'F'_{j'}))\alpha\alpha'}\beta\beta'	
		&&
		\\
		&\reds[\#J\cap\#J']{\eqref{red:linear}}\pair{F_0+F_0',\lambda a. (\sum_{j\in J\cup J'} x_j'F_{j}\oplus F'_{j'})\alpha\alpha'}\beta\beta'
	\end{align*}
	Notice that all the above instances of~\eqref{red:subst} replaces exactly one occurrence of a variable. 
	The last term satisfies the statement of the lemma.
\end{proof}

\noindent\textsc{Theorem~\ref{th:main theorem}.}
{\it Let $t$ be a term of $\Terms[\mathcal F]$ of type $\seq x^{\exptype{\tyR}}\vdash t: \tyR$, with $\seq x=x_1^{\exptype{\tyR}},\dots,x_n^{\exptype{\tyR}}$ the free variables of $t$. For any ground term $G$ such that $t\, (\red[\beta]\cup\equiv)^*\, G$ in $m$ $\beta$-steps, we have, for a suitable $J\subseteq\{1,\dots,n\}$:
\[
	\revder t\esub{\seq x}{\pair{\seq x,\lambda a.\seq x'a}} \freds[O(m+\size G)]\,\equiv \pair{G_0,\lambda a.(\sum_{j\in J} x_j'G_{j})\alpha}\beta
\]
\noindent where $\bp{G}=\pair{G_0,\pair{G_1,\ldots,G_n}\alpha}\beta$ and $\forall i\notin J$, $G_i=\num 0$. 
}
\begin{proof}
Let us suppose $t\mathrel{(\red[\beta]\cup\equiv)^\ast}G$ in $m$ steps. By Lemma~\ref{lemma:Commutation}, we have that: 
$$
	\revder t\esub{\seq x}{\pair{\seq x,\lambda a.\seq x'a}}\mathrel{(\red[\beta]\cup\equiv)^\ast} \revder G\esub{\seq x}{\pair{\seq x,\lambda a.\seq x'a}}
$$ 
in $O(m)$ steps. By \reflemma{main}, we have that: 
$$
	\revder G\esub{\seq x}{\pair{\seq x,\lambda a.\seq x'a}}\freds[O(\size G))]\equiv\pair{G_0,\lambda a.(\sum_{j\in J} x_j'G_{j})\alpha}\beta
$$ 
this latter term satisfying the statement of the theorem. We compose the reductions and apply \refprop{postponement} to conclude.
\end{proof}

\noindent\textsc{Corollary~\ref{cor:gradient}.}
{\it Let $t$ be a term of $\Terms[\mathcal F]$ of type $\seq x^{\exptype{\tyR}}\vdash t: \tyR$, with $\seq x=x_1^{\exptype{\tyR}},\dots,x_n^{\exptype{\tyR}}$ the free variables of $t$.  Let  $m$ be the number of $\beta$-steps needed to reduce $t$ to a ground term $G$.
For any vector $\seq r\in\mathbb R^n$, let $\nabla t(\seq r)=\pair{\num{g_1},\dots,\num{g_n}}$. Then, for a suitable $J\subseteq\{1,\dots,n\}$, with $i\notin J$, $g_i=0$, we have:
\[
	\unpair{z'\num 1}z{z'}{\revder t\esub{\seq x}{\pair{\seq x,\lambda a.\seq x'a}}}\esub{\seq x}{\num{\seq r}}\freds[O(m+\size G)]\,\equiv\sum_{j\in J}x_j'\num{g_j}.
\]
}
\begin{proof}
Apply Theorem \ref{th:main theorem}  to $\revder t\esub{\seq x}{\pair{\seq x,\lambda a.\seq x'a}}$ and then Proposition \ref{prop:backprop_sound}. 
\end{proof}

\subsection{Proofs of Section~\ref{sect:example}}

\begin{lemma}\label{lemma:example-backprop}
  Let $r \in \Real$ and $u,u' \in \Terms[\mathcal F]$.
  $\revder{L}\revder{\num{r}}\pair{u,u'}$ reduces to
  $$\unpair{\unpair{\pair{\sigma(z_1.\num{r} + z_2.u), \lambda a.(z_1'(\num{r}.\sigma'.a)  +  z_2'(u.\sigma'.a)  +  u'(z_2.\sigma'.a))}}{z_1}{z_1'}{\epsilon}}{z_2}{z_2'}{\rho}$$
  where $\sigma'=  \partial_1 \sigma(z_1.\num{r} + z_2.u)$.
\end{lemma}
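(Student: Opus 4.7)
The proof is a direct computation: unfold the program transformation $\revderSymbol$ on every subterm of $L$, apply the two top-level $\beta$-redexes, destructure the incoming pairs, and collect the contributions. My plan is to proceed in three stages.

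First, I would expand $\revder{L}$. Since $L = \lambda x.\lambda h.\sigma(\epsilon\cdot x + \rho\cdot h)$, the rules of Table~\ref{table:reverseTerms} give $\revder{L} = \lambda x.\lambda h.\revder{\sigma(\epsilon\cdot x + \rho\cdot h)}$, and then unfolding the cases for $\revder{f(\seq t)}$ (applied successively to $\sigma$, then to $\cdot$ inside each summand) and for $\revder{t+u}$ yields an explicit term whose body is a nested $\unpair{\cdots}{\cdots}{\cdots}{\cdots}$ with explicit substitutions $\esub{\pair{z_1,z_1'}}{\epsilon}$, $\esub{\pair{z_2,z_2'}}{\rho}$, $\esub{\pair{w_1,w_1'}}{x}$, $\esub{\pair{w_2,w_2'}}{h}$, together with a backpropagator of the form $\lambda a.\mu'(\partial_1\sigma(\mu)\cdot a)$ wrapping the inner sum of the four linear contributions from the product and sum rules. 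This is purely a mechanical substitution into the schemes of Table~\ref{table:reverseTerms}.

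Second, I would apply this to $\revder{\num r} = \pair{\num r,\lambda a.\num 0}$ and then to $\pair{u,u'}$ by two $\beta$-reductions \eqref{red:lambda}, place those pairs inside the two explicit substitutions $\esub{\pair{w_1,w_1'}}{(-)}$ and $\esub{\pair{w_2,w_2'}}{(-)}$, and apply \eqref{red:pair} to get $w_1\mapsto\num r$, $w_1'\mapsto\lambda a.\num 0$, $w_2\mapsto u$, $w_2'\mapsto u'$. The numeric substitutions on $w_1$ and the value substitution on $w_1'$ are eliminated by \eqref{red:subst}, and the resulting $(\lambda a.\num 0)(w_1\cdot a)$ inside the backpropagator reduces by \eqref{red:lambda} followed by \eqref{red:gc} to $\num 0$. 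For $u$ and $u'$, which are arbitrary terms rather than values, I would simply keep them inside explicit substitutions (or inline them via structural equivalence when convenient), which does not affect the shape of the result.

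Third, I would collect the resulting sum in the inner backpropagator: after these reductions, the body of the outer lambda becomes $\mu'(\partial_1\sigma(\mu)\cdot a)$ where $\mu = z_1\cdot\num r + z_2\cdot u$ and $\mu'$ is the linear function $\lambda b.(z_1'(\num r\cdot b) + \num 0) + (z_2'(u\cdot b) + u'(z_2\cdot b))$. One more \eqref{red:lambda} pushes $\partial_1\sigma(\mu)\cdot a$ into $b$, giving the four summands of the statement; the $+\num 0$ contribution arising from the trivial backpropagator of $\revder{\num r}$ is discarded using neutrality of $\num 0$ for $+$ (the only non-evaluation-rule simplification we need, and the same informal convention used throughout the paper, \emph{cf.}~the $\oplus$ of Def.~\ref{def:bp}). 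The two remaining outer $\unpair{\cdots}{z_1}{z_1'}{\epsilon}$ and $\unpair{\cdots}{z_2}{z_2'}{\rho}$ survive because $\epsilon$ and $\rho$ are free variables and cannot be destructured further, matching the shape of the right-hand side.

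The only real obstacle is the bookkeeping: the $\alpha$-variable $a$ occurs at three different binding levels ($\sigma$-backpropagator, sum-backpropagator, and the trivial $\lambda a.\num 0$ inside $\revder{\num r}$), so the computation must carefully rename bound variables before each $\beta$-step. Everything else is mechanical and justified by a single instance each of \eqref{red:lambda}, \eqref{red:pair}, \eqref{red:subst}, \eqref{red:gc} applied in the order dictated by the unfolding above; no use of linear factoring \eqref{red:linear} is needed at this stage, as the linear variables $z_1',z_2',u'$ are still pairwise distinct in the resulting sum.
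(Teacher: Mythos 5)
Your proposal is correct and follows essentially the same route as the paper's own proof, which likewise just unfolds $\revderSymbol$ on the subterms $\epsilon\cdot x$, $\rho\cdot h$, their sum and $\sigma(-)$ according to \reftab{reverseTerms}, and then leaves the remaining $\beta$/pair-destructuring reduction as a mechanical computation. If anything you are more explicit than the paper about the residual $+\,\num{\seq 0}$ summand coming from $\revder{\num r}$, which indeed can only be discarded up to neutrality of $\num 0$ (a convention the paper applies silently at this point).
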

\begin{proof}
  The proof follows from the following observations:
  \begin{align*}
    \revder{\epsilon.x} & = \unpair{\unpair{\pair{z_1.e, \lambda a(z_1'(e.a) + e'(z_1.a))}}{z_1}{z_1'}{\epsilon}}{e}{e'}{x} & \\
    \revder{\rho.h} & = \unpair{\unpair{\pair{z_2.e, \lambda a(z_2'(e.a) + e'(z_2.a))}}{z_2}{z_2'}{\rho}}{e}{e'}{h} &\\
    \revder{\epsilon.x + {\rho}.h} & = \unpair{\unpair{\pair{\gamma_1 + \gamma_2, \lambda a.(\gamma_1' a + \gamma_2' a)}}{\gamma_1}{\gamma_1'}{\revder{\epsilon.x}}}{\gamma_2}{\gamma_2'}{\revder{\rho.h}} &\\
    \revder{L} & = \lambda x.\lambda h.\unpair{\pair{\sigma(\mu), \lambda a.\mu' ({\partial_1 \sigma}(\mu).a)}}{\mu}{\mu'}{\revder{\epsilon.x+\rho.h}}&
  \end{align*}
\end{proof}

\begin{lemma}\label{lemma:recurrent-gradient}
  Let $a_1,\dots, a_n : \tyR$, $e,r\in \Real$ and $1\leq i \leq n$. We pose
  \begin{align*}
	\alpha&  = \esub{\epsilon}{\pair{\num{e}, \lambda a.\epsilon'a}}\esub{\rho}{\pair{\num{r}, \lambda \rho'a}}\\
	l_0 & := []\\
	l_{i+1} & := [\num{a_{n-i}}, \dots, \num{a_n}] 
  \end{align*}
  Then
  $\revder{Nl_i}\alpha$ reduces to $\pair{u_i, u_i'}$ where:
  \begin{align*}
    \pair{u_0,u_0'} & = \pair{\num{0},\lambda a.\num{0}} \\
    \pair{u_{i+1}, u_{i+1}'} & = \pair{\sigma(\num{e}.\num{a_{i+1}} + \num{r}.u_i), \lambda a.(\epsilon'(\num{a_{i+1}}.\sigma'_{i+1}.a)  +  \rho'(u_i.\sigma'_{i+1}.a) + u_i'(\num{r}.\sigma'_{i+1}.a))}
  \end{align*}
  with $\sigma'_{i+1}=  {\partial_1 \sigma}(\num{e}.\num{a_{i+1}} + \num{r}.u_i)$.
\end{lemma}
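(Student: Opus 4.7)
The plan is to proceed by induction on $i$, combining the compositionality of $\revderSymbol$ from \reflemma{Commutation} with the explicit unfolding of $\revder L$ given by \reflemma{example-backprop}. For the base case $i=0$, the Church-encoded empty list $l_0 = \lambda f.\lambda x.x$ reduces as $Nl_0 = l_0\,L\,\num 0 \freds \num 0$. By \reflemma{Commutation} and the defining clause $\revder{\num 0} = \pair{\num 0, \lambda a.\num 0}$ of \reftab{reverseTerms}, whose result is closed and thus unaffected by $\alpha$, we obtain $\revder{Nl_0}\alpha \freds \pair{\num 0, \lambda a.\num 0} = \pair{u_0, u_0'}$.

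For the inductive step, the key observation is that the Church encoding of a cons-list satisfies $l_{i+1}\,L\,\num 0 \freds L\,\num{a}\,(l_i\,L\,\num 0)$, where $\num a$ is the newly prepended head (the exact index, $a_{n-i}$ or $a_{i+1}$, being a matter of convention in the statement of the lemma). Hence $Nl_{i+1} \freds L\,\num{a}\,(Nl_i)$. By \reflemma{Commutation}, the compositional clauses of $\revderSymbol$ on application (\reftab{reverseTerms}) and the structural equivalence rules \eqref{eq:app1}--\eqref{eq:dup} used to propagate $\alpha$ through the subterms, we get
\[
	\revder{Nl_{i+1}}\alpha \;\freds\; \revder L\,\revder{\num{a}}\,(\revder{Nl_i}\alpha).
\]
The induction hypothesis gives $\revder{Nl_i}\alpha \freds \pair{u_i, u_i'}$, so we are reduced to evaluating $\revder L\,\revder{\num a}\,\pair{u_i, u_i'}$, which is exactly the situation of \reflemma{example-backprop} with $\num r := \num a$, $u := u_i$ and $u' := u_i'$.

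The remaining work is to consume the substitutions $\esub{\epsilon}{\pair{\num e,\lambda a.\epsilon'a}}\esub{\rho}{\pair{\num r,\lambda a.\rho'a}}$ inherited from $\alpha$. The two outermost unpairings in the conclusion of \reflemma{example-backprop} bind $z_1 := \num e$, $z_1' := \lambda a.\epsilon'a$, $z_2 := \num r$, $z_2' := \lambda a.\rho'a$; a handful of $\beta$-steps of kind~\eqref{red:lambda},~\eqref{red:subst},~\eqref{red:gc} then rewrites each subterm $z_1'(\cdots)$ into $\epsilon'(\cdots)$ and each $z_2'(\cdots)$ into $\rho'(\cdots)$, producing exactly $\pair{u_{i+1}, u_{i+1}'}$. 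No serious obstacle is expected: the only points of care are threading the $\alpha$-substitutions through the continuation produced by \reflemma{example-backprop} and checking that the linearity of $\epsilon'$ and $\rho'$ is preserved throughout, so that instances of linear factoring~\eqref{red:linear} are never fired prematurely.
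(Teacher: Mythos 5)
Your proof is correct and follows essentially the same route as the paper's: unfold the Church-encoded list so that $\revder{Nl_i}$ becomes an iterated application of $\revder{L}$ ending in $\pair{\num 0,\lambda a.\num 0}$, then discharge each layer with Lemma~\ref{lemma:example-backprop}. Your explicit induction and your final step consuming the $\alpha$-substitutions (turning $z_1'$, $z_2'$ into $\epsilon'$, $\rho'$) merely spell out details the paper's one-line proof leaves implicit, and your remark on the index mismatch between $a_{n-i}$ and $a_{i+1}$ correctly identifies a notational slip in the statement rather than a gap in the argument.
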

\begin{proof}
We have:
$$\revder{N[\num{a_{n-i}}; \dots; \num{a_n}]} \reds{} 
\revder{L}\revder{\num{a_{n-i}}}(\revder{L}\revder{\num{a_{n-i+1}}}( \dots (\revder{L}\revder{\num{a_n}}\pair{\num{0},\lambda a.\num{0}}) \dots))$$
Then the result follows by applying recursively Lemma \ref{lemma:example-backprop}.
\end{proof}

\noindent\textsc{Proposition~\ref{prop:example-grad}.}
{We have $\nabla (Nl)(e, r) = \pair{g^n_{\epsilon}, g^n_{\rho}}$ where $g^n_\epsilon$ and $g^n_\rho$ are given by the following recurrent equations:
\begin{align*}
  g^0_{\epsilon} & = \num{0} & g^{i+1}_{\epsilon} & = \sigma_{i+1}'.(\num{a_{i+1}} + \num{r}.g^{i}_{\epsilon})& \sigma'_{i+1} & :=  {\partial_1 \sigma}(\num{e}.\num{a_{i+1}} + \num{r}.u_i) \\
  g^0_{\rho} &= \num{0} & g^{i+1}_{\rho} &=  \sigma_{i+1}'.(u_{i+1} + \num{r}.g^{i}_{\rho}) & & \\
  u_0 &= \num{0} & u_{i+1} &= \sigma(\num{e}.\num{a_{i+1}} + \num{r}.u_{i}) & &
\end{align*}
}
\begin{proof}
By Corollary \ref{cor:gradient}, this amounts to computing $D = \epsilon'.g_\epsilon + \rho'.g_\rho$ such that $$
\unpair{z'\num 1}{z}{z'}{\revder{Nl}\esub{\epsilon}{\pair{\epsilon,\lambda a.\epsilon'a}}\esub{\rho}{\pair{\rho,\lambda a.\rho'a}}}\esub{\epsilon}{\num{e}}\esub{\rho}{\num{r}} \reds{} D
$$
We get the result by direct application of Lemma \ref{lemma:recurrent-gradient}.
\end{proof}
\end{document}